\documentclass[%
 reprint,
superscriptaddress,
 amsmath,amssymb,
 aps,
]{revtex4-2}

\usepackage{graphicx}% Include figure files
\usepackage{dcolumn}% Align table columns on decimal point
\usepackage{bm}% bold math
\usepackage{hyperref}% add hypertext capabilities
\usepackage{amsmath,mathtools,amsthm}
\usepackage{cleveref}
\usepackage{xcolor}
\usepackage{braket}
\usepackage{needspace}
\newtheorem{theorem}{Theorem}
\newtheorem{proposition}{Proposition}
\newtheorem{corollary}{Corollary}[theorem]

\theoremstyle{definition}

\newtheorem{observation}{Observation}
\newtheorem{definition}{Definition}

\begin{document}

\preprint{APS/123-QED}

\title{Principles of Quantum Optimization for Constrained Problems}% Force line breaks with \\
% \thanks{A footnote to the article title}%

\author{Einar Gabbassov}
\email{egabbass@uwaterloo.ca}
\affiliation{Department of Applied Mathematics, University of Waterloo, Waterloo, Ontario, N2L 3G1, Canada}
\affiliation{Institute for Quantum Computing, University of Waterloo, Waterloo, Ontario, N2L 3G1, Canada}
\affiliation{Perimeter Institute for Theoretical Physics, Waterloo, Ontario, N2L 2Y5, Canada}

\author{Gurpahul Singh}
\affiliation{Perimeter Institute for Theoretical Physics, Waterloo, Ontario, N2L 2Y5, Canada}
\affiliation{Department of Physics and Astronomy, University of Waterloo, Waterloo, Ontario N2L 3G1, Canada}

\author{Achim Kempf}
\affiliation{Department of Applied Mathematics, University of Waterloo, Waterloo, Ontario, N2L 3G1, Canada}
\affiliation{Institute for Quantum Computing, University of Waterloo, Waterloo, Ontario, N2L 3G1, Canada}
\affiliation{Perimeter Institute for Theoretical Physics, Waterloo, Ontario, N2L 2Y5, Canada}
\affiliation{Department of Physics and Astronomy, University of Waterloo, Waterloo, Ontario N2L 3G1, Canada}

\begin{abstract}
Constrained combinatorial optimization underlies many industrial and technological decision problems. We develop a spectral theory that unifies many quantum optimization algorithms. We show that computational slowdown is driven by entanglement restructuring: the creation, redistribution, and destruction of entanglement during system evolution. The severity of the slowdown depends on how much entanglement must be changed. We show that algebraic properties of constraints induce such restructuring, and that constraint-aware dynamics reduce the associated slowdown by avoiding unnecessary restructuring. This framework explains why constraint-aware quantum methods can outperform generic penalty-based approaches. The theory connects constrained optimization, computational complexity, entanglement dynamics, and Hamiltonian spectral structure across continuous-time and circuit-based quantum optimization paradigms.
\end{abstract}

%\keywords{Suggested keywords}%Use showkeys class option if keyword
                              %display desired
\maketitle

%\tableofcontents

\section{\label{sec:intro}Introduction}
Linear programming (LP) and mixed integer programming (MIP) are foundational pillars of modern mathematical optimization \cite{bixby2012brief,strayer2012linear,Vanderbei2020LP,conforti2014integer}. They provide the standard mathematical language for constrained decision-making and underpin large parts of modern socio-technical infrastructure, including supply chains, transportation and logistics, energy systems, communication networks, and large-scale resource allocation. LP establishes a mathematically rigorous theory of constrained optimization: it studies optimality, the convex geometry induced by constraints, primal-dual formulations, optimality conditions, exact efficient algorithms, and computational complexity. MIP extends this framework to discrete decision variables, in which integrality constraints impose combinatorial structure and yield nonconvex solution sets.

By contrast, the quantum optimization literature has predominantly focused on a narrow subclass of MIP: unconstrained binary problems, most notably quadratic unconstrained binary optimization (QUBO) and equivalent Ising models. Standard benchmarks, such as MaxCut, naturally align with this problem class. This focus is largely driven by representational convenience: unconstrained binary problems map directly to the Hamiltonians of interacting spins \cite{lucas2014ising}. As a result, constrained problems are often first reformulated as QUBO problems, in which constraints are no longer treated as structural features of the problem but are instead enforced via energetic penalty terms added to the objective function \cite{gabbassov2025lagrangian,lucas2014ising,glover2018tutorial,de2024optimized}.

Quantum optimization research has only recently begun to develop algorithms that incorporate constraints directly, thereby capitalizing on their algebraic structure for more efficient search.  Such algorithms can typically distinguish candidate solutions both by objective value and by constraint satisfaction, thereby targeting problem classes closer to practical LP and MIP formulations.

Notable directions for incorporating constraints into quantum algorithms include feasibility-preserving mixer Hamiltonians, constraint-aware state preparation, coherent constraint checking using oracle subroutines, and repeated-measurement schemes that enable penalty-free exploration of feasible subspaces. For example, \cite{hadfield2019quantum,wang2020xy,cook2020quantum} introduced Hamiltonians, also called mixer operators, that move amplitude only within the feasible subspace for certain combinatorial optimization problems. \cite{fuchs2022constraint} generalizes the construction of feasibility-preserving mixers and discusses efficient implementations with minimal two-qubit gate counts, while \cite{hao2026constraint} introduces adaptive quantum routines that dynamically select feasibility-preserving mixers. In \cite{bartschi2020grover}, an equal superposition over all feasible solutions is prepared and then combined with a Grover-like mixer for amplitude amplification. In \cite{bucher2025penalty, bucher2025efficient}, the authors combine quantum phase estimation (QPE) with the quantum approximate optimization algorithm (QAOA): QPE is used to coherently detect constraint violations and, conditioned on the result, apply a corresponding QAOA cost layer. In the measurement-based approaches, \cite{herman2023constrained} repeatedly projects the evolving state onto the feasible subspace, thereby realizing a feasibility-preserving mixer. Each constraint is coherently evaluated into an auxiliary qubit, whose measurement implements the feasible-subspace projector; repeating this procedure suppresses transitions out of the feasible subspace through quantum Zeno dynamics. On the other hand, \cite{pawlak2023quantum,fuchs2024lx} uses stabilizer measurements and feedback to keep the dynamics within a feasible subspace. In \cite{pawlak2023quantum}, constraint-satisfying states are stabilized by syndrome measurements; detected violations are then corrected by feedback operations that recover the state in the feasible subspace.

As quantum optimization shifts toward such constrained problems, developing a general and physically principled understanding of how constraints affect quantum dynamics is essential. From a quantum perspective, constraints are not just modelling details: they can change the problem’s complexity class, reshape the geometry of the feasible Hilbert space, and modify the Hamiltonian spectral structure. Most importantly, as we will see, constraints determine how entanglement must be created, redistributed, or destroyed during the computation. Each of these effects can directly influence the quantum computational complexity and resource requirements.

In this work, we develop an implementation-independent theory of constrained quantum optimization. Specifically, our theory encompasses the aforementioned algorithms and more generally quantum optimization algorithms that are inspired by adiabatic \cite{farhiquantum,farhi2001quantum} or deliberately diabatic evolution \cite{feinstein2025robustness,muthukrishnan2016tunneling,fry2021locally}. This includes continuous-time adiabatic protocols, their discretized/digital gate-based counterparts, variational and projective-measurement approaches with parameterized quantum circuits, and quantum annealing heuristics.

Central to this theory is our rigorous demonstration that entanglement restructuring (the creation, redistribution, and destruction of entanglement among subsystems) during system evolution is the principal driver of computational slowdown. We show that the algebraic properties of problem constraints induce this process and that even a modest amount of entanglement can impose a severe computational cost when its structure must change rapidly.

This perspective changes how one should think about quantum optimization for constrained problems. We will show that the algebraic structure of constraints induces special sequences of narrow or closed spectral gaps. These gaps, in turn, allow constraint-aware dynamics to avoid unnecessary entanglement restructuring. Avoiding such restructuring reduces computational bottlenecks and enables more efficient algorithms.

Building on this insight, our theory provides design principles to reduce runtime bottlenecks by controlling entanglement dynamics arising from problem constraints.

The remainder of this work is organized as follows:
\begin{itemize}
    \item \Cref{sec:fund_classical_opt} introduces the principles of linear and integer programming. It explains how classical methods exploit the geometry and combinatorial structure induced by constraints.

    \item \Cref{sec:princip_quant_opt} introduces a common Hamiltonian framework for constrained quantum optimization. It distinguishes penalty-based methods, penalty-free methods, and relaxed mixers according to how they enforce feasibility and explore the search space.

    \item \Cref{sec:entanglement_restructuring_and_slowdown} relates computational slowdown to entanglement restructuring. It explains why the amount of entanglement or the spectral gap alone does not identify the computational difficulty.

    \item \Cref{sec:case_study} illustrates how entanglement restructuring informs quantum algorithm design. It shows that penalty terms can induce severe slowdown, whereas an overly restrictive feasibility-preserving mixer can make the optimal solution dynamically inaccessible. A richer mixer restores accessibility while keeping restructuring minimal, thereby enabling faster computation.

    \item \Cref{sec:entanglement_dynamics} develops the general theory of entanglement dynamics. It establishes conditions under which approaching energy levels exchange their eigenvectors and shows how successive eigenvector swaps produce entanglement restructuring.

    \item \Cref{sec:spectral_gaps} applies this theory to constrained quantum optimization. It shows that constraints induce sequences of exact or narrowly avoided level crossings, proves that fast evolution can jump the narrow crossings while preserving the required eigenvector, and explains why an exact global gap closing need not be a computational bottleneck.

    \item \Cref{sec:spectral_duality} establishes a spectral duality between penalty-free and penalty-based methods. In penalty-free methods, the relevant energy level ascends toward an excited eigenstate of the problem Hamiltonian; in penalty-based methods, it descends toward the ground energy level.

    \item \Cref{sec:state_transitions} derives the effective state transitions induced by the two approaches. Penalty-based transitions are weighted by constraint-violation penalties, whereas penalty-free transitions are weighted by the objective values of infeasible intermediate states.

    \item Finally, \Cref{sec:discussion} summarizes the resulting design principles and outlines directions for developing constraint-aware quantum algorithms.
\end{itemize}

\section{\label{sec:fund_classical_opt}Principles of Classical Optimization as Constraint-Aware Search}
Classical optimization owes much of its theoretical and practical success to treating constraints as central mathematical objects. The theories of linear and integer programming exploit the algebraic, geometric, and combinatorial structure induced by constraints to characterize the feasible search space, provide optimality guarantees, derive useful bounds, and design powerful algorithms. This progress made important problem classes once regarded as computationally difficult, such as linear programs, solvable in polynomial time. Digital computers enabled the resulting algorithms to be applied at scale to problems involving millions of variables and constraints \cite{bixby2012brief}. This section develops this classical viewpoint through a concise introduction to standard methods from operations research and mathematical optimization. The discussion highlights important ideas that merit deeper study and suggests adapting their underlying principles to the development of quantum algorithms for constrained problems.

For readers interested in the remarkable history of the development of mathematical optimization and its role in world history, we provide a brief overview in \Cref{ap:history}.

\subsection{Linear Programs}
Linear programming provides the simplest setting in which constraints, feasible regions, objective functions, optimality, and algorithmic search can be understood geometrically and mathematically. For this reason, linear programs form a natural starting point for understanding constrained optimization before moving to its quantum analog.

We begin by defining the objective function to be minimized. In LP, the objective is a multivariate linear function
\begin{equation}
    f(x) = c^{\mathsf T} x,
\end{equation}
where $c \in \mathbb{R}^n$ is a vector of coefficients. The variables $x \in \mathbb{R}^n_+$ must satisfy $m$ linear inequality or equality constraints
\begin{align*}
    a_{11} \, x_1 + \dots + a_{1n} \, x_n &\geq b_1 \\
    &\vdots \\
    a_{m1} \, x_1 + \dots + a_{mn} \, x_n &\geq b_m.
\end{align*}
The system of linear inequalities above is usually written in a compact form as
\begin{equation}
    A x \geq b,
\end{equation}
where the matrix $A \in \mathbb{R}^{m\times n}$ and the vector $b \in \mathbb{R}^m$ encode $m$ linear inequality constraints.

Then the canonical form of an LP problem can be written as
\begin{equation}\label{eq:canonical_lp}
    \begin{aligned}
        \min \quad & f(x)\\
        \text{subject to}\quad & Ax \ge b \\
        & \ \ x \ge 0
    \end{aligned}
\end{equation}
The feasible region associated with \cref{eq:canonical_lp} is
\begin{equation}\label{eq:feasible_set}
\mathcal{D} = \{\, x \in \mathbb{R}^n : Ax \ge b,\ x \ge 0 \,\}.
\end{equation}
It is worth pointing out two important properties of the set $\mathcal{D}$:
\begin{enumerate}
    \item \textbf{Polyhedral convexity.} $\mathcal{D}$ is a polyhedral convex set \cite[Ch.~3]{strayer2012linear}, meaning that it is enclosed by flat facets lying in hyperplanes where constraints are active, such as $(Ax)_i=b_i$ (and, in canonical form, $x_j=0$). Geometrically, $\mathcal{D}$ is a polyhedron; see \Cref{fig:polyhedron_and_graph} (a). This geometry enables polynomial-time algorithms for LP (e.g., interior-point methods \cite{potra2000interior}), placing LP in the class $\mathrm{P}$. In some cases, the polyhedron may be unbounded.
    \item \textbf{Extreme-point optimality.} If optimal solutions exist, then at least one extreme point of $\mathcal{D}$ is the optimal solution. An extreme point of a convex set is a point that is not ``in between" any two other distinct points of the set. In a polyhedron, extreme points are exactly the ``vertices" (corners). E.g., see \Cref{fig:polyhedron_and_graph} (b).
\end{enumerate}
Although $\mathcal{D}$ contains infinitely many feasible points, the number of extreme points is finite. A crude upper bound \cite[Ch.~5]{strayer2012linear} on the number of extreme points is
\begin{equation}\label{eq:vertex_bound}
\left | \{ \text{extreme points of $\mathcal{D}$}\} \right | \leq \binom{m+n}{n}.
\end{equation}
This bound suggests that we might want to consider only finitely many candidates at the vertices of the polyhedron.

\begin{figure}[t]
    \centering
    \includegraphics[scale=0.066]{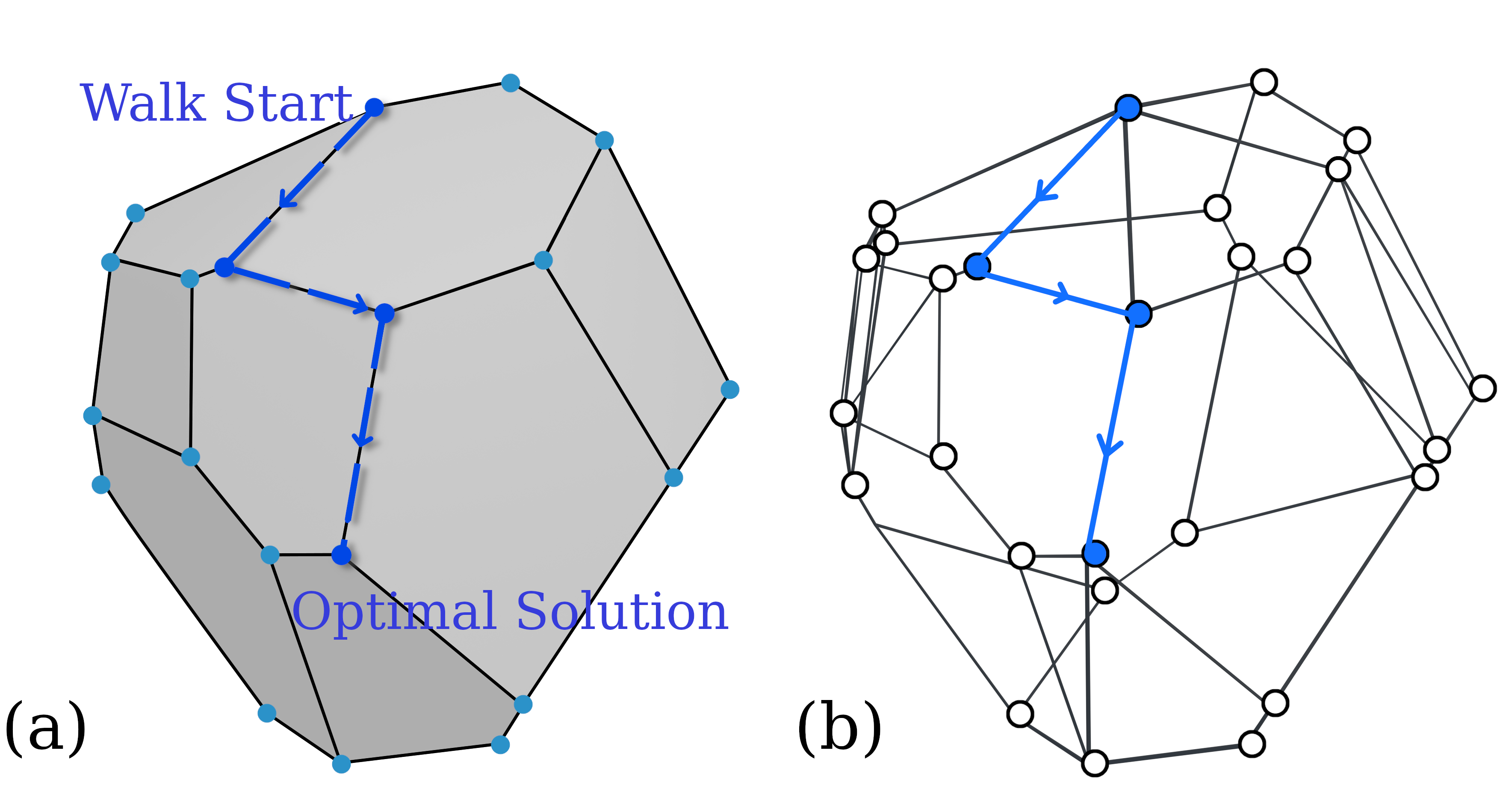}
    \caption{(a) A three-dimensional bounded polyhedron $\mathcal{D}$ defined by linear constraints. The interior and the boundaries of a polyhedron contain all feasible solutions. The optimal solution is situated on one of the vertices. The blue path is a walk on the vertices along the edges performed by the simplex algorithm. In the nondegenerate case, each vertex hop along the path decreases the objective function value. (b) The induced transition graph of the feasible polyhedron: vertices are feasible solutions, and edges represent allowable transitions between adjacent vertices.}
    \label{fig:polyhedron_and_graph}
\end{figure}

\subsection{Solving Linear Programs}\label{sec:solving_lp}
In this section, we give an intuitive geometric picture of how linear programs can be solved. If the LP \cref{eq:canonical_lp} is feasible and bounded, then there exists an optimal solution at an extreme point (vertex) of the polyhedral $\mathcal{D}$. This observation motivates a family of \emph{vertex-walk} methods: starting from a feasible vertex, one repeatedly moves along an edge of $\mathcal{D}$ to an adjacent vertex that improves the objective value, e.g., see \Cref{fig:polyhedron_and_graph}. This is the central idea behind the simplex algorithm. It proceeds by moving from vertex to adjacent vertex along the edges of the feasible polyhedron, with each step monotonically improving the objective value. However, in some cases, there may also be degenerate steps that cycle at the same vertex. Standard tie-breaking rules prevent cycling and ensure that the search continues \cite{strayer2012linear}.

While hopping from one vertex to the adjacent vertex yields an intuitive picture of the simplex's mechanics, it is important to appreciate that at any point of the computation, there does not exist an explicit description of the polyhedron geometry on which the walk is performed; i.e., no explicit knowledge of its vertices, faces and edges. The inner workings of the algorithm are both sophisticated and elegant, and mostly involve ingenious methods for working with systems of linear equations. See \cite{Vanderbei2020LP} for details. The upper bound in \cref{eq:vertex_bound} suggests that the walk can be fairly long. Indeed, the worst-case performance of the simplex algorithm is exponentially many vertex hops. Despite this worst-case behaviour, the simplex algorithm remains a standard method for solving LPs.

\subsection{0--1 Integer Programs}\label{sec:ip_problems}
In previous sections, we introduced LPs and a standard method of solving them. We now consider a more challenging class of problems obtained by imposing integrality constraints. In particular, we focus on binary ($0$--$1$) decision variables. A canonical form is the 0--1 integer program (IP):
\begin{equation}\label{eq:canonical_ip}
    \begin{aligned}
        \text{min } \quad & f(x)\\
        \text{subject to } \quad & Ax \geq b\\
        & \ \ x \in \{0,1\}^n
    \end{aligned}
\end{equation}
Requiring $x$ to be binary makes the feasible set discrete and hence nonconvex. Namely, we get
\begin{equation}\label{eq:feasible_set_binary}
    \mathcal{F} = \left \{ x \in \{0,1\}^n : Ax \geq b \right \}.
\end{equation}
The feasible set $\mathcal{F}$ is now finite, with at most $2^n$ feasible solutions. 
\begin{observation}\label{observation:infeasible_solution}
    It is important to note that in many problems, candidate solutions $z$ that aggressively minimize the objective $f(x)=c^{\mathsf T}x$ do so by violating one or more constraints in $ Axe \geq b$. Thus, the lowest-value solutions are generically expected to lie outside $\mathcal{F}$. In other words, it is often the case that there exist infeasible solutions $z$ such that
    \begin{equation*}
        f(z) \leq f(x) \text{ for all } x \in \mathcal{F}.
    \end{equation*}
    \hfill $\diamond$
\end{observation}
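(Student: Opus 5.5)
The observation is a heuristic statement about ``many problems'' rather than a universal theorem, so I will do two things: give a precise sufficient condition under which it holds, and argue that this condition is typical. The key fact is that the unconstrained minimizer of $f(x)=c^{\mathsf T}x$ over $\{0,1\}^n$ is explicit. It is $z^\star$ with $z^\star_j=1$ if $c_j<0$ and $z^\star_j=0$ if $c_j>0$, with ties broken arbitrarily. Since $\mathcal{F}\subseteq\{0,1\}^n$, we have
\begin{equation*}
f(z^\star)=\sum_j \min(c_j,0)\le f(x)\quad\text{for all } x\in\mathcal{F}.
\end{equation*}
This part is immediate and needs no earlier result.

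Once this inequality is in hand, the observation is equivalent to the statement $z^\star\notin\mathcal{F}$, that is, $Az^\star\not\ge b$. So the substantive step is to characterize when the constraints cut off $z^\star$. I would exhibit the standard mechanism that makes this generic in practical IPs: constraints are imposed precisely to forbid the ``greedy'' choice the objective rewards.

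$\bullet$ In covering or knapsack-type formulations with $c\ge 0$, we get $z^\star=0$. Then $Az^\star=0\not\ge b$ whenever some $b_i>0$, which is exactly when the constraint is nontrivial.

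$\bullet$ With assignment or equality constraints, such as $\sum_j x_j = k$, the minimizer $z^\star$ selects every negative-cost variable. It therefore violates the constraint unless exactly $k$ costs are negative.

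More generally, I would state the following: if some constraint $i$ is not implied by the box $\{0,1\}^n$ alone, and it is violated at the sign-determined vertex $z^\star$, then the claim holds. Moreover, any non-redundant constraint can fail to exclude $z^\star$ only when $c$ lies in a particular orthant-dependent region. This gives the dichotomy: either the constraint set is inactive at the unconstrained optimum, in which case the IP is trivially solved by $z^\star$, or the observation holds.

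The main obstacle is making ``generically'' precise, since the statement is false for some instances (for example, $z^\star\in\mathcal{F}$). I would first try to resolve this by the dichotomy above: an instance fails the observation exactly when the constraints are irrelevant to optimality. Such instances are uninteresting as constrained problems, because the IP reduces to sign inspection. If a quantitative version is wanted, I would instead place a distribution on $c$ (for example, i.i.d. symmetric entries). I would then bound $\Pr[z^\star\in\mathcal{F}]$ by the fraction of the $2^n$ sign patterns (equivalently, of $\{0,1\}^n$) lying in $\mathcal{F}$, namely $|\mathcal{F}|/2^n$, which is small whenever the constraints are restrictive. This last probabilistic step is the part I am least sure matches the authors' intent.
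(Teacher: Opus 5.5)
Your argument is correct, and it takes a different route from the paper, which gives no proof of this observation. The paper states it as a modelling heuristic and supports it only by instances. In the case study ($c>0$, $x_3+x_4=1$), $0000$ is infeasible and cheaper than $x^\star=(0010)$. In the knapsack instance of \Cref{ap:kp}, $(1,1,1,0,1)$, $(1,0,1,1,1)$ and $(1,1,1,1,1)$ all undercut $f(x^\star)=-22$. Your sign-determined minimizer $z^\star$ explains both examples uniformly: it is $0000$ in the first and $\mathbf{1}$ in the second, with $f(\mathbf{1})=-26$. Your dichotomy also makes ``generically'' precise. Either $z^\star\in\mathcal F$, in which case $z^\star$ already solves the IP by sign inspection, or the observation holds; so the observation can fail only when the constraints are irrelevant to optimality. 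The probabilistic step is fine and in fact exact: if the $c_j$ are i.i.d., symmetric and have no atom at $0$, then $z^\star$ is uniform on $\{0,1\}^n$, so $\Pr[z^\star\in\mathcal F]=|\mathcal F|/2^n$. What the paper's instances additionally display is multiplicity: several infeasible levels lie below $f(x^\star)$, which produces the sequence of crossings in \Cref{fig:spectral_gaps_knapsack}. In your framework this number is simply $|\{z: f(z)<f(x^\star)\}|$, since every such $z$ is infeasible by optimality of $x^\star$ over $\mathcal F$.

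Two small corrections. First, your claim that the observation is \emph{equivalent} to $z^\star\notin\mathcal F$ fails when some $c_j=0$ and ties are broken arbitrarily. With the non-strict inequality, the observation holds if and only if $\arg\min_{\{0,1\}^n}f\not\subseteq\mathcal F$. For example, the X3C formulation in \Cref{ap:x3c} has $f\equiv 0$, so the observation holds even though the tie-break $z^\star=x^\star$ is feasible. State $z^\star\notin\mathcal F$ as a sufficient condition, which is all you need. The strict form used later (in the proofs of \Cref{theorem:leakage_fast_jumps} and \Cref{prop:level_crossing}) says that \emph{no} unconstrained minimizer is feasible; this coincides with $z^\star\notin\mathcal F$ when all $c_j\neq 0$.

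Second, your knapsack bullet is misfiled. In the paper's form, $\min\, -v^{\mathsf T}x$ subject to $w^{\mathsf T}x\le C$, one has $c=-v<0$ and, in $Ax\ge b$ form, $b=-C<0$. Hence $z^\star=\mathbf{1}$ rather than $0$, and it is infeasible exactly when $\sum_i w_i>C$. Similarly, ``$b_i>0$ exactly when the constraint is nontrivial'' requires $A\ge 0$, as in covering problems.
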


In general, solving 0--1 IP is NP-hard. Simplex or interior-point methods are insufficient for solving such problems. This creates opportunities for various heuristics and approximation algorithms, including quantum algorithms.

\subsection{Solving 0--1 Integer Programs}\label{sec:solving_binary_problems}
We briefly outline the classical strategy for solving 0--1 integer programs (IPs). As before, the simplex algorithm plays the central role in solving IP problems. It does so using \emph{LP relaxations} combined with two foundational ideas: branch-and-bound and cutting-plane methods \cite[Ch.~1.2]{conforti2014integer}.

Starting from the 0--1 feasible set $\mathcal{F}$, we first relax the integrality constraints and solve the LP relaxation
\begin{equation*}
    \mathcal{P}= \{ \, x \in \mathbb{R}^n : Ax \geq b,\ 0\leq x \leq 1\,\}.
\end{equation*}
Note that $\mathcal{F} \subseteq \mathcal{P}$. We then use simplex to obtain an optimal solution $x^\star \in \mathcal{P}$. If $x^\star$ is binary, it is also optimal for the original IP. Otherwise, $x^\star$ is fractional and further processing is required, namely, tightening the relaxed feasible set $\mathcal{P}$.

\begin{figure}[t]
    \centering
    \includegraphics[scale=0.068]{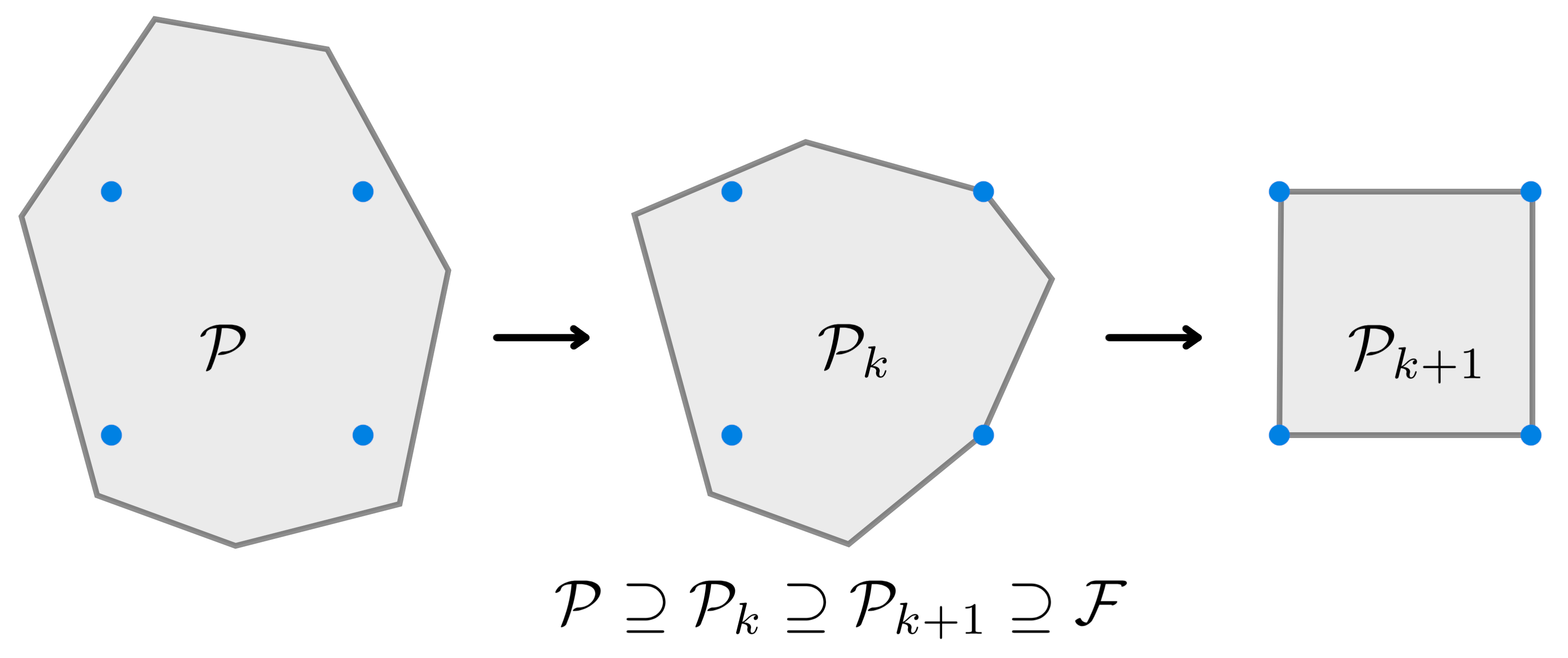}
    \caption{Conceptual illustration of iterative tightening in LP-based methods for solving 0--1 IPs. Starting from the LP relaxation $\mathcal{P}$, the goal is to generate a sequence of progressively tighter polyhedral relaxations $\mathcal{P}_k$ and $\mathcal{P}_{k+1}$ that continue to contain the discrete feasible set $\mathcal{F}$ (four blue points).}
    \label{fig:nested_sets}
\end{figure}

Both branch-and-bound and cutting-plane methods can be viewed as systematic ways to tighten the relaxation until integrality is enforced. Conceptually, they generate a sequence of relaxed search regions $\{\mathcal{P}_k\}$ that satisfy
\begin{equation}\label{eq:nested_relaxed_sets}
    \mathcal{P} \supseteq \mathcal{P}_k \supseteq \mathcal{P}_{k+1} \supseteq \mathcal{F},
\end{equation}
while repeatedly solving the associated LPs for each $\mathcal{P}_k$. The set nesting above is conceptual: cutting planes tighten the relaxation by ``slicing away" regions, while branch-and-bound tightens the search region by subdividing it into smaller relaxations and pruning regions that cannot contain an improving integer solution.

In this unified perspective, the algorithm alternates between (i) solving an LP relaxation to obtain bounds and candidate solutions and (ii) refining the feasible region by introducing additional restrictions. E.g., see \Cref{fig:nested_sets} diagram. They either subdivide the search region or add valid inequalities to progressively exclude fractional solutions until a binary optimum is eventually obtained.

In practice, these LP-based routines are often highly effective. However, because 0--1 IP is NP-hard in general, neither branch-and-bound nor cutting-plane methods admit polynomial-time guarantees: the number of refinements (nodes, cuts, or LP solves) grows exponentially in the worst case.

\subsection{0--1 Quadratic Unconstrained Problems with Penalties}
The IP formulation in \cref{eq:canonical_ip} is standard, and there are well-developed, powerful classical methods for solving it. By contrast, the rise of quantum optimization has popularized the QUBO formalism, in which constraints $Ax \geq b$ are absorbed into the objective via penalty terms, yielding an unconstrained binary problem. The appeal of this approach is that it maps directly to an Ising Hamiltonian, making it convenient for quantum implementations. However, penalty-based QUBO reformulations can introduce substantial difficulties relevant in both classical and quantum settings, as shown rigorously in \cite{gabbassov2025lagrangian}. Consequently, QUBO reformulations are best viewed as a hardware-driven modelling choice rather than the natural formulation for classical IP.

The IP in \cref{eq:canonical_ip} can be reformulated as a QUBO as follows. Define the linear objective function
\begin{equation}\label{eq:linear_objective}
    f(x) = c^{\mathsf T}x,
\end{equation}
and a quadratic function that penalizes constraint violations
\begin{equation}\label{eq:penalty_function}
    g(x,w) = \|Ax -b -w\|^2 = \sum_{i=1}^m \bigl((Ax-b)_i - w_i\bigr)^2.
\end{equation}
Here, $w \geq 0$ is a vector of slack variables. The definition and construction of the slack vector must ensure that for any feasible $x \in \mathcal{F}$, there exists $w$ such that $g(x,w)=0$. Conversely, if $x$ violates a constraint ($x \notin \mathcal{F}$), then $g(x,w)>0$ for all $w \geq 0$. The unconstrained problem is then
\begin{equation}\label{eq:unconstrained}
    \begin{aligned}
        &\min_{x \in \{0,1\}^n, w \geq 0}  \ q(x,w)\\
        &q(x, w) = f(x) + \lambda g(x,w),
    \end{aligned}
\end{equation}
where $\lambda > 0$ is the penalty multiplier that enforces feasibility. Note that we minimize over the binary vector $x$ and the non-binary slack variable $w \geq 0$. In the quantum computing context, the slack variable $w$ must be expressed as a function of additional binary variables $\gamma \in \{0,1\}^{\ell}$. Therefore, the final QUBO problem is
\begin{equation}\label{eq:min_qubo}
    \min_{x,\gamma} q(x,w(\gamma)),
\end{equation}
where $x \in \{0,1\}^n, \gamma \in \{0,1\}^\ell$.

% $N = n + \ell$ is the total number of binary variables.

\subsection{Lagrangian Methods}
Lagrangian multipliers and Lagrangian duality methods \cite{bertsekas2014constrained} are applicable to combinatorial optimization, but their roles differ from those in the smooth, continuous setting. Due to integrality constraints, Lagrangian duality is mainly used to obtain lower bounds, decompositions, and heuristic feasible solutions. The classical and quantum aspects of these approaches are studied in \cite{gabbassov2025lagrangian}.

\section{\label{sec:princip_quant_opt}Principles of Quantum Optimization with Constraints}
In this section, we introduce a time-dependent Hamiltonian whose evolution can realize the optimization process. In its most general form, it prescribes how to encode the optimization problem and how to explore the search space to find the optimal solution. This Hamiltonian provides a common starting point for many quantum optimization paradigms: QAOA-type algorithms \cite{wang2020xy,cook2020quantum,farhi2014quantum,herman2023constrained,bucher2025penalty,moll2018quantum,hadfield2019quantum}, Grover Search \cite{grover2001schrodinger,roland2002quantum}, Grover-extension D\"urr--H{\o}yer minimum-finding algorithm \cite{durr1996quantum} and quantum annealing \cite{rajak2023quantum}. The Hamiltonian is defined as:
\begin{equation}\label{eq:total_hamiltonian}
    H(t) = (1-s(t))\,H_{\text{init}} + s(t)\,H_{\text{p}}, \ \ 0 \le t \le T
\end{equation}
Here, $H_{\text{init}}$ and $H_{\text{p}}$ are the initial (mixer) and problem Hamiltonians, and the schedule $s(t)$ is a real monotonic function with $s(0)=0$ and $s(T)=1$. The mixer $H_{\text{init}}$ drives transitions between computational basis states (space exploration), while $H_{\text{p}}$ defines the energy landscape so that low-energy computational basis states correspond to good solutions.

The goal is to initialize the system in a suitable state and then evolve it under $H(t)$ so that, at the end of the evolution, the final state has a large overlap with the computational basis state encoding an optimal solution. Conventionally, the initial state is chosen to be a ground state of the mixer $H_{\text{init}}$. However, this choice is not strictly necessary. As we will see, relaxing this condition on the initial state in the presence of constraints leads to a more direct path to the optimal solution.

A central algorithmic choice is how to enforce constraints. The discussion below suggests that the computational difficulty of quantum encodings of integer-programming problems is driven primarily by the constraints, which generally induce many-body interactions and thereby increase the complexity of simulation and implementation. This is evident if one drops the constraints $Ax\ge b$ from \cref{eq:canonical_ip} and considers only the unconstrained minimization of $f(x)=c^{\mathsf T}x$ over $x \in \{0,1\}^n$. The resulting problem is separable across variables: each component $x_i$ can be chosen independently according to the sign of $c_i$, so the problem reduces to $n$ independent one-variable minimizations. The nontrivial difficulty, therefore, lies in enforcing feasibility, which, in the quantum setting, is reflected in the need to encode these constraints through nontrivial many-qubit interactions.

\subsection{Penalty-Based Methods}
In penalty-based approaches, $H_{\text{p}}$ encodes the QUBO function $q(x,w(y))$ from \cref{eq:min_qubo}. Since this function is quadratic, it yields a 2-local Ising Hamiltonian with nontrivial qubit interactions. A standard choice of mixer is the 1-local transverse-field Hamiltonian
\begin{equation}\label{eq:transverse_field}
H_{\text{init}} = -\sum_{i=1}^N X_i,
\end{equation}
where $N = n + \ell$, and $X_i$ is the Pauli $X$ operator acting on qubit $i$. Because $H_{\text{init}}$ is 1-local, the total Hamiltonian $H(t)$ remains 2-local. 

Let $r = (x,\gamma) \in \{0,1\}^N$ with $x \in \{0,1\}^n$ representing the problem variables and $\gamma \in \{0,1\}^\ell$ encoding the slack variable $w$ [see \cref{eq:min_qubo}].
Then the mixer in \cref{eq:transverse_field} induces transitions between computational basis states that differ by a single bit flip, namely from $\ket{r}$ to $\ket{r\oplus e_i}$, where $e_i$ is the bit string with a 1 in position $i$ and 0's elsewhere. Since these transitions are defined independently of the constraints, the mixer does not preserve feasibility and can drive the dynamics into the infeasible subspace. Feasibility is enforced instead by $H_{\text{p}}$, which assigns energetic penalties to infeasible solutions.

This mixer admits a graph-theoretical interpretation. The negative of $H_{\text{init}}$ is the adjacency matrix of a $N$-dimensional hypercube with vertex set $V = \{0,1\}^{N}$ and the edge set $E=\{(r,r\oplus e_i): r \in \{0,1\}^N,\ i=1,\dots,N\}$. Thus, each vertex represents a feasible or infeasible solution, and each edge represents a single-bit-flip transition allowed by the mixer.

\begin{figure}[t]
    \centering
    \includegraphics[scale=0.11]{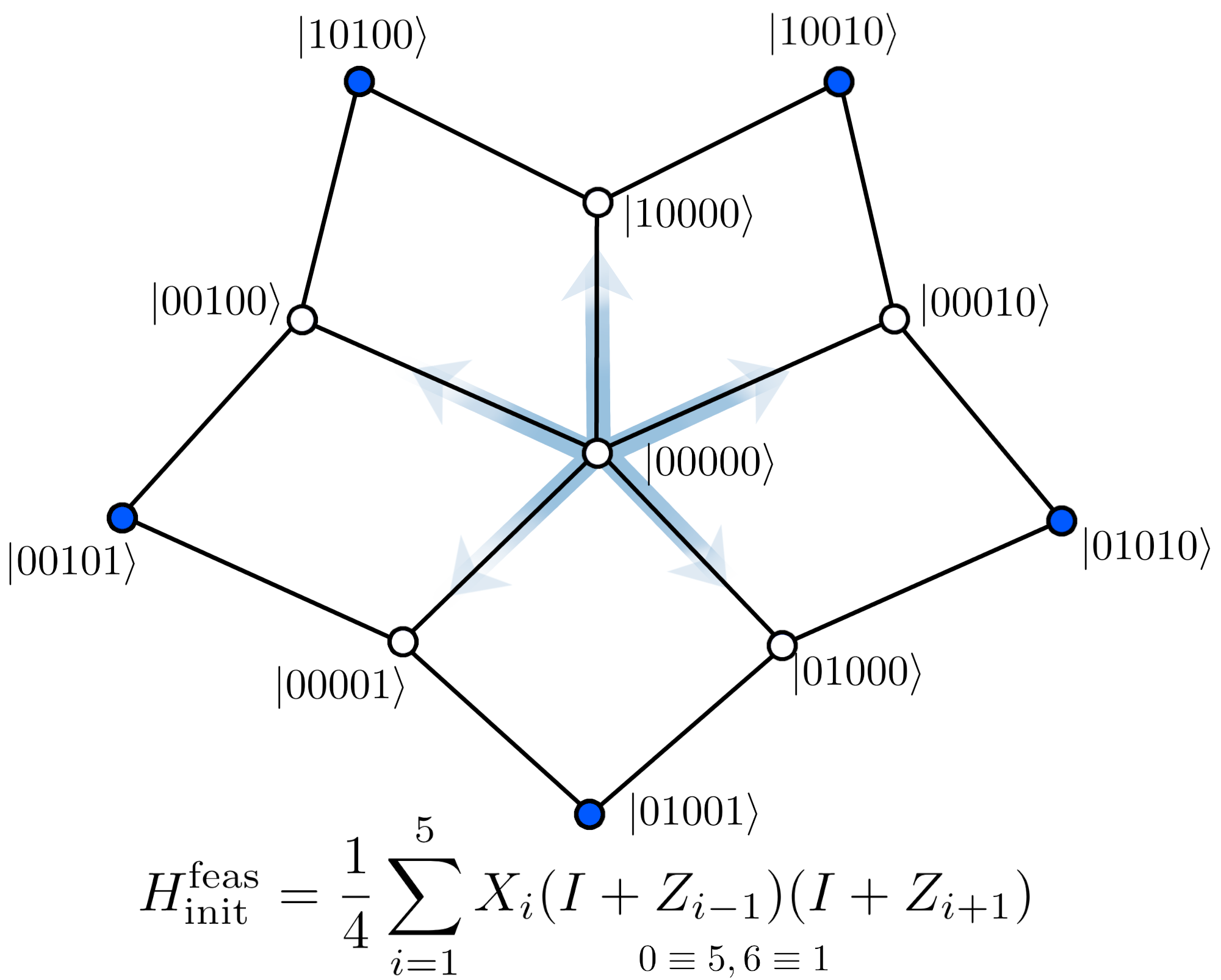}
    \caption{Transition graph induced by a feasibility-preserving mixer for the Maximum Independent Set Problem on the cycle graph $C_5$, [see \Cref{ap:maximum_independent_set}]. Each vertex is a feasible bit string $\ket{x}$ encoding an independent set, and each edge represents an allowed transition that preserves feasibility. The transparent blue arrows illustrate wavefunction propagation in superposition from $\ket{00000}$ to adjacent feasible states. The outer blue vertices are the maximum independent sets of $C_5$ and therefore the optimal solutions.}
    \label{fig:feas_mixer_max_indep_set}
\end{figure}

\subsection{Penalty-Free Methods}
In penalty-free approaches, $H_{\text{p}}$ typically encodes only the linear objective $f(x) =c^{\mathsf{T}}x$ in \cref{eq:canonical_ip}, making $H_{\text{p}}$ 1-local and efficiently simulatable. However, the difficulty is shifted into $H_{\text{init}}$, which must enforce feasibility and is generally high-$k$-local. A generic feasible-subspace mixer can be written as
\begin{equation}\label{eq:feasible_subspace_mixer}
    H_{\text{init}}^{\text{feas}}
    = \sum_{ x,x' \in \mathcal{F}}
    M_{x'x}\,|x'\rangle\langle x|, \quad M = M^{\dagger},
\end{equation}
where $M_{x'x} \in \mathbb{C}$ are chosen such that the eigenvalue $0$ is not the lowest eigenvalue, and $\mathcal{F}$ is the feasible set defined in \cref{eq:feasible_set_binary}.  This mixer connects feasible states and does not allow transitions to infeasible subspaces because
\begin{equation}\label{eq:mixer_kernel}
     H_{\text{init}}^{\text{feas}} \ket{z} = 0 \text{ for } z \notin \mathcal{F}.
\end{equation}
Since $H_{\text{p}}$ is diagonal and $H_{\text{init}}^{\text{feas}}$ has support only on feasible states, the Hilbert space decomposes into two invariant subspaces,
\begin{equation}
\mathcal{H} = \mathcal{H}_{\mathcal{F}} \oplus \mathcal{H}_{\mathcal{Q}},
\end{equation}
where 
$$\mathcal{H}_{\mathcal{F}}=\operatorname{span} \left \{ \ket{x}: x \in \mathcal{F} \right \}, \quad  \mathcal{H}_{\mathcal{Q}}=\operatorname{span} \left \{ \ket{z}: z \notin \mathcal{F} \right \}.$$

As before, the feasibility-preserving mixer admits a graph-theoretic interpretation. If $M_{x'x} \in \mathbb{R}$, then $ H_{\text{init}}^{\text{feas}}$ is proportional to the adjacency matrix of an undirected weighted graph on vertex set $V = \mathcal{F}$, and the edge set  $E = \{(x,x'):M_{x'x}\neq 0\}$. Each edge corresponds to an allowed transition between $|x\rangle$ and $|x'\rangle$, and transition rates are controlled by $|M_{x'x}|$.

One might wonder how a feasibility-preserving mixer can be realized. For certain constraint types, such mixers can be constructed explicitly using polynomially many Pauli operators \cite{hadfield2019quantum, wang2020xy, cook2020quantum}. More recent and general techniques realize feasibility-preserving mixers indirectly by suppressing infeasible state transitions via repeated projective measurement schemes \cite{herman2023constrained,pawlak2023quantum,fuchs2024lx}. Other methods, such as \cite{bucher2025penalty, bucher2025efficient}, use oracle-based subroutines to allow only feasible transitions.

\begin{observation}
    In the classical LP setting, the simplex method moves along the edges of the feasible polyhedron, where vertices are feasible solutions and edges are allowable transitions, as shown in \Cref{fig:polyhedron_and_graph} (b). Analogously, $H_{\text{init}}^{\text{feas}}$ induces a transition graph \mbox{$G=(V,E)$}.
    Its vertices are feasible computational basis states, and its edges are the transitions allowed by the mixer, as illustrated in \Cref{fig:feas_mixer_max_indep_set}. The difference is that the quantum counterpart coherently explores the allowed transitions in superposition. \hfill $\diamond$
\end{observation}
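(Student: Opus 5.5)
The plan is to make the analogy in the Observation precise by proving three things: the mixer $H_{\text{init}}^{\text{feas}}$ in \cref{eq:feasible_subspace_mixer} defines a graph $G=(V,E)$ with $V=\mathcal{F}$; the evolution generated by $H(t)$ in \cref{eq:total_hamiltonian} never leaves $\mathcal{H}_{\mathcal{F}}$; and amplitude can reach a basis state $\ket{x'}$ from $\ket{x}$ only along walks in $G$. Each walk is the quantum counterpart of a simplex vertex path in \Cref{fig:polyhedron_and_graph} (b). The difference is that all walks contribute coherently.

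\textbf{Step 1: the graph.} Set $V=\mathcal{F}$ and $E=\{(x,x'):M_{x'x}\neq 0\}$. Hermiticity $M=M^\dagger$ gives $M_{xx'}=\overline{M_{x'x}}$, so $M_{x'x}\neq0$ if and only if $M_{xx'}\neq0$. Hence $E$ is symmetric and $G$ is undirected. When $M$ is real, $H_{\text{init}}^{\text{feas}}$ is literally the weighted adjacency matrix, with self-loops given by any diagonal entries.

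\textbf{Step 2: invariance of $\mathcal{H}_{\mathcal{F}}$.} From \cref{eq:mixer_kernel} and the support of $H_{\text{init}}^{\text{feas}}$ on $\mathcal{F}$, the mixer maps $\mathcal{H}_{\mathcal{F}}$ into itself and annihilates $\mathcal{H}_{\mathcal{Q}}$. The problem Hamiltonian $H_{\text{p}}$ is diagonal in the computational basis, so it preserves both subspaces. Therefore every $H(t)$ is block diagonal with respect to $\mathcal{H}=\mathcal{H}_{\mathcal{F}}\oplus\mathcal{H}_{\mathcal{Q}}$. Let $P_{\mathcal{F}}$ denote the orthogonal projector onto $\mathcal{H}_{\mathcal{F}}$. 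Then $[H(t),P_{\mathcal{F}}]=0$ for all $t$, and hence $P_{\mathcal{F}}$ commutes with the time-ordered propagator $U(t,0)$. I would show this last implication by differentiating $P_{\mathcal{F}}U-UP_{\mathcal{F}}$ and applying uniqueness of solutions of the Schr\"odinger equation. Consequently, an initial state in $\mathcal{H}_{\mathcal{F}}$ stays there, so the walk is confined to the vertex set $V$.

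\textbf{Step 3: propagation along edges.} Expand $U(t,0)$ in a Dyson series and insert resolutions of the identity over basis states in $\mathcal{F}$ between consecutive factors. Each term is then a product of matrix elements of $H(t)$. The diagonal factors come from $H_{\text{p}}$, and from $M_{xx}$ where present. The off-diagonal factors are $(1-s)M_{x_{j+1}x_j}$. Off-diagonal factors vanish unless $(x_j,x_{j+1})\in E$. Thus $\bra{x'}U(t,0)\ket{x}$ is a sum over walks in $G$ from $x$ to $x'$, each weighted by the phases accumulated from the diagonal parts. The leading nonvanishing order in $t$ equals the graph distance $d_G(x,x')$. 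If $x'$ lies in a different connected component of $G$, the amplitude vanishes identically.

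I expect Step 3 to be the main obstacle, for two reasons. First, the time-ordering and the time dependence of $s(t)$ make the bookkeeping messier than a plain power series. Second, the word ``analogously'' must be given precise content, and I would state it as this walk expansion. To handle the time dependence, I would first treat a short constant-$s$ interval using the ordinary series for $e^{-i\tau H}$, where $\bra{x'}H^k\ket{x}$ is manifestly a sum over length-$k$ walks. Composing such steps then gives a Trotterized product of the same form, which converges to the Dyson expansion. This also makes clear that the graph structure carries over to the circuit-based QAOA-type setting.
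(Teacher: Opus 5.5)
Your Steps 1 and 2 match the paper's own justification. The paper gives no separate proof of this Observation. It relies only on the preceding remarks: for real $M$, $H_{\text{init}}^{\text{feas}}$ is proportional to the weighted adjacency matrix on $V=\mathcal F$ with $E=\{(x,x'):M_{x'x}\neq 0\}$. Moreover, $H(t)$ is block diagonal on $\mathcal H_{\mathcal F}\oplus\mathcal H_{\mathcal Q}$ because $H_{\text{p}}$ is diagonal and the mixer annihilates $\mathcal H_{\mathcal Q}$. Your Hermiticity argument (undirectedness without assuming $M$ real) and the commutation of $P_{\mathcal F}$ with the propagator are correct sharpenings. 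You could also note that the paper's requirement that $0$ not be the lowest mixer eigenvalue is what places the mixer ground state, hence the initial state, inside $\mathcal H_{\mathcal F}$. Step 3 goes beyond the paper, which leaves \emph{coherently explores in superposition} as an informal picture (\Cref{fig:feas_mixer_max_indep_set}). The walk expansion is a sensible way to give it content. The vanishing of amplitudes between different connected components is exactly the mechanism behind the failure of the $-X_1-X_2$ mixer in the case study.

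One claim in Step 3 is false as stated: that the leading nonvanishing order in $t$ of $\bra{x'}U(t,0)\ket{x}$ equals $d_G(x,x')$. Since $s(0)=0$, the order-$t^k$ term at $k=d_G(x,x')$ is $\frac{(-it)^k}{k!}\bra{x'}M^k\ket{x}$, which is a sum over shortest paths only. These contributions can cancel when the off-diagonal entries of $M$ have mixed signs or phases. For example, take feasible states $x,a,x',b$ forming a $4$-cycle with $M_{ax}=M_{x'a}=M_{bx}=-1$ and $M_{x'b}=+1$, plus the Hermitian transposes. Then $M\ket{x}=-\ket{a}-\ket{b}$ and $M^2=2I$ on this space. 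Hence $\bra{x'}M^k\ket{x}=0$ for every $k$, and $\bra{x'}e^{-i\tau M}\ket{x}\equiv 0$, even though $x'$ lies in the same component as $x$ and the lowest eigenvalue $-\sqrt{2}$ is negative, as the paper requires. Once $s(t)>0$, the first surviving contribution enters only at order $t^3$ (through $H_{\text{p}}$, proportional to $f(a)-f(b)$). So $d_G$ is only a lower bound on the leading order, and connectivity of $G$ is necessary but not sufficient for dynamical accessibility. Equality does hold for stoquastic mixers (real, nonpositive off-diagonal entries), the class the paper invokes in item~\ref{item:3} of Sec.~\ref{sec:spectral_gaps}. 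There every shortest path contributes with the same sign $(-1)^k$. Either add that hypothesis or drop the claim; nothing in the Observation depends on it.
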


\subsection{General Mixer}
Restricting attention to either unconstrained mixers, such as \cref{eq:transverse_field}, or the feasible-subspace mixer in \cref{eq:feasible_subspace_mixer}, is unnecessarily limiting. To unify these cases, as well as hybrids of them, we introduce the orthogonal projectors $\Pi_{\mathcal{F}}$ and $\Pi_{\mathcal{Q}}$. The action of $\Pi_{\mathcal{F}}$ is
\begin{equation}\label{eq:projector_F}
    \Pi_{\mathcal{F}}\ket{x} =
    \begin{cases}
        \ket{x}, & x \in \mathcal{F},\\
        0, & \text{otherwise},
    \end{cases}
\end{equation}
and $\Pi_{\mathcal{Q}} := I - \Pi_{\mathcal{F}}$. Thus, $\Pi_{\mathcal{F}}$ projects onto the feasible subspace, while $\Pi_{\mathcal{Q}}$ projects onto its orthogonal complement. Then, a general mixer can be written as
\begin{equation}\label{eq:relaxed_mixer}
    H_{\text{init}}^{\epsilon} = (\Pi_{\mathcal{F}} + \epsilon \Pi_{\mathcal{Q}})H_{\text{init}}(\Pi_{\mathcal{F}} + \epsilon \Pi_{\mathcal{Q}}),
\end{equation}
where $H_{\text{init}}$ is an unconstrained mixer, e.g., \cref{eq:transverse_field} and $\epsilon >0$ controls how much leakage is allowed into the infeasible subspace. Leakage can be intentionally allowed to better explore the instantaneous landscape or used to model undesirable transitions arising from implementation errors in a feasible-subspace mixer. For $\epsilon = 0$, this expression reduces to the feasible-subspace mixer, e.g., \cref{eq:feasible_subspace_mixer}. For $\epsilon = 1$, since $\Pi_{\mathcal{F}} + \Pi_{\mathcal{Q}} = I$, it recovers the unconstrained mixer $H_{\text{init}}$.

Alternatively, we can rewrite the mixer in a block form
\begin{equation}\label{eq:relaxed_mixer_extended}
    H_{\text{init}}^{\epsilon} = H^{\mathrm{init}}_{FF} + \epsilon H^{\mathrm{init}}_{FQ} + \epsilon H^{\mathrm{init}}_{QF} + \epsilon^2 H^{\mathrm{init}}_{QQ},
\end{equation}
where $H^{\mathrm{init}}_{AB} \coloneq \Pi_{\mathcal{A}}H_{\text{init}}\Pi_{\mathcal{B}}$.  The first term in \cref{eq:relaxed_mixer_extended} generates transitions entirely within the feasible subspace. The second and third terms generate $ \epsilon$-suppressed transitions between feasible and infeasible states, while the final term generates rare $\epsilon^2$-suppressed transitions within the infeasible subspace.

\section{Entanglement Restructuring and Computational Slowdown}\label{sec:entanglement_restructuring_and_slowdown}
Many quantum optimization algorithms \cite{farhiquantum,hadfield2019quantum,wang2020xy,cook2020quantum,fuchs2022constraint,hao2026constraint,bartschi2020grover,bucher2025penalty,herman2023constrained,pawlak2023quantum,feinstein2025robustness}, whether analog or gate-based, ultimately arise from Hamiltonian dynamics. The spectral properties of the relevant Hamiltonians therefore play a central role in the performance of many quantum algorithms. In adiabatic quantum computation, and more generally in regimes where the dynamics is organized around avoided level crossings, the minimum spectral gap is traditionally regarded as the quantity that determines the minimum required evolution time and hence limits the achievable computational speed \cite{Farhi2000AdiabaticEvolution,AlbashLidar2018AQC}. In gate-based quantum computation, which is polynomially equivalent to adiabatic quantum computation \cite{aharonov2008adiabatic}, the spectral properties manifest as the complexity and depth of quantum circuits that approximate the Hamiltonian evolution \cite{mcdowall2026spectral,nzongani2026scaling,deshpande2022importance,dooley2020simulating,lin2020near,roca2023circuit,vzunkovivc2026scalable,wang2025imaginary,hopkins2025multi}. Therefore, the spectral perspective developed in this section is relevant to a broad family of quantum algorithms.

In \cite{gabbassov2025adiabatic}, it has been rigorously shown that the dramatic computational slowdown occurs when a quantum system, while following an instantaneous eigenstate, needs to rapidly restructure its entanglement relations among subsystems, e.g., creating, destroying, or redistributing entanglement among qubits over a short interval of Hamiltonian interpolation. This rapid restructuring of entanglement can only occur at narrowly avoided level crossings. Consequently, in \cite{gabbassov2025adiabatic}, it was demonstrated that neither the minimum gap nor the amount of entanglement is an actual cause of a computational slowdown, but rather the extent and speed of entanglement restructuring. The minimum spectral gap is merely a consequence of the underlying physical changes in entanglement. This holds true for both adiabatic and gate-based algorithms due to their aforementioned equivalence. In the gate-based model, the same intuition appears in a different language: a substantial reorganization of entanglement across distant subsystems requires a sufficiently deep circuit with many entangling gates and long sequences of SWAP gates to mediate the transfer of entanglement.

The relationship between entanglement restructuring and narrow avoided-level crossings suggests a route to faster evolution: avoid entanglement restructuring that is not needed to reach the target state.

\section{Case Study: Entanglement Restructuring for Quantum Algorithm Design}\label{sec:case_study}
In this section, we demonstrate the importance of minimizing entanglement restructuring in the design of efficient quantum optimization algorithms. We also show that, regardless of the chosen method (penalty-free or penalty-based) to achieve faster computation, one needs to be aware of the algebraic and contextual structure of an optimization problem. Applying one method or the other blindly will almost certainly result in an inefficient algorithm at best and a catastrophically failing one at worst. Therefore, there is no \textit{free lunch}: no single algorithmic strategy can be expected to perform well across all problem classes without exploiting the particular structure of the instance at hand \cite{wolpert2002no}.

To demonstrate these points, let us consider a trivial yet very ubiquitous 0--1 IP example. Consider an optimization problem:
\begin{equation}\label{eq:importrant_example}
    \begin{aligned}
        \text{min }  \ & c^{\mathsf{T}}x \\
        \text{subject to } \ & x_3 + x_4 = 1 \\
        & x \in \{0,1\}^4
    \end{aligned}
\end{equation}
Here, the vector $c$ has positive non-zero entries such that the optimal solution is 
\begin{equation}\label{eq:optimal_solution}
    x^\star = (0010).
\end{equation}
Since $c >0$, to minimize $f(x)$, we want to switch off as many bits as possible while satisfying the constraint $x_3 + x_4 = 1$. Assuming $c_3 < c_4$, it is easy to see that \cref{eq:optimal_solution} is the optimal solution. The constraint 
$$x_3 + x_4 =1$$
is a particular form of the cardinality constraint, and it is one of the most common and versatile constraints in optimization \cite{williams2013model}. This constraint represents the choice: out of two variables $x_i$ and $x_j$, exactly one must be selected (activated). The choice of one item over the other arises naturally in many real-life decision processes and, hence, in many optimization contexts. We will show that such an innocuous and simple constraint induces highly nontrivial quantum dynamics. In this dynamics, the importance of entanglement restructuring and spectral duality becomes especially apparent.

\subsection{Minimizing Entanglement Restructuring}
Here, we relate computational slowdown to entanglement restructuring and show how this relation can guide the design of faster quantum optimization algorithms. We first consider a penalty-based method, where large penalties induce abrupt entanglement restructuring and hence narrow the minimum spectral gap. We then consider a penalty-free method with a feasibility-preserving mixer that completely eliminates entanglement restructuring and yields fast evolution. However, this method can also lead to a catastrophic failure in which the optimal solution is never sampled, regardless of the available computational resources. Finally, we show that a slightly richer feasibility-preserving mixer avoids this failure while keeping entanglement restructuring minimal.

\subsubsection{Penalty-based setup}
Let us consider the penalty-based approach first. The objective function is quadratic, and it is given by:
\begin{equation}
    q(x) = \sum_{i=1}^4 c_i x_i + \lambda (x_3 + x_4 - 1)^2
\end{equation}
The infeasible solutions of the form $(x_1, x_2, 1, 1)$ and $(x_1, x_2, 0, 0)$ are penalized by the energetic penalty $\lambda$, and the solutions of the form $(x_1, x_2, 1, 0)$ and $(x_1, x_2, 0, 1)$ have no penalty.

Encoding the objective function yields a 2-local problem Hamiltonian
\begin{equation}
    H_{\mathrm{q}} = -\frac{1}{2}\sum_{i=1}^4 c_i Z_i +\frac{\lambda}{2} (I + Z_3 Z_4).
\end{equation}
Here, we omitted additive identity terms since they do not affect dynamics in any way. Therefore, we identify eigenvalues with the objective function $f(x)$ or $q(x)$ up to this shift; for Hamiltonian encoding details, see \Cref{ap:hamiltonian_encoding}.

For the initial Hamiltonian, we choose a 1-local transverse-field mixer
\begin{equation}\label{eq:4_qubit_transverse_field}
    H_{\mathrm{init}} = -\sum_{i=1}^4 X_i,
\end{equation}
and initialize the system in its unique ground state
\begin{equation*}
    \ket{\psi(0)} = \frac{1}{\sqrt{16}}\sum_{x \in \{0,1\}^4}\ket{x} = \ket{+}^{\otimes 4}.
\end{equation*}
This state is a uniform superposition of all computational basis states and hence a product state.

A large $\lambda > 0$ enforces feasibility but also induces significant entanglement restructuring near the narrowing spectral gap. This behaviour is shown by the solid red curves in \Cref{fig:penalty_vs_free_entropy_and_gap}. In the top panel, the solid red curves show the bipartite entanglement entropy of the reduced system's state as it evolves. As $\lambda$ increases, the entropy peak becomes steeper and narrower, indicating that the entanglement structure changes over a shorter interval of the schedule $s(t)$. In the middle panel, the corresponding solid red curves show the absolute rate of change of entropy, making this abrupt restructuring explicit. In the bottom panel, the solid red curves show the associated narrowing of the spectral gap, which is a consequence of this rapid entanglement restructuring. This suggests that the computational slowdown is not determined by the peak height, which measures the maximum amount of entanglement, but by how rapidly the entanglement structure changes. \Cref{fig:3_methods_transitions} (a) provides the complementary state transition graph picture: infeasible solutions form energetic penalty barriers, and increasing these barriers makes it harder for amplitudes, initially spread over all computational basis states, to reconcentrate at the optimal feasible solution marked in blue colour.

\begin{figure}[t]
    \centering
    \includegraphics[scale=.11]{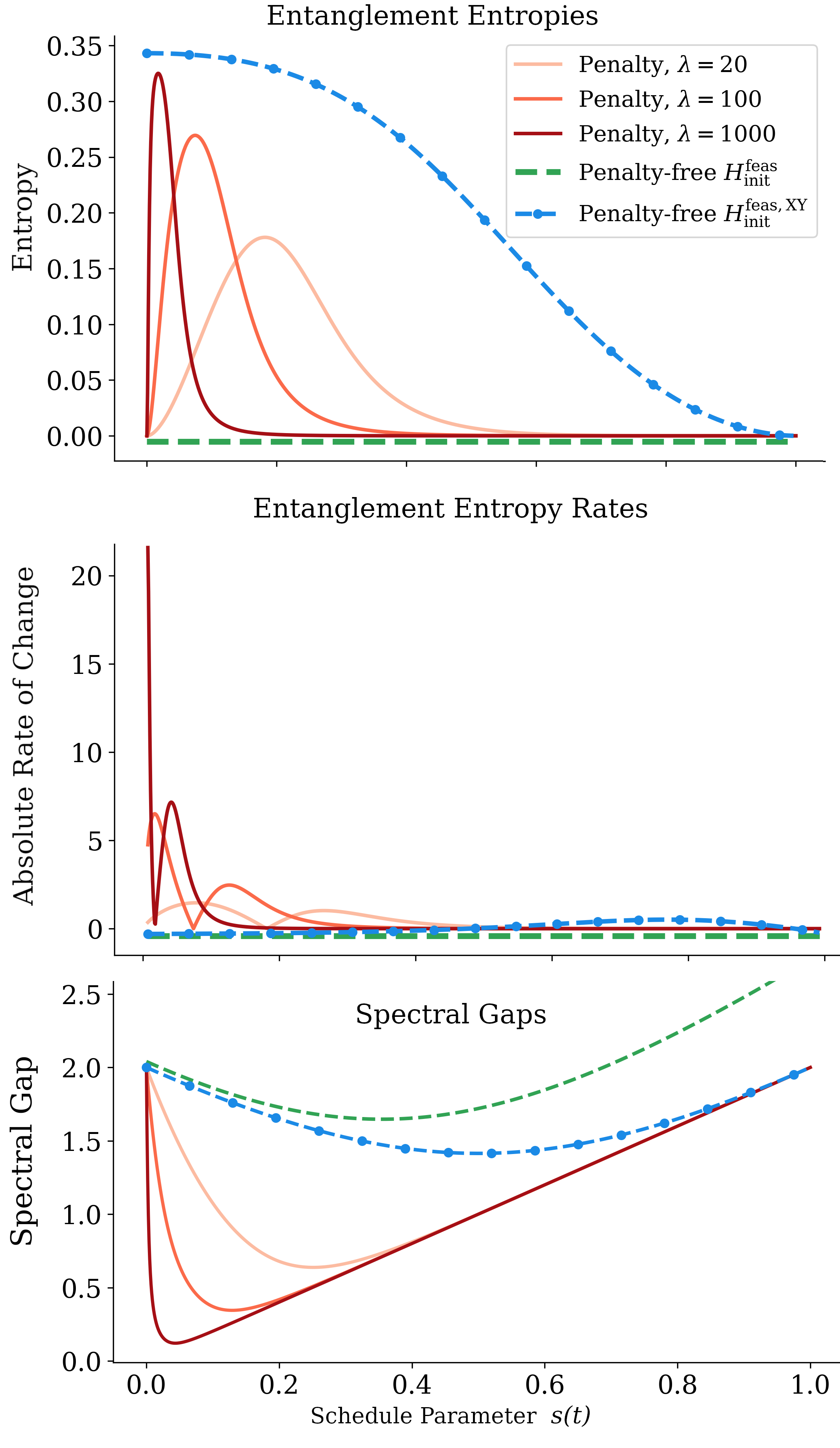}
    \caption{Bipartite entanglement entropy of the reduced system state (top), the absolute rate of change of the entropy (mid), and spectral gaps (bottom) for penalty-based and penalty-free methods. For the penalty-based method, increasing the penalty strength makes entanglement restructuring more abrupt, producing narrower peaks with steeper slopes and hence narrowing the spectral gap. For the penalty-free methods (with mixers $H^{\mathrm{feas}}_{\mathrm{init}}$ and $H_{\mathrm{init}}^{\mathrm{feas, XY}}$), the entropy rates are either zero or close to zero, showing that entanglement restructuring is either removed or minimal; correspondingly, the relevant spectral gap remains wide throughout the evolution.}
\label{fig:penalty_vs_free_entropy_and_gap}
\end{figure}

\subsubsection{Penalty-free setup}
It is possible to eliminate entanglement restructuring entirely, thereby achieving a significantly faster evolution time. For this, we consider the penalty-free approach. Then, the problem Hamiltonian is 1-local,
\begin{equation}
    H_{\mathrm{f}} = -\frac{1}{2}\sum_{i=1}^4 c_i Z_i.
\end{equation}
First, consider the simplest feasibility-preserving mixer, obtained from the transverse-field Hamiltonian in \cref{eq:4_qubit_transverse_field} by excluding infeasible transitions:
\begin{equation}\label{eq:4_qubit_trans_feas}
    H^{\mathrm{feas}}_{\mathrm{init}} = - X_1 - X_2.
\end{equation}
Hence, the total Hamiltonian is 1-local. Due to 1-locality, the system neither creates, redistributes, nor destroys entanglement. This is shown by the dashed green curves in \Cref{fig:penalty_vs_free_entropy_and_gap}: both the entanglement entropy and its rate of change remain zero throughout the evolution, while the relevant spectral gap remains wide. Thus, by removing entanglement restructuring, the evolution becomes trivial and fast. The corresponding state transition graph is shown in \Cref{fig:3_methods_transitions} (b), where the mixer preserves feasibility by allowing transitions only within feasible components.

\begin{figure}[t]
    \centering
    \includegraphics[scale=.14]{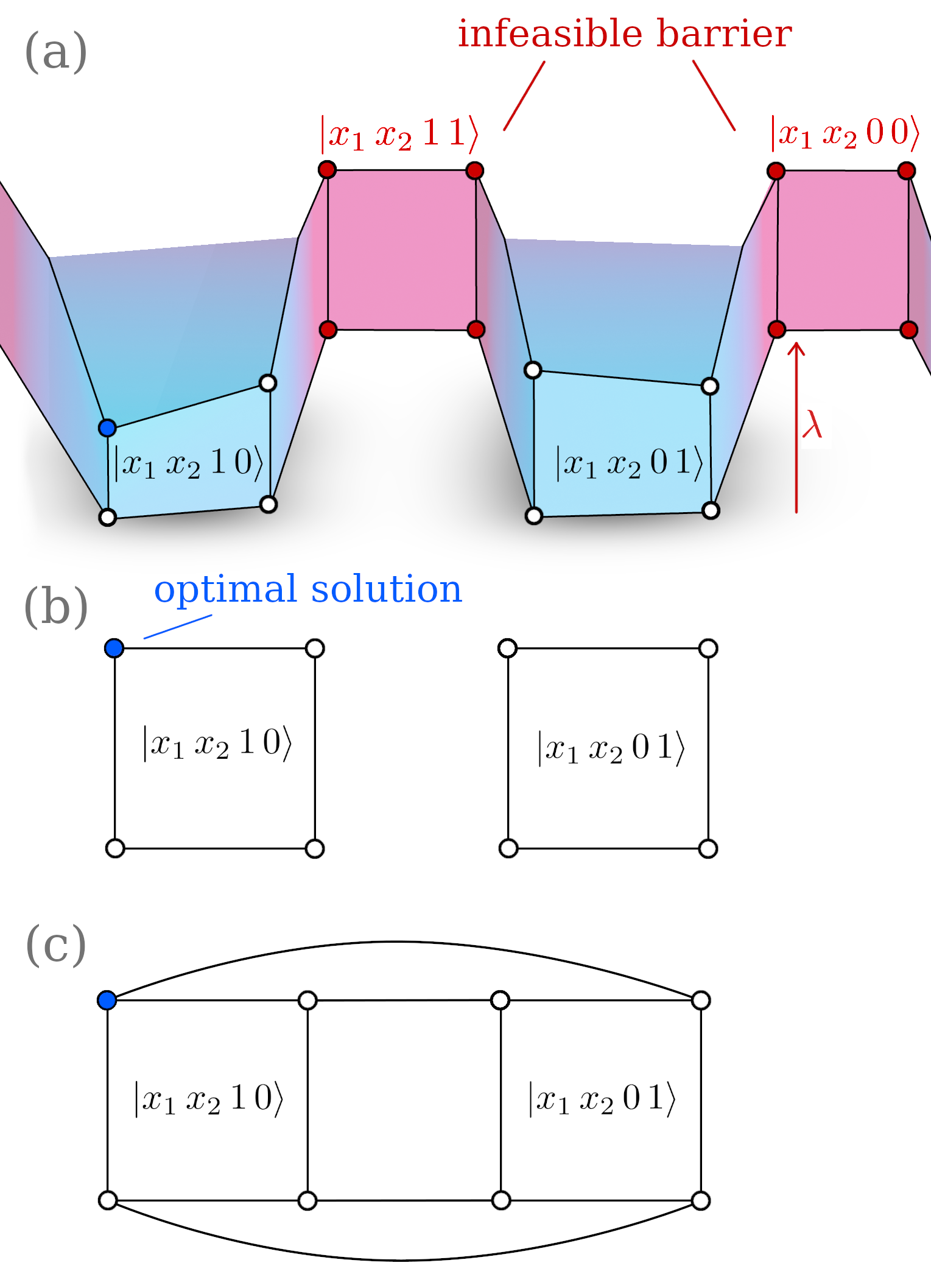}
    \caption{Transition graphs for three quantum optimization methods. (a) Penalty-based method with its transition graph (4-d hypercube) overlaid over the energy landscape. Vertices with higher energy are raised upward. Infeasible vertices are raised by the penalty $\lambda$, creating an energetic barrier between feasible regions. The initial state has support on all vertices, and reaching the optimal solution requires some amplitudes to traverse the barriers and concentrate at the optimal solution (blue vertex). For visual clarity, only representative transitions are shown. (b) Transition graph for the penalty-free method with $H^{\mathrm{feas}}_{\mathrm{init}}=-X_1-X_2$. The energy landscape is trivial and given by a linear function $f(x)$. The transition graph splits into two disconnected feasible components, so the initial ground space is degenerate, and ground states are supported on different components.  If the system is initialized in the wrong component, the optimal solution is dynamically inaccessible. (c) Transition graph of the penalty-free method with $H^{\mathrm{feas,XY}}_{\mathrm{init}}$. The additional $XY$ term introduces transitions $\ket{x_1x_2 01}\leftrightarrow\ket{x_1x_2 10}$, connecting the feasible components and making the optimal solution dynamically accessible while minimizing entanglement restructuring.}
\label{fig:3_methods_transitions}
\end{figure}

However, this approach is prone to a catastrophic failure of never sampling the optimal solution. To see this, we note that the full ground space of the initial feasibility-preserving Hamiltonian $H_{\text{init}}^{\text{feas}}=-X_1-X_2$ is degenerate, since the Hamiltonian acts trivially on qubits 3 and 4. Its intersection with the feasible subspace is two-dimensional and is spanned by feasible ground states,
\begin{equation*}
    \ket{u_{10}} = \ket{+}\ket{+}\ket{1}\ket{0}, \quad \ket{u_{01}} = \ket{+}\ket{+}\ket{0}\ket{1}.
\end{equation*}
Both ground states belong to distinct orthogonal feasible subspaces that are invariant under the dynamics of the total Hamiltonian:
\begin{equation*}
     \ket{u_{10}} \in \mathcal{H}_{10}, \quad \ket{u_{01}} \in \mathcal{H}_{01}
\end{equation*}
Since $x^\star=(0010)$, we have $\ket{x^\star}\in\mathcal{H}_{10}$. Hence, if the system is initialized in $\ket{u_{01}}$, then invariance of $\mathcal{H}_{01}$ prevents any amplitude from reaching $\ket{x^\star}$. The optimal solution, therefore, has zero sampling probability for all evolution times. \Cref{fig:3_methods_transitions} (b) shows that the state transition graph is a disconnected two-component graph. Each component spans orthogonal subspaces $ \mathcal{H}_{10}$ and $ \mathcal{H}_{01}$.

The mixer in \cref{eq:4_qubit_trans_feas} shows that completely eliminating entanglement restructuring can make the evolution fast, but can also lead to catastrophic failure. This failure can be avoided with a different feasibility-preserving mixer. The new mixer introduces only minimal entanglement restructuring, so the speedup is maintained.
\begin{equation}\label{eq:case_study_mixer}
   H_{\mathrm{init}}^{\mathrm{feas, XY}} = -X_1 - X_2 -\frac{1}{2}(X_3 X_4 + Y_3 Y_4).
\end{equation}
Just as the previous mixer, $H_{\mathrm{init}}^{\mathrm{feas, XY}}$ keeps qubits 1 and 2 unentangled. The 2-local terms introduce feasible transitions between previously invariant subspaces $\mathcal{H}_{01}$ and $\mathcal{H}_{10}$. As a result, the degeneracy between the two feasible components is lifted, and the ground state is supported entirely on feasible computational basis states
\begin{equation}\label{eq:ground_state_of_feas_xy}
    \ket{u} = \ket{+}\ket{+}\frac{\ket{01}+\ket{10}}{\sqrt{2}}.
\end{equation}
\Cref{fig:penalty_vs_free_entropy_and_gap} shows that entropy starts off high and slowly decays to zero (dashed dotted blue curve). The entropy rate of change remains small, showing that the entanglement is not rapidly restructured. Consequently, the relevant spectral gap remains wide. We note that the entropy peak is tall because the initial state $\ket{u}$ in \cref{eq:ground_state_of_feas_xy} has qubits 3 and 4 maximally entangled. This, however, as we noted before, does not affect the slowdown, since the spectral gap remains unchanged. The corresponding transition graph is shown in \Cref{fig:3_methods_transitions} (c): the additional $XY$ term connects the previously disconnected feasible components by allowing transitions of the form \mbox{$\ket{x_1 x_2 0 1} \leftrightarrow \ket{x_1 x_2 1 0}$}.

This example illustrates how entanglement restructuring informs algorithm design. The objective is not necessarily to eliminate entanglement restructuring altogether, but to keep it minimal while ensuring that the optimal solution remains dynamically accessible. We now turn from this example to the general theory, which identifies entanglement restructuring as the central quantity governing the computational difficulty of constrained quantum optimization.

\section{Theory of Entanglement Dynamics}\label{sec:entanglement_dynamics}
The discussion in this section is not specific to quantum optimization. Here, we develop a general theory explaining the mechanics of entanglement dynamics and restructuring. Later, we apply this theory to the analysis of quantum optimization.

In the spectral description, entanglement restructuring manifests as narrow avoided level crossings. In \cite{gabbassov2025adiabatic}, it was shown that the more rapid and abrupt the restructuring is, in the sense that a substantial change in the entanglement structure is compressed into a short interval $\Delta s$ of the interpolation schedule $s=s(t)$, the narrower the corresponding avoided level crossing must be. Therefore, to carry out substantial entanglement restructuring while remaining in an instantaneous eigenstate, the evolution must slow down considerably near the avoided crossing. Intuitively, when two eigenvalues of a time-dependent Hamiltonian closely approach and repel each other, they rapidly exchange their eigenvectors. Thus, if the system occupies one of these instantaneous eigenstates, then near the avoided crossing the system's state is exchanged with the eigenvector associated with the avoided eigenvalue; see \Cref{fig:eigenvector_swap_schematic}. Typically, the exchanged eigenvector carries a different entanglement structure.

\subsection{Analytical Example of Entanglement Restructuring}
We now illustrate how an eigenvector swap can force entanglement restructuring. Consider a $4$-level initial Hamiltonian written in its eigenbasis,
\begin{equation*}
    H_{\text{init}} = \sum_{k=1}^4 d_k \ket{d_k}\bra{d_k},
\end{equation*}
where the eigenvalues $d_k$ are distinct and the eigenvectors $\ket{d_k}$ are normalized. Assume that $\ket{d_1}$ is a product state, while $\ket{d_2}$ is maximally entangled. We introduce a rank-one interaction Hamiltonian,
\begin{equation*}
    H_v(t) = H_{\text{init}} + \mu(t)\ket{v}\bra{v},
\end{equation*}
where $\ket{v}$ is normalized and the schedule $\mu(t)$ is monotonically increasing, with $\mu(0)=0$ and $\mu(t)>0$ for $t>0$.

\begin{figure}[t]
    \centering
    \includegraphics[scale=0.13]{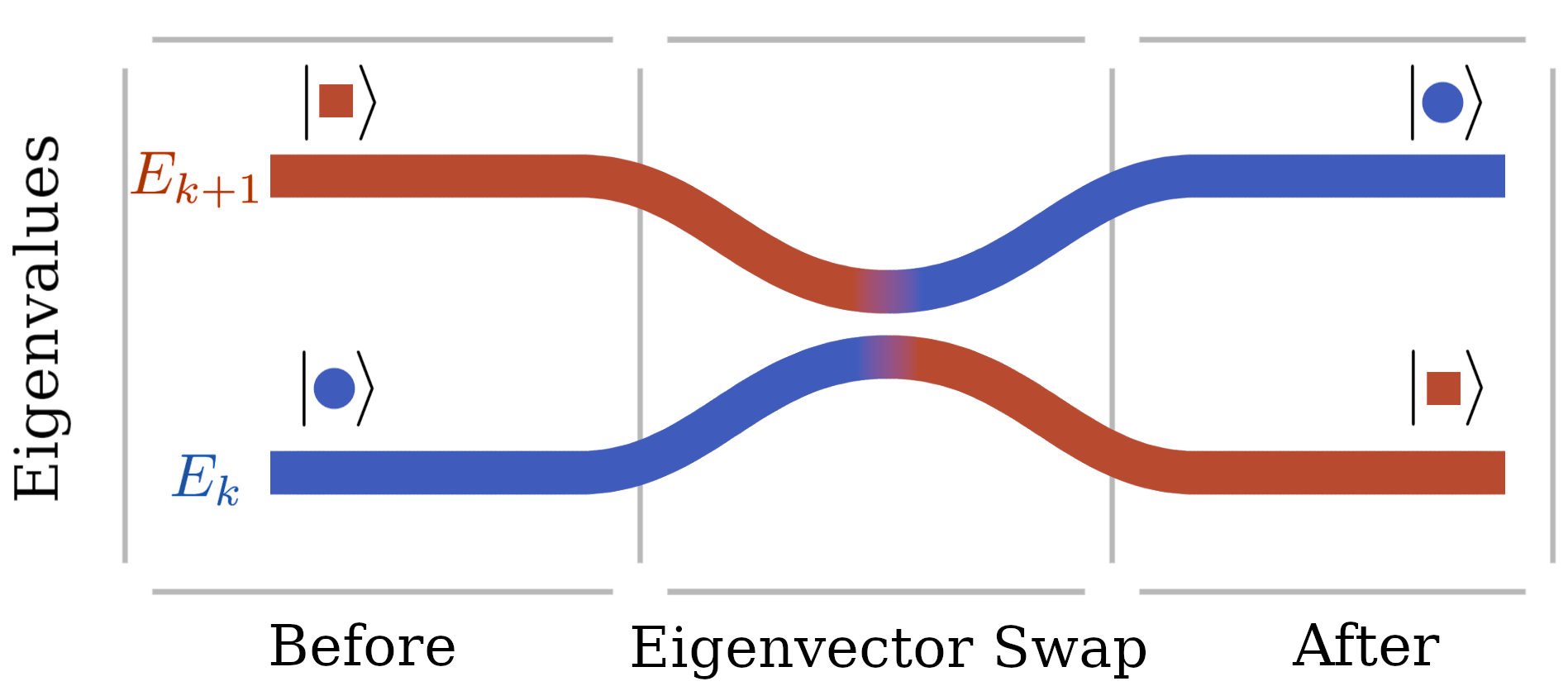}
    \caption{Schematic illustration of an eigenvector swap at a narrowly avoided level crossing. The vertical axis represents eigenvalues, and the horizontal axis represents time. Before the avoided crossing, the eigenvalue trajectory $E_k$ is associated with the blue-circle eigenvector, while $E_{k+1}$ is associated with the brown-square eigenvector. The colour of each eigenvalue trajectory indicates the eigenvector associated with that trajectory. At the avoided level crossing, the two eigenvalues rapidly exchange their eigenvectors (the upper trajectory changes from brown to blue, while the lower trajectory changes from blue to brown). This rapid exchange of eigenvectors is the spectral mechanism underlying entanglement restructuring \cite{gabbassov2025adiabatic}.}
    \label{fig:eigenvector_swap_schematic}
\end{figure}

Suppose the system is initialized in the ground eigenstate $\ket{d_1}$ and is required to remain in the instantaneous ground eigenstate $\ket{E_1(\mu)}$ of $H_v(t)$. That is
\begin{equation*}
    \ket{\psi(0)} = \ket{E_1(0)} = \ket{d_1}, \text{ and } \ket{\psi(\mu)} = \ket{E_1(\mu)}.
\end{equation*}
The largest possible entanglement restructuring in this example occurs if the state evolves from the product state $\ket{d_1}$ to the maximally entangled state $\ket{d_2}$. To see how this can occur, first choose $\ket{v}$ such that
$$|\langle v |d_1 \rangle|^2 = 1.$$
Then the eigenvalue associated with $\ket{d_1}$ becomes $d_1+\mu$, while all other eigenvalues remain fixed. Hence, as $\mu$ increases, this eigenvalue crosses the higher eigenvalues $d_2,d_3,d_4$ exactly, and the corresponding instantaneous eigenvector remains $\ket{d_1}$. Now choose a different $\ket{v}$ so that, for $\epsilon \ll 1$,
\begin{equation*}
    |\langle v| d_1 \rangle|^2 = 1-O\left (\epsilon^2 \right ) \ \text{ and } \ |\langle v |d_j \rangle|^2 = O\left (\epsilon^2 \right ).
\end{equation*}
Order $\epsilon$ components of $\ket{v}$ along eigenvectors $\ket{d_j}$ for $j \neq 1$, turn the exact crossings into narrow $\epsilon$-controlled avoided level crossings; see \Cref{fig:edge_case_cascade}.

In particular, the first such avoided level crossing occurs between the instantaneous eigenvalues $E_1(\mu)$ and $E_2(\mu)$ at some critical value $\mu_{\text{critical}}>0$. At this avoided level crossing, the two eigenvalues rapidly exchange their eigenvectors \cite{gabbassov2025adiabatic}. The exchange is rapid because it occurs over a short interval proportional to $\Delta \mu$. Hence, if the system remains on the instantaneous ground energy level, then after the crossing, its state becomes approximately
\begin{equation}\label{eq:psi_d2}
    \ket{\psi(\mu_{\text{critical}}+\Delta\mu)} \approx \ket{d_2}.
\end{equation}
In the limit $\epsilon\to0$, this approximation becomes exact, while the gap at the avoided level crossing is arbitrarily close to zero. Thus, the state rapidly changes from a product state to a maximally entangled state through an eigenvector swap.

This example shows the trade-off directly. Near the avoided level crossing, the instantaneous ground eigenvector changes from approximately $\ket{d_1}$ to approximately $\ket{d_2}$ over a small interval of $\mu$; hence, by the adiabatic theorem, the evolution must slow down near the crossing if the system's state is to remain aligned with the instantaneous ground eigenstate. Thus, smaller $\epsilon$ produces a narrower avoided crossing and a sharper entanglement restructuring, but also requires slower evolution to remain on the same instantaneous energy level.

Since such avoided crossings will play a central role in the following sections, we give them a special name.
\begin{definition}
    For $\epsilon > 0$, the narrowly avoided level crossings whose gap vanishes (closes) with $\epsilon$ and is bounded below at the order $\epsilon^2$ are called
    \textit{\mbox{$\epsilon$-avoided} level crossings}.
    \hfill $\diamond$
\end{definition}

\subsection{Theorems of Entanglement Restructuring}
We will formalize the example above in the following theorem. The theorem establishes a sufficient condition for an eigenvector swap that is necessary for rapid entanglement restructuring.

\begin{theorem}[Eigenvector Swap \cite{gabbassov2025adiabatic}]\label{theorem:eigenvector_swap}
    Let $D_{\mathrm{init}}$ be a Hamiltonian with eigenvalues
    $$
        d_1 < d_2 < \cdots < d_N,
    $$
    and corresponding normalized eigenvectors $\ket{d_j}$ for $j=1,\dots,N$. Let $\ket{v}$ be normalized, and let $\mu(t)\in[0,\infty)$ be a monotonically increasing schedule function. Define the time-dependent Hamiltonian
    \begin{equation}\label{eq:edge_case_hamiltonian}
        H_v(t) = D_{\mathrm{init}}+\mu(t)\ket{v}\!\bra{v}.
    \end{equation}
    Suppose that for $\epsilon > 0$ and some fixed $k$ and all $j \neq k$,
    \begin{equation}\label{eq:swap_condition}
        |\langle v| d_k \rangle|^2 = 1-O(\epsilon^2) \ \text{ and } \ |\langle v |d_j \rangle|^2 = O\left (\epsilon^2 \right).
    \end{equation}
    Then, in the limit $\epsilon\to0$, the instantaneous eigenvalues
    $E_k(t)$ and $E_{k+1}(t)$ of $H_v(t)$ exchange their eigenvectors at
    the \mbox{$\epsilon$-avoided} level crossing occurring at some \mbox{$\mu(t') > 0$}. That is, for $\Delta \mu > 0$ sufficiently small, and $\epsilon \to 0$, before the \mbox{$\epsilon$-avoided} level crossing, one has
    \begin{equation*}
        \ket{E_k(\mu(t')-\Delta\mu)}=\ket{v}, \quad \ket{E_{k+1}(\mu(t')-\Delta\mu)}=\ket{d_{k+1}},
    \end{equation*}
    whereas after the \mbox{$\epsilon$-avoided} level crossing, one has
    \begin{equation*}
        \ket{E_k(\mu(t')+\Delta\mu)}=\ket{d_{k+1}}, \quad \ket{E_{k+1}(\mu(t')+\Delta\mu)}=\ket{v}.
    \end{equation*}
\end{theorem}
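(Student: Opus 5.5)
The natural tool is the secular equation for a rank-one perturbation. Write $v_j := \langle d_j|v\rangle$, so that $|v_k|^2 = 1-O(\epsilon^2)$ and $|v_j|^2=O(\epsilon^2)$ for $j\neq k$. For $\mu>0$, any eigenvalue $E$ of $H_v$ that is not among the $d_j$ satisfies
\begin{equation*}
    \frac{1}{\mu} \;=\; \sum_{j=1}^N \frac{|v_j|^2}{E-d_j},
\end{equation*}
and the corresponding unnormalized eigenvector is
\begin{equation*}
    \ket{E} \propto (E-D_{\mathrm{init}})^{-1}\ket{v}=\sum_j \frac{v_j}{E-d_j}\ket{d_j}.
\end{equation*}
Two standard facts follow. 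First, interlacing gives $d_j \le E_j(\mu)\le d_{j+1}$. Second, each $E_j(\mu)$ is monotone in $\mu$, with
\begin{equation*}
    \frac{dE_j}{d\mu}=|\langle E_j|v\rangle|^2 .
\end{equation*}
I will take these as the backbone of the argument.

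The first step locates the crossing. In the unperturbed limit $\epsilon=0$ we have $\ket{v}=\ket{d_k}$, so the level $d_k+\mu$ crosses $d_{k+1}$ exactly at $\mu_c=d_{k+1}-d_k>0$. I define $\mu(t')$ as the value of $\mu$ minimizing $E_{k+1}-E_k$, which tends to $\mu_c$ as $\epsilon\to0$. To estimate the gap, restrict to the window $E\approx d_{k+1}$. There the secular equation is dominated by the $j=k$ and $j=k+1$ terms; the remaining terms contribute $O(\epsilon^2)$ divided by distances that stay bounded below. This reduces the problem to an effective $2\times2$ problem in $\mathrm{span}\{\ket{d_k},\ket{d_{k+1}}\}$, namely
\begin{equation*}
    \begin{pmatrix} d_k+\mu|v_k|^2 & \mu v_k\bar v_{k+1}\\ \mu \bar v_k v_{k+1} & d_{k+1}+\mu|v_{k+1}|^2\end{pmatrix},
\end{equation*}
plus $O(\epsilon^2)$ corrections. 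Its off-diagonal element is $O(\epsilon)$, which shows the gap is $O(\epsilon)$ and closes as $\epsilon\to0$, as required for an $\epsilon$-avoided crossing. I will not chase the $\epsilon^2$ lower bound beyond noting that it follows when $v_{k+1}\neq0$.

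The second step identifies the eigenvectors on either side. Fix $\Delta\mu>0$ small but independent of $\epsilon$, and evaluate at $\mu=\mu(t')\mp\Delta\mu$. In the $2\times2$ block, the diagonal separation is then of order $\Delta\mu$ while the coupling is $O(\epsilon)$. Hence the mixing angle is $O(\epsilon/\Delta\mu)\to0$.
\begin{itemize}
\item Before the crossing, the lower level $E_k$ has eigenvector $\ket{d_k}+O(\epsilon)=\ket{v}+O(\epsilon)$, and the upper level $E_{k+1}$ has eigenvector $\ket{d_{k+1}}+O(\epsilon)$.
\item After the crossing the diagonal ordering is reversed. The lower level $E_k$ now carries $\ket{d_{k+1}}$, and the upper level $E_{k+1}$ carries $\ket{v}$.
\end{itemize}
The same conclusion can be read off the formula $\ket{E}\propto\sum_j v_j(E-d_j)^{-1}\ket{d_j}$. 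When $E$ is within $O(\epsilon^2)$ of $d_{k+1}$, the $\ket{d_{k+1}}$ component $v_{k+1}/(E-d_{k+1})$ dominates. When $E$ is near $d_k+\mu$, away from every $d_j$, the $\ket{d_k}$ component dominates. Taking $\epsilon\to0$ gives the stated equalities, which are exact only in that limit.

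The main obstacle is justifying the $2\times2$ reduction uniformly near $\mu_c$. Two things must be controlled. First, the neglected levels $j\neq k,k+1$ must shift the eigenvalues and eigenvectors only by $O(\epsilon^2)$; this requires that $d_{k+1}$ stay separated from the other $d_j$, which strict ordering guarantees. Second, $\Delta\mu$ must be kept fixed as $\epsilon\to0$. I would handle the first point with a Schur-complement or Feshbach projection onto $\mathrm{span}\{\ket{d_k},\ket{d_{k+1}}\}$, bounding the resolvent on the complement by $1/\min_{j\ne k,k+1}|d_{k+1}-d_j|$. A further subtlety is the degenerate case $v_{k+1}=0$, where the crossing is exact. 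In that case I interpret the eigenvector labels by continuity in $\mu$, and the swap statement still holds in the stated limit sense.
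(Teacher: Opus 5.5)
The paper does not prove this theorem; it quotes it from \cite{gabbassov2025adiabatic}. The only argument it gives is the four-level sketch: for $\ket{v}=\ket{d_1}$ the level $d_1+\mu$ crosses the $d_j$ exactly, and $O(\epsilon)$ components of $\ket{v}$ open these crossings into narrow avoided ones. Your proposal is a correct, self-contained, quantitative version of that same picture. The secular equation and interlacing fix which ordered levels are involved. The Schur-complement reduction to $\mathrm{span}\{\ket{d_k},\ket{d_{k+1}}\}$ leaves a coupling $\mu v_k\bar v_{k+1}=O(\epsilon)$ against a detuning $\mu-\mu_c+O(\epsilon^2)$. Holding $\Delta\mu$ fixed, below $\min\{\mu_c,\,d_{k+2}-d_{k+1}\}$, while $\epsilon\to0$ makes the mixing angle $O(\epsilon/\Delta\mu)$, which is the right order of limits. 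To close the indexing step, say explicitly that in this window the secular equation forces $E_j=d_j+O(\epsilon^2)$ for every $j\neq k,k+1$. That is what guarantees the two block eigenvalues are the global $E_k$ and $E_{k+1}$.

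Two of your side remarks need correcting.

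First, the $\epsilon^2$ lower bound in the paper's definition of an $\epsilon$-avoided crossing does not follow from $v_{k+1}\neq0$. Your own reduction gives a minimal gap of $2\mu_c|v_{k+1}|\,(1+O(\epsilon))$. This is $O(\epsilon)$, as you say, but it is $\Omega(\epsilon^2)$ only if $|v_{k+1}|\gtrsim\epsilon^2$. The theorem's hypotheses give only upper bounds on the $|v_j|^2$, so this is an extra assumption to state, not something you can derive.

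Second, in the degenerate case $v_{k+1}=0$, the conclusion holds because $E_k\le E_{k+1}$ are the \emph{ordered} eigenvalues. There $\ket{d_{k+1}}$ is an exact eigenvector for all $\mu$, and the $\ket{v}$-branch passes through it. Labeling by continuity in $\mu$ would instead follow the analytic branches, and under that convention no exchange occurs at all. So continuity is the wrong convention to invoke there. Note also that this case is not an $\epsilon$-avoided crossing in the paper's sense.
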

\Cref{theorem:eigenvector_swap} describes the scenario where a monotonically increasing schedule function $\mu(t)$ introduces the rank-one interaction Hamiltonian $|v\rangle \langle v|$. If $\ket{v}$ is almost parallel to one of the eigenvectors $\ket{d_k}$ of the initial Hamiltonian $D_{\mathrm{init}}$, then at a critical value $\mu(t')$, there will be an \mbox{$\epsilon$-avoided} level crossing where the instantaneous eigenvalues $E_k(\mu)$ and $E_{k+1}(\mu)$ rapidly exchange their eigenvectors. For sufficiently small $\epsilon$, this exchange is approximate; in the limit, it becomes exact.

A remarkable consequence of \Cref{theorem:eigenvector_swap} is that the eigenvector swap mechanism approximately continues at all other energy levels $E_{k+1}, E_{k+2}, \dots, E_{N}$; see \Cref{fig:edge_case_cascade}. As such, the condition \cref{eq:swap_condition} generates a cascade of eigenvalue swaps. We formalize this in the following corollary.

\begin{figure}[t]
    \centering
    \includegraphics[scale=0.15]{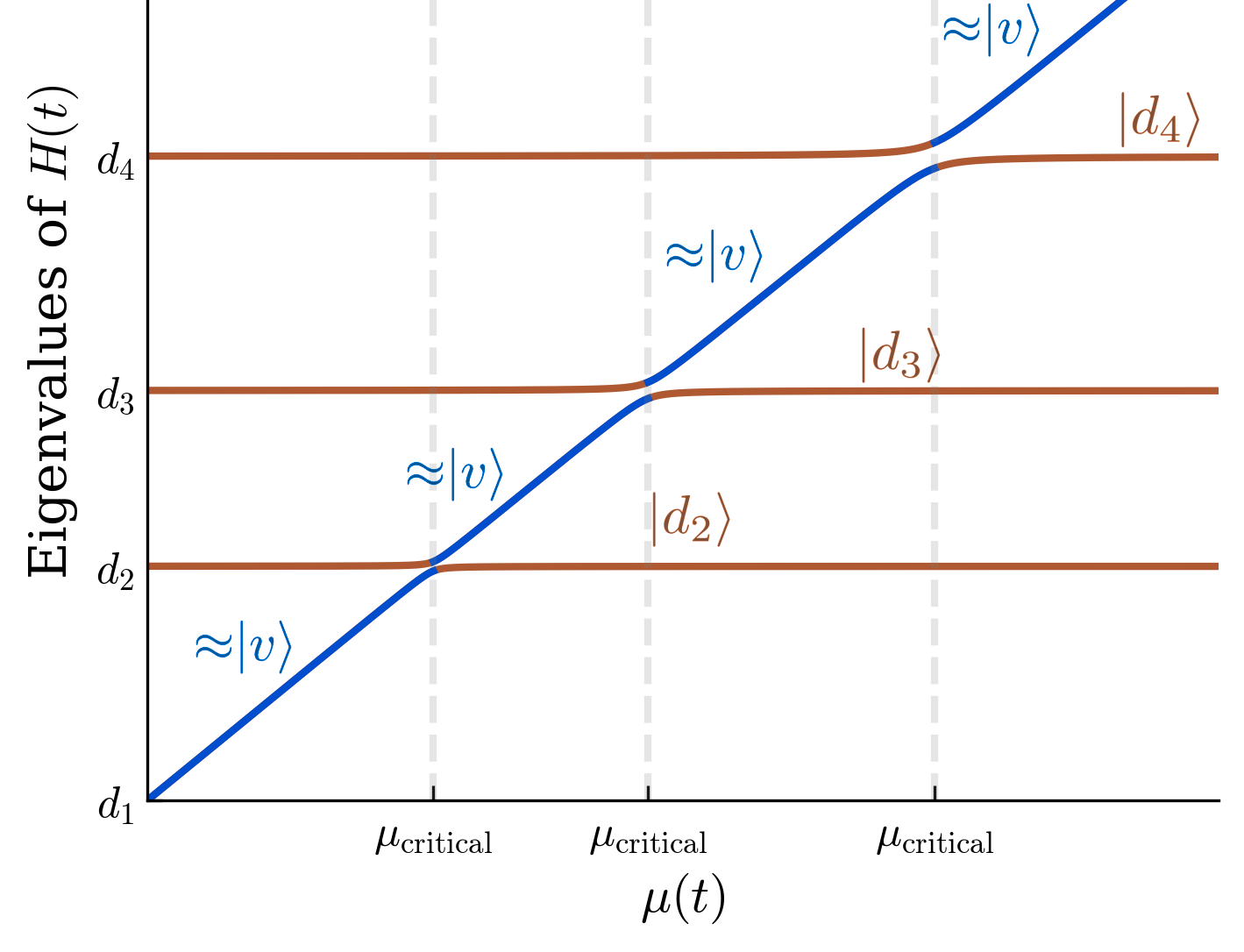}
    \caption{The cascade of eigenvector swaps described in \Cref{corollary:swap_cascade} for finite $\epsilon$. The blue curve denotes the instantaneous eigenvalue whose eigenvector is approximately $\ket{v}$. As the schedule $\mu(t)$ increases, this eigenvalue forms a sequence of \mbox{$\epsilon$-avoided} crossings. At each avoided level crossing, the corresponding instantaneous eigenvectors are approximately exchanged, producing a cascade of eigenvector swaps.}
\label{fig:edge_case_cascade}
\end{figure}

\begin{corollary}[Cascade of Eigenvector Swaps \cite{gabbassov2025adiabatic}]\label{corollary:swap_cascade}
    Under the assumptions of \Cref{theorem:eigenvector_swap}, the eigenvector exchange continues through all higher energy levels. In the limit $\epsilon\to0$, for each $m=k+1,\dots,N-1$, the instantaneous eigenvalues $E_m(\mu)$ and $E_{m+1}(\mu)$ exchange their eigenvectors at the \mbox{$\epsilon$-avoided} level crossing occurring at some $\mu(t_m)>0$. In the same limit, after the final avoided crossing, one has
    $$
    \ket{E_N(\mu)}=\ket{v}.
    $$
\end{corollary}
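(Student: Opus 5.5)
The plan is to reduce the whole cascade to the secular equation of the rank-one perturbation, and to check that each crossing separately meets the hypotheses of \Cref{theorem:eigenvector_swap}. Write $v_j=\langle d_j|v\rangle$. For eigenvalues $E$ of $H_v$ not in $\{d_j\}$, the standard rank-one identity gives
\begin{equation*}
    \frac{1}{\mu}=\sum_{j=1}^N \frac{|v_j|^2}{E-d_j},
\end{equation*}
with eigenvector $\ket{E}\propto\sum_j \frac{v_j}{E-d_j}\ket{d_j}$. This yields the interlacing $d_m\le E_m(\mu)\le d_{m+1}$ for $m<N$ and $E_N(\mu)\ge d_N$. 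Each $E_m(\mu)$ is also monotonically nondecreasing in $\mu$, since $\partial_\mu E_m=|\langle E_m|v\rangle|^2\ge 0$ by Hellmann--Feynman.

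In the limit $\epsilon\to0$, the $\epsilon$-free picture is the exact level crossing discussed just before the theorem. The line $d_k+\mu$ crosses $d_{k+1},\dots,d_N$ at $\mu_m^{(0)}=d_{m}-d_k$. At finite $\epsilon$, each such crossing becomes a two-level problem. Near $\mu\approx d_{m+1}-d_k$, the relevant levels $E_m$ and $E_{m+1}$ lie in the window around $d_{m+1}$. There I would restrict the secular equation to the two dominant poles: the effective ``$v$-level'' at $d_k+\mu\,|v_k|^2+O(\epsilon^2)$, and $d_{m+1}$ with coupling $\mu\, v_k \bar v_{m+1}=O(\mu\epsilon)$. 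All other poles contribute $O(\epsilon^2)$ shifts because the spacings $d_{j+1}-d_j$ are fixed and nonzero.

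This gives an effective $2\times2$ Hamiltonian whose gap at resonance is $2\mu|v_kv_{m+1}|=\Theta(\epsilon)$, in accordance with the definition of an $\epsilon$-avoided crossing. Its eigenvectors rotate from $(\ket{v},\ket{d_{m+1}})$ to $(\ket{d_{m+1}},\ket{v})$ across an interval $\Delta\mu=O(\epsilon)$. This is exactly the local content of \Cref{theorem:eigenvector_swap} applied at level $m$ in place of $k$.

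The induction runs on $m=k,k+1,\dots,N-1$. The hypothesis is that just before the $m$-th crossing $\ket{E_m}\approx\ket{v}$, while the levels $E_{m+1},\dots,E_N$ carry $\ket{d_{m+1}},\dots,\ket{d_N}$ up to $O(\epsilon)$. This follows from the eigenvector formula: away from resonances, $\ket{E_j}$ for $j>m$ has weight $1-O(\epsilon^2)$ on $\ket{d_j}$, since the factor $v_j/(E-d_j)$ is small except at $j$. The swap then moves $\ket{v}$ to $E_{m+1}$ and $\ket{d_{m+1}}$ to $E_m$, which restores the hypothesis at $m+1$. After the last crossing at $\mu\approx d_N-d_k$, $E_N(\mu)\approx d_k+\mu$ grows without bound. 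Its eigenvector satisfies $\ket{E_N}\to\ket{v}$, since for $E\gg d_N$ the components are $\propto v_j(1+O(1/E))$. Together with $\epsilon\to0$ this gives $\ket{E_N(\mu)}=\ket{v}$.

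The main obstacle is uniformity of the two-level reduction. The $O(\epsilon^2)$ contributions from the remaining poles must not shift the resonance into a neighboring crossing or spoil the $\Theta(\epsilon)$ gap. The cleanest way is to take $\epsilon$ smaller than a constant times $\min_j(d_{j+1}-d_j)/(1+\mu_{\max})$, with $\mu_{\max}=d_N-d_k+1$. The crossings are then separated by $\Theta(1)$ windows in $\mu$, and the errors accumulated over the $N-k$ swaps stay $O(\epsilon)$ and vanish in the limit. A second check is that the reduction holds even if some $v_j=0$ exactly. In that case the crossing is exact and the swap is trivially exact, provided levels are labelled by the order of their eigenvalues.
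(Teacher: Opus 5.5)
Your argument is correct, and it is more self-contained than what the paper offers. The paper gives no proof of this corollary. It imports the result from \cite{gabbassov2025adiabatic} and presents it as a consequence of \Cref{theorem:eigenvector_swap}, so the implicit argument is to re-apply the single-swap theorem level by level.

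You instead work directly from the secular equation $1/\mu=\sum_j |v_j|^2/(E-d_j)$ and the resolvent formula $\ket{E}\propto(E-D_{\mathrm{init}})^{-1}\ket{v}$. You reduce each resonance near $\mu\approx d_{m+1}-d_k$ to a two-pole problem and induct on $m=k,\dots,N-1$. Your $m=k$ step re-derives \Cref{theorem:eigenvector_swap} itself, so you do not rely on a theorem the paper also only cites.

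The black-box route would proceed differently. It fixes $\mu_1$ strictly between two consecutive crossings, where $H_v(\mu_1)$ is nondegenerate, and writes $H_v(\mu)=H_v(\mu_1)+(\mu-\mu_1)\ket{v}\bra{v}$. It then applies \Cref{theorem:eigenvector_swap} with $D_{\mathrm{init}}\to H_v(\mu_1)$ and $k\to k+1$. That is shorter, but it requires re-verifying the quantitative hypothesis \eqref{eq:swap_condition} for $H_v(\mu_1)$, namely $|\langle v|E_{k+1}(\mu_1)\rangle|^2=1-O(\epsilon^2)$; the remaining overlaps are then $O(\epsilon^2)$ by normalization. The purely limiting conclusion of \Cref{theorem:eigenvector_swap} does not supply that rate. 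Your eigenvector formula does: away from resonances, the components $v_j/(E-d_j)$ for $j\neq k$ are $O(\epsilon)$ relative to the $k$-th one. So your analysis is exactly what makes the naive iteration rigorous, and it also gives explicit crossing locations, gap sizes and swap widths.

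One minor correction: the gap $2\mu|v_kv_{m+1}|$ is $O(\epsilon)$, not $\Theta(\epsilon)$, because the hypothesis only bounds $|v_{m+1}|^2$ from above. A smaller $|v_{m+1}|$ only narrows the swap window, down to the exact crossing when $v_{m+1}=0$, which you already treat. The conclusion is therefore unaffected.
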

\Cref{corollary:swap_cascade} describes orderly eigenvector exchanges that occur between adjacent instantaneous eigenvalues at \mbox{$\epsilon$-avoided} level crossing. Namely, at the beginning of the evolution, the eigenvalue $E_k(\mu)$ acquires the eigenvector $\ket{v}$. Later at \mbox{$\epsilon$-avoided} level crossing, $E_k(\mu)$ and $E_{k+1}(\mu)$ exchange their eigenvectors, then $E_{k+1}(\mu)$ and $E_{k+2}(\mu)$ exchange their eigenvectors, and this continues until $\ket{v}$ becomes the eigenvector associated with highest energy $E_N(\mu)$.

The corresponding statement for a negative schedule is given in \Cref{corollary:eigenvector_swap_negative_schedule}.
    \begin{corollary}[Negative Schedule]\label{corollary:eigenvector_swap_negative_schedule}
    Under the assumptions of \Cref{theorem:eigenvector_swap}, but with $\mu(t)\in(-\infty,0]$ monotonically decreasing, the eigenvector exchange continues through all lower energy levels. In the limit $\epsilon \to 0$, for each $m=k,k-1,\dots,2$, the instantaneous eigenvalues $E_m(\mu)$ and $E_{m-1}(\mu)$ exchange their eigenvectors at the \mbox{$\epsilon$-avoided} level crossing occurring at some $\mu(t_m)<0$. In the same limit, after the final avoided crossing, one has
    $$
    \ket{E_1(\mu)}=\ket{v}.
    $$
\end{corollary}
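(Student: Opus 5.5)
The plan is to reduce the negative-schedule statement to the already established positive-schedule results, \Cref{theorem:eigenvector_swap} and \Cref{corollary:swap_cascade}, by a global sign flip of the Hamiltonian. Set $\nu(t) := -\mu(t) \in [0,\infty)$, which is monotonically increasing because $\mu$ is monotonically decreasing, and consider
\begin{equation*}
    \tilde H(t) := -H_v(t) = \tilde D_{\mathrm{init}} + \nu(t)\ket{v}\!\bra{v}, \qquad \tilde D_{\mathrm{init}} := -D_{\mathrm{init}}.
\end{equation*}
The operator $\tilde D_{\mathrm{init}}$ has the same eigenvectors $\ket{d_j}$ with eigenvalues $-d_j$. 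Relabelling in increasing order gives $\tilde d_i = -d_{N+1-i}$ with eigenvector $\ket{\tilde d_i} = \ket{d_{N+1-i}}$, and these are still strictly ordered. The overlap hypothesis \cref{eq:swap_condition} only involves $|\langle v|d_j\rangle|^2$, so it transfers verbatim to the index $\tilde k = N+1-k$: $|\langle v|\tilde d_{\tilde k}\rangle|^2 = 1-O(\epsilon^2)$, with all other overlaps $O(\epsilon^2)$. Hence $\tilde H$ satisfies every assumption of \Cref{theorem:eigenvector_swap} and \Cref{corollary:swap_cascade}, with $k$ replaced by $\tilde k$.

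Applying these results to $\tilde H$ gives a cascade. For each $\tilde m = \tilde k,\dots,N-1$, the levels $\tilde E_{\tilde m}$ and $\tilde E_{\tilde m+1}$ exchange eigenvectors at an $\epsilon$-avoided crossing at some $\nu(t_{\tilde m})>0$, and at the end $\ket{\tilde E_N}=\ket{v}$. Here the first step $\tilde m = \tilde k$ is \Cref{theorem:eigenvector_swap} itself and the remaining steps are \Cref{corollary:swap_cascade}.

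The next step is the dictionary back to $H_v$. Since $H_v = -\tilde H$, the spectrum is reflected and the ordering reversed: $E_m(\mu) = -\tilde E_{N+1-m}(\nu)$, with the same eigenvector. The reflection preserves the gap sizes, so each $\epsilon$-avoided crossing of $\tilde H$ is an $\epsilon$-avoided crossing of $H_v$ with the same gap, located at $\mu(t_m) = -\nu(t_{\tilde m}) < 0$. Set $m = N+1-\tilde m$. An exchange between $\tilde E_{\tilde m}$ and $\tilde E_{\tilde m+1}$ is then an exchange between $E_m$ and $E_{m-1}$. As $\tilde m$ runs over $\tilde k,\dots,N-1$, $m$ runs over $k, k-1, \dots, 2$, exactly as claimed. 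The final statement $\ket{\tilde E_N} = \ket{v}$ becomes $\ket{E_1(\mu)} = \ket{v}$. The before/after eigenvector assignments also carry over, with $\ket{d_{k+1}}$ replaced by $\ket{d_{k-1}}$, since "higher" in $\tilde H$ means "lower" in $H_v$.

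I expect the only real obstacle to be bookkeeping rather than analysis. Two points need care. First, the orientation of "before/after" must be tracked: $\nu$ increasing in $t$ is the same as $\mu$ decreasing in $t$, so the time order of the crossings is preserved and no reversal of the time direction is needed. Second, the notion of an $\epsilon$-avoided crossing, with gap vanishing with $\epsilon$ and bounded below at order $\epsilon^2$, must be checked to be invariant under $H\mapsto -H$. This holds because gaps are differences of eigenvalues and are unchanged in absolute value. One should also note that the boundary case $k=1$ gives an empty cascade, with $\ket{E_1}=\ket{v}$ holding from the start. This is consistent with the statement and with the trivial case $k=N$ of \Cref{corollary:swap_cascade}.
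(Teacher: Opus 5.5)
Your proof is correct, and it takes a somewhat different route from the paper, which gives no separate proof of this corollary. The paper states it right after \Cref{corollary:swap_cascade} and only remarks that ``the same mechanism occurs in the opposite direction.'' That leaves implicit a rerun of the eigenvector-swap analysis with $\mu$ decreasing through negative values. There, the rank-one term pushes the $\ket{v}$-like eigenvalue downward, so it meets $d_{k-1},d_{k-2},\dots$ instead of $d_{k+1},d_{k+2},\dots$.

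You replace that re-derivation with an exact symmetry, $H_v\mapsto -H_v$ and $\mu\mapsto\nu=-\mu$. Under it, the hypotheses of \Cref{theorem:eigenvector_swap} reappear with $k\mapsto N+1-k$, and the conclusions transfer back through $E_m=-\tilde E_{N+1-m}$. Your bookkeeping checks out: the range $m=k,\dots,2$, the final $\ket{E_1}=\ket{v}$, the swapped-in vector $\ket{d_{m-1}}$, the preserved time orientation, and the invariance of $\epsilon$-avoided gaps.

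What this buys you is that no new spectral analysis is needed. The statement becomes a formal consequence of the positive-schedule results, including the explicit before/after eigenvector assignments that the paper leaves unstated.

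The trade-off is that your argument is exactly as strong as \Cref{theorem:eigenvector_swap} and \Cref{corollary:swap_cascade}, which this paper imports from \cite{gabbassov2025adiabatic} rather than proving. A direct argument would stand on its own, for example via the secular equation together with the downward interlacing $d_{j-1}\le E_j(\mu)\le d_j$ for $\mu<0$.
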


In that case, the same mechanism occurs in the opposite direction. The eigenvector $\ket{v}$ is successively exchanged through the lower energy levels until, in the limit $\epsilon\to0$, it becomes the eigenvector associated with the ground energy level $E_1(\mu)$.

\noindent\textbf{Discussion.} \Cref{theorem:eigenvector_swap} and \Cref{corollary:swap_cascade} isolate the general and elementary spectral mechanics under the scheduled addition of interaction Hamiltonians. If the system remains in an instantaneous eigenstate, then partial or complete entanglement restructuring requires a corresponding partial or complete sequence of eigenvector swaps. The stronger these swaps are, the narrower the associated avoided level crossings must be. Consequently, implementing substantial entanglement changes requires the evolution to slow down considerably near the avoided level crossings. These insights suggest a strategy for faster evolution: if we can reduce unnecessary entanglement restructurings, we can accelerate the computation.

We will use \Cref{theorem:eigenvector_swap}, \Cref{corollary:swap_cascade}, and the insights developed in this section in the context of penalty-based and penalty-free quantum optimization algorithms.

\begin{figure*}[t]
    \centering
    \includegraphics[scale=.46]{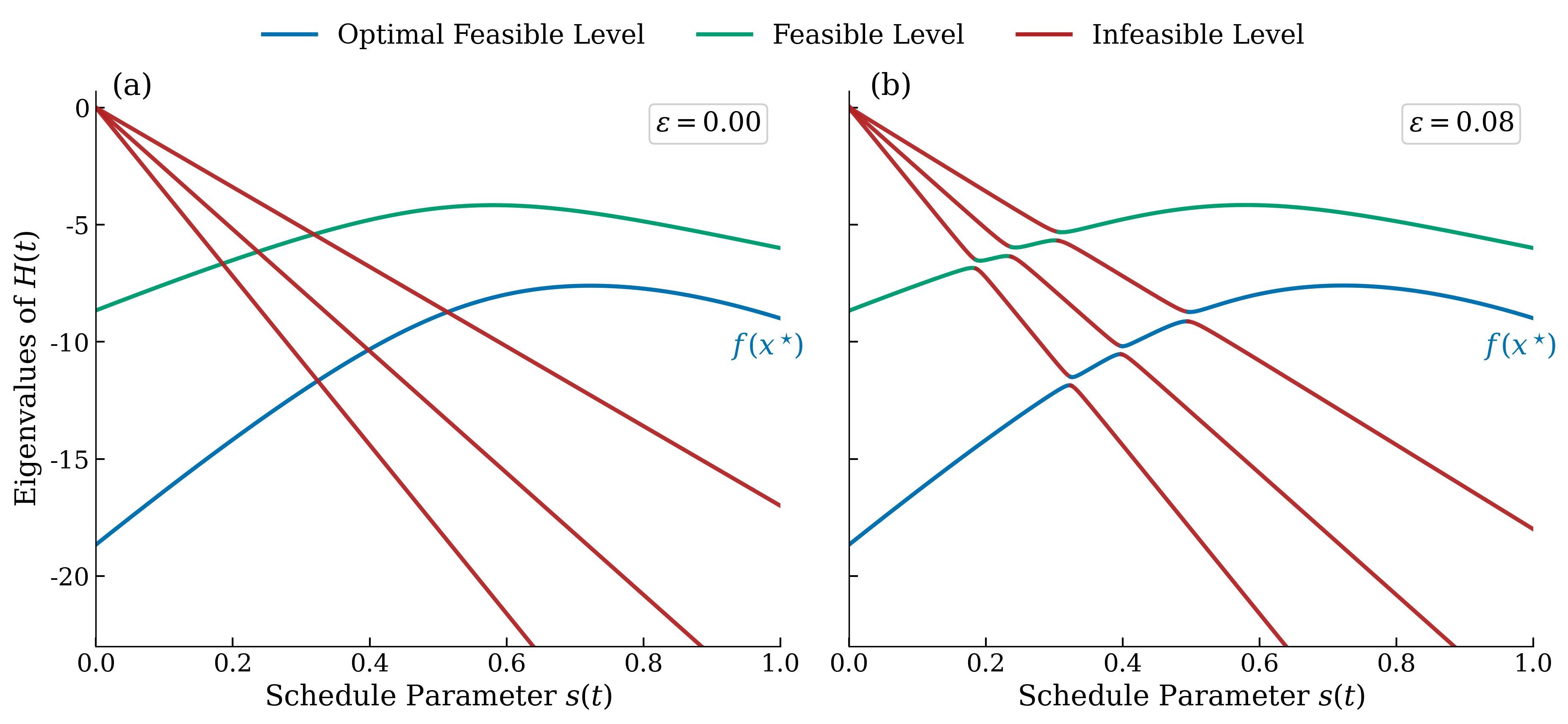}
    \caption{Relevant eigenvalue trajectories of the total Hamiltonian $H(t)$ for a 0--1 IP [see \Cref{ap:kp}]. The blue eigenvalue trajectory starts at the ground energy of the initial Hamiltonian at $s(t)=0$ and terminates at the optimal feasible value $f(x^\star)$ at $s(t)=1$. Hence, it is the desirable evolution trajectory for finding the optimal solution. The green curve is another feasible but suboptimal level. The red curves are lower-energy but infeasible levels. (a) For $\epsilon=0$, feasible and infeasible subspaces are invariant, so the blue trajectory crosses all infeasible levels exactly and reaches $f(x^\star)$ despite zero global gaps. (b) For $\epsilon=0.08$, weak coupling turns these crossings into narrow avoided crossings; in the fast-jump regime, the state jumps across them and again reaches $f(x^\star)$. Note that all infeasible levels are lower energy states of the problem Hamiltonian $H_{\mathrm{p}}$.}
\label{fig:spectral_gaps_knapsack}
\end{figure*}

\section{Quantum Optimization, Spectral Gaps and Entanglement}\label{sec:spectral_gaps}
This section applies the entanglement-dynamics framework to constrained quantum optimization. The central goal is to understand how problem constraints shape entanglement restructuring and how to control it to reduce computational bottlenecks that preclude speedups.

\subsection{Small Spectral Gaps That Enable Fast Evolution}
In this section, we use the eigenvector-swap mechanism of \Cref{theorem:eigenvector_swap} to show that narrow avoided level crossings need not constitute computational bottlenecks. Instead, narrowly avoided level crossings arising from weak coupling between the otherwise invariant subspaces $\mathcal H_{\mathcal F}$ and $\mathcal H_{\mathcal Q}$ can accelerate the computation. At such an avoided crossing, the approaching eigenvalues exchange their eigenvectors. If the goal is to remain on the same instantaneous energy level, the evolution must slow down considerably. Consequently, that level acquires a new eigenvector via the swap, and the system undergoes entanglement restructuring induced by the swapped-in eigenvector. If the goal is to avoid this restructuring, the evolution should instead speed up near the avoided crossing, causing the system to jump to the approaching energy level. Because the eigenvector swap and the dynamical jump occur at the same avoided crossing, the system jumps to the approaching energy level onto which its original eigenvector is swapped. \Cref{fig:eigenvector_swap_schematic} illustrates this behaviour: if the system starts at the level $E_k$ and jumps to the level $E_{k+1}$, it ends up with the original eigenvector. Thus, in this setting, small gaps can be used to avoid restructuring of entanglement and accelerate the computation.

It turns out that the mixer $H^{\epsilon}_{\mathrm{init}}$ in \cref{eq:relaxed_mixer} together with the problem Hamiltonian $H_{\mathrm{p}}$ encoding $f(x)$ in \cref{eq:canonical_ip} realize this \textit{fast-jump} regime. Specifically, we have the following setting:
\begin{enumerate}
    \item \label{item:1} The problem Hamiltonian $H_{\mathrm p}$ encodes the linear objective $f(x)$. As noted in \Cref{observation:infeasible_solution}, the lowest values of $f(x)$ are typically attained by infeasible bit strings. Hence, the ground eigenstate and many lower-energy eigenstates of $H_{\mathrm p}$ lie in $\mathcal H_{\mathcal Q}$, while the optimal feasible state $\ket{x^\star}$ appears at a higher energy level.
    
    \item \label{item:2} For $\epsilon=0$, we have \mbox{$H^{\epsilon}_{\mathrm{init}}=H^{0}_{\mathrm{init}} = H^{\mathrm{feas}}_{\mathrm{init}}$}. Therefore, the feasible and infeasible subspaces $\mathcal H_{\mathcal F}$ and $\mathcal H_{\mathcal Q}$ are invariant under the total Hamiltonian. Due to the invariance, eigenvalues whose eigenvectors are supported on different invariant subspaces cross exactly; see \Cref{fig:spectral_gaps_knapsack} (a).
    
    \item \label{item:3} For $\epsilon=0$, we assume that the restricted feasible Hamiltonian
    $$H_{FF}(t)=P_{\mathcal F}H(t)P_{\mathcal F} \nonumber$$
    has a continuous eigenvalue trajectory that starts from the chosen initial feasible eigenstate and terminates at the optimal feasible state $\ket{x^\star}$ at \mbox{$s(T)=1$}. This is the desired feasible trajectory shown in blue in \Cref{fig:spectral_gaps_knapsack} (a). This assumption holds for mixers that connect all feasible basis states through nonpositive off-diagonal matrix elements; concrete constructions are given in \cite{herman2023constrained}. Also, see examples in \cref{eq:case_study_mixer,eq:xy_mixer}.
    
    \item \label{item:4} For $\epsilon>0$, the terms $\epsilon H^{\mathrm{init}}_{FQ}$ and $\epsilon H^{\mathrm{init}}_{QF}$ weakly couple $\mathcal H_{\mathcal F}$ and $\mathcal H_{\mathcal Q}$. Hence, provided the corresponding coupling matrix elements are nonzero, the exact crossings associated with the $\epsilon=0$ generically open into $\epsilon$-avoided level crossings; see \Cref{fig:spectral_gaps_knapsack} (b). We assume that the two levels forming each $\epsilon$-avoided level crossing remain isolated from all other levels. This is natural when the $\epsilon$-avoided gaps vanish with $\epsilon$, while all other relevant gaps remain unchanged and asymptotically larger. Thus, for sufficiently small $\epsilon$, each such \mbox{$\epsilon$-avoided} crossing is locally a two-level avoided crossing.
\end{enumerate}

Therefore, for $\epsilon >0$, the desired eigenvalue trajectory [\Cref{fig:spectral_gaps_knapsack} (b), blue curve] is no longer continuous due to \mbox{$\epsilon$-avoided} level crossings. Therefore, if the system starts in the ground eigenstate of the initial Hamiltonian, it must jump upward at the \mbox{$\epsilon$-avoided} crossings, from one energy level to the next, until it reaches the level corresponding to the optimal feasible solution $\ket{x^\star}$. By the eigenvector-swap mechanism of \Cref{theorem:eigenvector_swap}, the energy levels exchange their eigenvectors at each such avoided crossing. Hence, when the system jumps to the approaching level, its state remains aligned with the original eigenstate rather than acquiring the swapped-in one. The fast jumps, therefore, allow the system to move upward in energy while avoiding entanglement restructuring

Assuming the setting above [\cref{item:1,item:2,item:3,item:4}], we formalize these dynamics in the following theorem.
\begin{theorem}[Fast Jumps Regime]\label{theorem:leakage_fast_jumps}
    Let $x^\star\in\mathcal F$ be an optimal feasible solution of 0--1 IP:
    \begin{equation*}
    f(x^\star)=\min_{y\in\mathcal F} f(y)
    \end{equation*}
    Let the total time-dependent Hamiltonian be
    \begin{equation}\label{eq:total_hamiltonian_in_theorem}
        H(t) = (1-s(t))H^{\epsilon}_{\mathrm{init}} + s(t) H_{\mathrm{p}}.
    \end{equation}
    For $\epsilon$ sufficiently small, if the system starts in the ground state of $H^\epsilon_{\mathrm{init}}$ and evolves sufficiently fast, then it jumps upward at the \mbox{$\epsilon$-avoided} level crossings, from one instantaneous energy level to the next, while preserving its eigenvector and entanglement structure. Consequently, it reaches an energy level $E_k(t)$ such that
    \begin{equation}
        E_k(T)=f(x^\star),
    \end{equation}
    with eigenvector $\ket{x^\star}$.
\end{theorem}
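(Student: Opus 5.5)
The proof compares the perturbed problem ($\epsilon>0$) with the decoupled problem at $\epsilon=0$ and transfers the conclusion through a local two-level analysis at each crossing.

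\textbf{Step 1: the case $\epsilon=0$.} By \cref{item:2}, $H(t)$ is block diagonal with respect to $\mathcal H=\mathcal H_{\mathcal F}\oplus\mathcal H_{\mathcal Q}$. Its spectrum is therefore the union of the spectra of $H_{FF}(t)$ and $H_{QQ}(t)$. By \cref{item:3}, $H_{FF}(t)$ has a continuous trajectory $\lambda_F(t)$ with eigenvector $\ket{\phi_F(t)}\in\mathcal H_{\mathcal F}$. It starts at the ground state of $H^{\mathrm{feas}}_{\mathrm{init}}$ and ends at $\ket{x^\star}$ with $\lambda_F(T)=f(x^\star)$. Since $H_{\mathrm p}$ is diagonal, the eigenvectors of $H(T)$ are computational basis states. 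By \cref{item:1}, a number $k-1$ of infeasible levels lie strictly below $f(x^\star)$; counting degeneracies appropriately, this defines the index $k$. Because the blocks are decoupled, $\lambda_F(t)$ crosses each infeasible trajectory $\lambda_Q^{(j)}(t)$ exactly at finitely many points $t_1<\dots<t_{k-1}$. Generically these crossings are isolated and transversal, since the levels are analytic in $s$. The ordered label of $\lambda_F$ therefore climbs from $E_1$ at $t=0$ to $E_k$ at $t=T$, increasing by one at each $t_j$.

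\textbf{Step 2: perturbation to $\epsilon>0$.} The off-diagonal blocks $\epsilon(1-s)(H^{\mathrm{init}}_{FQ}+H^{\mathrm{init}}_{QF})$ are $O(\epsilon)$, and the $QQ$ correction is $O(\epsilon^2)$. Away from the $t_j$, analytic perturbation theory with a gap of order one (assumed by \cref{item:4}) shows that the perturbed eigenvector of the feasible level is $\ket{\phi_F(t)}+O(\epsilon)$. At $t=0$ this means the ground state of $H^\epsilon_{\mathrm{init}}$ is $\epsilon$-close to $\ket{\phi_F(0)}$. Near each $t_j$, \cref{item:4} lets us restrict to the two-dimensional span of $\ket{\phi_F}$ and $\ket{\phi_Q^{(j)}}$. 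There we obtain a Landau--Zener Hamiltonian with slope difference $\alpha_j=|\dot\lambda_F-\dot\lambda_Q^{(j)}|>0$ and coupling $g_j=O(\epsilon)$, giving an $\epsilon$-avoided gap of order $2g_j$. This is exactly the mechanism of \Cref{theorem:eigenvector_swap}: across a window of width $O(\epsilon)$ in $s$, the adiabatic eigenvectors of $E_{j}$ and $E_{j+1}$ swap. Before the crossing, $E_j$ carries approximately $\ket{\phi_F}$; after it, $E_{j+1}$ does.

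\textbf{Step 3: the fast-jump estimate.} In the diabatic basis, the probability of following the adiabatic level through crossing $j$ is $1-e^{-2\pi g_j^2/(\hbar\alpha_j \dot s)}$. Equivalently, the probability of staying on the diabatic state $\ket{\phi_F}$ is $P_j=e^{-2\pi g_j^2/(\alpha_j\dot s)}=e^{-O(\epsilon^2 T)}$. If the evolution is fast on the crossing scale, $T\ll \epsilon^{-2}$, yet slow on the $O(1)$ scale of the other gaps, then each $P_j\ge 1-O(\epsilon^2T)$. Between crossings, the ordinary adiabatic theorem keeps the state on the diabatic feasible branch with error $O(1/T)$. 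Taking a union bound over the $k-1$ crossings, the final state has overlap $1-O((k-1)\epsilon^2T)-O(1/T)-O(\epsilon)$ with $\ket{x^\star}$. It occupies level $E_k(T)=f(x^\star)$, and its eigenvector, and hence its entanglement structure, is at each stage the continued feasible eigenvector rather than the swapped-in infeasible one. Letting $\epsilon\to0$ with a window such as $T=\epsilon^{-1}$ makes all errors vanish. In this limit the jumps are exact, matching the "$\epsilon$ sufficiently small" qualifier.

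\textbf{Main obstacle.} The hard step is making the intermediate regime rigorous: we need $T$ large relative to the $O(1)$ gaps but small relative to $\epsilon^{-2}$, while controlling the error from the two-level reduction. Several crossings may also come close together, or the slopes may fail to be transversal, i.e.\ $\alpha_j\to0$. My first approach would be to treat $\epsilon$-avoided crossings with $\alpha_j$ bounded below as a standing genericity assumption, absorbed into \cref{item:4}. I would then apply an adiabatic theorem with a gap condition on the block-projected Hamiltonian, keeping the pair $\{\phi_F,\phi_Q^{(j)}\}$ together as a spectral patch, and combine it with the Landau--Zener estimate inside each patch.
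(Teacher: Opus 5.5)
Your proof is correct at the paper's level of rigor, given the genericity assumptions the paper builds into items 1--4. It reaches the conclusion by a different mechanism from the paper's.

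\textbf{The paper's route.} The paper never sets up a two-level Landau--Zener problem. Instead it shifts $f$ so that $f<0$ and rescales $H(t)$ by $(1-s)^{-1}$, giving $\widehat H(t)=H^{\epsilon}_{\mathrm{init}}+\mu(t)\hat f$ with $\mu=s/(1-s)$. Near each crossing it freezes every diagonal entry except that of the infeasible state, writing $\widehat H(t)=D_{\mathrm{init}}(t')+\mu_z(t)\ket{z}\bra{z}+R_z(t)$ with $R_z=O(\Delta\mu)$. Because $H^{0}_{\mathrm{init}}\ket{z}=0$, $D_{\mathrm{init}}(t')$ has an eigenvector $\epsilon$-close to $\ket{z}$. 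The negative-schedule corollary of \Cref{theorem:eigenvector_swap} then applies with $\ket{v}=\ket{z}$. The jump itself is simply asserted: the state is said to change negligibly across the narrow window. What this buys is a literal identification of the eigenvector a level would acquire with the swapped-in $\ket{v}$ of the rank-one theorem, which underwrites the paper's entanglement-restructuring language.

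\textbf{What your route buys.} In your approach the swap is a by-product of diagonalizing the $2\times2$ block. You gain quantitative content the paper's proof lacks:
\begin{itemize}
\item You identify $k$ as one plus the number of infeasible levels below $f(x^\star)$; the paper only says ``for some $k$''.
\item Your slope difference $\alpha_j$ accounts for the motion of both levels. The paper relegates the feasible level's motion to $R_z$, which is of the same order $O(\Delta\mu)$ as the driving term $\mu_z(t)-\mu_z(t')$, yet calls it negligible.
\item Your error $1-O(k\epsilon^2T)-O(1/T)-O(\epsilon)$ establishes, inside the proof, the window $\Delta^{-2}\ll T\ll\epsilon^{-2}$. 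The paper obtains that window only afterwards, in \Cref{corollary:fast_evolution} and the discussion following it, from the heuristic $T\gg g^{-2}$. Your bound also covers adiabatic following between crossings, which the paper leaves implicit.
\end{itemize}

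\textbf{Two small remarks.} First, it is not automatic that each line $sf(z)$ with $f(z)<f(x^\star)$ is crossed exactly once. This holds when $\lambda_F$ is the ground level of $H_{FF}$, because then $\lambda_F(s)-sf(z)$ is concave; in general only the net number of crossings fixes $k$. Second, your ``main obstacle'' can be bypassed for the qualitative statement. Let $H_\epsilon(t)$ and $U_\epsilon$ be the Hamiltonian and propagator at leakage $\epsilon$. Duhamel's formula gives $\|U_\epsilon(T)-U_0(T)\|\le\int_0^T\|H_\epsilon(t)-H_0(t)\|\,dt=O(\epsilon T)$. The $\epsilon=0$ evolution from the $O(\epsilon)$-close initial state $\ket{\phi_F(0)}$ stays in $\mathcal H_{\mathcal F}$, where the adiabatic theorem for $H_{FF}$ applies. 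Together these already prove the theorem for $\Delta^{-2}\ll T\ll\epsilon^{-1}$, with the state uniformly close to $\ket{\phi_F(t)}$. Your Landau--Zener (or stationary-phase) estimate is needed only to enlarge the window to $T\ll\epsilon^{-2}$.
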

\begin{proof}
    Let us write down the IP:
    \begin{equation}\label{eq:canonical_ip_2}
        \begin{aligned}
            \mathrm{min } \quad & c^{\mathsf T} x \\
            \mathrm{subject\ to } \quad & Ax \geq b\\
            & \ \ x \in \{0,1\}^n
        \end{aligned}
    \end{equation}
    Assume, without loss of generality, that $f(x)<0$ for every $x\in\{0,1\}^n$ and $x^{\star}$ is an optimal solution. The negative $f(x)$ can be achieved by a scalar offset which does not affect the dynamics. Assume \mbox{$z \notin \mathcal{F}$} be an infeasible solution such $f(z) < f(x^{\star})$. 

    First, consider the case $\epsilon=0$. Then 
    \begin{equation}
        H^{\epsilon}_{\mathrm{init}} = H^{0}_{\mathrm{init}} = H^{\mathrm{feas}}_{\mathrm{init}},
    \end{equation}
    where $H^{\mathrm{feas}}_{\mathrm{init}}$ is feasibility-preserving mixer
    defined in \cref{eq:feasible_subspace_mixer,eq:mixer_kernel}. By construction, $\ket{z}$ is an excited eigenstate of the mixer,
    \begin{equation*}
        H^{0}_{\mathrm{init}}\ket{z} = 0\ket{z}.
    \end{equation*}
    It follows that $\ket{z}$ is an instantaneous eigenvector of the total Hamiltonian with the eigenvalue $E_z(t) = s(t)f(z)$,
    \begin{align}\label{eq:z_is_eigenvec}
        H(t)\ket{z} = s(t)H_{\mathrm{p}}\ket{z} = s(t)f(z)\ket{z}.
    \end{align}
    Let $E_{\mathcal F}(t)$ be the eigenvalue of $H(t)$ that starts at the ground energy of $H_{\mathrm{init}}^0$ and terminates at $E_{\mathcal F}(T)=f(x^\star)$. Since $f(z)< f(x^\star)$, $E_z(t)$ starts above $E_{\mathcal F}(t)$ and terminates below it. That is,
    \begin{equation*}
        E_{\mathcal F}(0)<E_z(0)=0, 
    \end{equation*}
    and
    \begin{equation*}
        f(z) = E_z(T) < E_{\mathcal F}(T) = f(x^\star).
    \end{equation*}
    Therefore, $E_z(t)$ and $E_{\mathcal F}(t)$ have an exact level crossing. Crossing follows from subspace invariance,
    \begin{equation*}
        H(t) \mathcal{H}_{\mathcal{Q}} \subseteq \mathcal{H}_{\mathcal{Q}}.
    \end{equation*}
    For $\epsilon>0$, the off-diagonal terms $\epsilon H^{\mathrm{init}}_{FQ}$ and $\epsilon H^{\mathrm{init}}_{QF}$ break this invariance and weakly couple the two subspaces. Hence, the corresponding exact crossings generically open into $\epsilon$-avoided level crossings.
    
    Our goal is to relate the proof to \Cref{theorem:eigenvector_swap} locally near $\epsilon$-avoided level crossing. Then all statements of entanglement restructuring follow naturally. To this end, for $s(t)<1$, rescale the Hamiltonian $H(t)$ by the positive factor $(1-s(t))^{-1}$ and rewrite $H_{\mathrm{p}}$ in its eigendecomposition. This yields
    \begin{align*}
        \widehat H(t) &= H_{\mathrm{init}}^\epsilon + \sum_{x \in \{0,1\}^n} \mu_x(t)\ket{x}\bra{x}, \\
        \mu_x(t) &= \mu(t) f(x), \quad  \mu(t) = \frac{s(t)}{1-s(t)}.
    \end{align*}
    The rescaling preserves the instantaneous eigenvectors and their energy ordering. Moreover, since $f(x)<0$, every $\mu_x(t)$ is negative and monotonically decreasing.
    
    Fix a relevant infeasible solution $z$ and set
    \begin{equation*}
        \ket v=\ket z.
    \end{equation*}
    Let $t'$ be the time at which, for $\epsilon=0$, the eigenvalue associated with $\ket z$ crosses the feasible eigenvalue being followed. Near this crossing, we write
    \begin{equation}\label{eq:local_rank_one_fast_jump}
        \widehat H(t) = D_{\mathrm{init}}(t') + \mu_z(t)\ket v \bra v + R_z(t),
    \end{equation}
    where
    \begin{equation}\label{eq:local_D_init_fast_jump}
    D_{\mathrm{init}}(t')
    = H_{\mathrm{init}}^\epsilon +\sum_{x\neq z}\mu_x(t')|x \rangle \langle x|,
    \end{equation}
    and
    \begin{equation}
        \begin{aligned}
            R_z(t) &= \sum_{x\neq z} \bigl(\mu_x(t)-\mu_x(t')\bigr)|x \rangle \langle x| \\
            &= \sum_{x\neq z} \bigl(\mu(t)-\mu(t')\bigr) f(x) |x \rangle \langle x| \\
            &=  \sum_{x\neq z} \Delta \mu(t) f(x)|x \rangle \langle x| \\
            & = \Delta \mu(t) \sum_{x\neq z} f(x)|x \rangle \langle x| = O(\Delta \mu(t)).
        \end{aligned}
    \end{equation}
    The first two terms in \cref{eq:local_rank_one_fast_jump} have precisely the rank-one form required by \Cref{theorem:eigenvector_swap}. For $\Delta \mu(t) = \mu(t)-\mu(t')$ small, $R_z(t)$ is also small. Thus, in a sufficiently small neighbourhood of the avoided crossing, the leading local dynamics are governed by the rank-one Hamiltonian
    $$D_{\mathrm{init}}(t')+\mu_z(t)| v\rangle \langle v|,$$
    while $R_z(t)$ gives perturbative corrections controlled by the size of the neighbourhood.
    
    Let \mbox{$d_1<d_2<\cdots<d_N$} be the eigenvalues of $D_{\mathrm{init}}(t')$, with normalized eigenvectors $\ket{d_i}$. For sufficiently small $\epsilon>0$, and assuming that the relevant weak-coupling is nonzero, there is an eigenvector $\ket{d_j}$ such that
    \begin{equation*}
        |\langle v|d_j\rangle|^2=1-O \left ( \epsilon^2 \right ),\ \ |\langle v | d_i \rangle|^2 = O \left ( \epsilon^2 \right ), \ \text{for every }i\neq j.
    \end{equation*}
    Eigenvector $\ket{d_j}$ exists by the construction of the mixer $H^{\epsilon}_{\mathrm{init}}$. That is, one can show that
    \begin{equation}
        \ket{d_j} = \alpha \ket{v} + \beta \ket{\zeta}, \text{ with } \ket{\zeta} \perp \ket{v},
    \end{equation}
    and \mbox{$|\alpha|^2 = 1 - O(\epsilon^2)$}. Note that for \mbox{$\epsilon = 0$}, we have \mbox{$\ket{d_j} = \ket{v}$} and hence \mbox{$|\alpha|^2 = 1$}. This is because, by construction, \mbox{$H^{0}_{\mathrm{init}}\ket{v} = 0 \ket{v}$}. Thus, up to perturbative corrections due to $R_z(t)$, the local Hamiltonian near the crossing has the rank-one form required by \Cref{theorem:eigenvector_swap}, with \mbox{$\ket v=\ket z$} and schedule $\mu_z(t)$. Because \mbox{$\mu_z(t)<0$} is monotonically decreasing, \Cref{corollary:eigenvector_swap_negative_schedule} applies locally. Therefore, near this isolated \mbox{$\epsilon$-avoided} level crossing, the eigenvector $\ket v$ is exchanged with the approaching lower energy level. The exchange occurs over the short interval of $2 \Delta \mu$; hence, $R_z(t)$ is negligible around the avoided level crossing.
    
    Consider one such \mbox{$\epsilon$-avoided} level crossing between the adjacent instantaneous eigenvalues $E_{m-1}$ and $E_m$, occurring near $\mu_z(t')<0$. Since $\mu_z(t)$ is decreasing, $\mu_z(t')+\Delta\mu$ denotes a value immediately before the \mbox{$\epsilon$-avoided} level crossing, while $\mu_z(t')-\Delta\mu$ denotes a value immediately after it. In the limit $\epsilon \to 0$, the negative-schedule eigenvector swap gives (up to phases)
    \begin{equation}\label{eq:negative_schedule_swap_fast_jump}
        \ket{E_{m-1}\bigl(\mu_z(t')+\Delta\mu\bigr)} = \ket{E_m\bigl(\mu_z(t')-\Delta\mu\bigr)}.
    \end{equation}
    In the fast-jump regime, the system's state $\ket{\psi(\mu)}$ jumps from the level $m-1$ to $m$, but changes negligibly over the narrow interval of the avoided level crossing. That is, for $\epsilon \to 0$, we have:
    \begin{align*}
        &\text{Before: } \ket{\psi\bigl(\mu_z(t')+\Delta\mu\bigr)} = \ket{E_{m-1}\bigl(\mu_z(t')+\Delta\mu\bigr)},\\
        &\text{After: } \ket{\psi\bigl(\mu_z(t')-\Delta\mu\bigr)} = \ket{E_m\bigl(\mu_z(t')-\Delta\mu\bigr)}.
    \end{align*}
    The change is negligible because of \cref{eq:negative_schedule_swap_fast_jump}. Hence, after the avoided level crossing, at $\epsilon \to 0$, the later system state is
    \begin{equation*}
        \ket{\psi\bigl(\mu_z(t')-\Delta\mu\bigr)} = \ket{E_{m-1}\bigl(\mu_z(t')+\Delta\mu\bigr)}.
    \end{equation*}
    Because the jump and the eigenvector swap occur simultaneously, the system retains its original eigenvector and the associated entanglement structure.
    
    Repeating this local argument for every isolated infeasible level terminating below $f(x^\star)$ keeps the system's state on the feasible eigenvector trajectory while increasing only its energy-level index. At $t=T$, this trajectory terminates at the lowest feasible eigenvector of $H_{\mathrm p}$, namely $\ket{x^\star}$. Consequently, for $\epsilon \to 0$, for some $k$, we have
    \begin{equation*}
        E_k(T)=f(x^\star), \qquad \ket{E_k(T)}=\ket{x^\star}.
    \end{equation*}
    We also note that the perturbative error due to $R_z(t) = O(\Delta \mu)$ vanishes with $\epsilon$ because the interval $\Delta \mu$ can be made progressively smaller with $\epsilon$.
\end{proof}

\Cref{theorem:leakage_fast_jumps} states that the desired energy level can be reached rapidly without significant restructuring of entanglement. We now estimate how short the total evolution time $T$ must be for the system to jump the $\epsilon$-avoided crossings.
\begin{corollary}[Sufficient Fast-Jump Condition]\label{corollary:fast_evolution}
    Assume a linear schedule, $s(t)=t/T$, where $T$ is the total duration. Let $g(\epsilon)$ denote the largest actual gap among $\epsilon$-avoided level crossings. Suppose that this gap is bounded above at order $\epsilon$. Then a sufficient condition for diabatic passage through the $\epsilon$-avoided level crossings is
    \begin{equation}
        T\ll \epsilon^{-2}.
    \end{equation}
\end{corollary}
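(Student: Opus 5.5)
The plan is to reduce each $\epsilon$-avoided level crossing to an isolated two-level Landau--Zener problem and then require the Landau--Zener adiabaticity parameter to be small. By \cref{item:4}, for sufficiently small $\epsilon$ each $\epsilon$-avoided crossing is locally a two-level avoided crossing, isolated from all other levels. Near the crossing time $t_c$ I will therefore project $H(t)$ onto the two-dimensional subspace spanned by the crossing diabatic states: the feasible eigenvector $\ket{\phi_{\mathcal F}}$ being followed, and the infeasible eigenvector that is approximately $\ket{z}$. I will write the restricted Hamiltonian as
\begin{equation*}
H_{2}(t)\approx \begin{pmatrix} E_{\mathcal F}(t) & \tfrac{g}{2}\\ \tfrac{g}{2} & E_z(t)\end{pmatrix},
\end{equation*}
where $g=g(\epsilon)$ is the gap at the crossing. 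The coupling is generated by the $\epsilon H^{\mathrm{init}}_{FQ}+\epsilon H^{\mathrm{init}}_{QF}$ terms of \cref{eq:relaxed_mixer_extended}, multiplied by $(1-s(t_c))$. Any $\epsilon^2 H^{\mathrm{init}}_{QQ}$ contribution only shifts the diagonal entries at order $\epsilon^2$, so it does not alter the structure.

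Next I will compute the diabatic slope difference. Since $s=t/T$, we have $\tfrac{d}{dt}=\tfrac1T\tfrac{d}{ds}$, which gives
\begin{equation*}
\alpha:=\Bigl|\tfrac{d}{dt}\bigl(E_{\mathcal F}-E_z\bigr)\Bigr|=\frac{1}{T}\Bigl|\tfrac{d}{ds}\bigl(E_{\mathcal F}-E_z\bigr)\Bigr|.
\end{equation*}
By \cref{eq:z_is_eigenvec}, $E_z(s)=s f(z)$ at $\epsilon=0$. The feasible level $E_{\mathcal F}(s)$ is a fixed smooth eigenvalue branch of $H_{FF}$ (\cref{item:3}), and it is independent of $\epsilon$ at leading order. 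Its $s$-derivative difference with $E_z$ at the crossing is therefore some instance-dependent constant $\kappa>0$, which is nonzero because the crossing is transversal. This gives $\alpha=\kappa/T$.

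With these ingredients, the Landau--Zener formula gives the probability of staying on the adiabatic level, that is, of undergoing the eigenvector swap, as $1-P_{\mathrm{jump}}$, where
\begin{equation*}
P_{\mathrm{jump}}=\exp\!\Bigl(-\frac{\pi g^2}{2\hbar\alpha}\Bigr)=\exp\!\Bigl(-\frac{\pi g^2 T}{2\hbar\kappa}\Bigr).
\end{equation*}
Diabatic passage requires $g^2T\ll\kappa$. The hypothesis that the largest gap satisfies $g(\epsilon)=O(\epsilon)$ gives $g^2\lesssim C\epsilon^2$ uniformly over all crossings. The condition $T\ll\epsilon^{-2}$ then makes every exponent small, so $P_{\mathrm{jump}}\approx 1$ at each crossing. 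Combining this with \Cref{theorem:leakage_fast_jumps}, which says that each jump preserves the eigenvector, yields the claim.

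Finally, I will bound the accumulated error. The number of relevant crossings is at most the number $K$ of infeasible $z$ with $f(z)<f(x^\star)$, so the total swap probability is at most $K\pi C\epsilon^2T/(2\kappa_{\min})$. This quantity stays small under $T\ll\epsilon^{-2}$, with the constants absorbed into the $\ll$.

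The main obstacle is justifying the Landau--Zener idealization. The formula assumes linear diabatic energies and a constant coupling over infinite time, whereas here the crossing is embedded in a finite, curved schedule and other levels are present. To handle this, I would rely on the isolation assumption in \cref{item:4} and on the fact that the transition region has width $\sim g/\alpha=O(\epsilon T)$ in time. Over that window the diabatic energies are linear up to $O(\epsilon^2)$ corrections, and the remaining levels are separated by an $\epsilon$-independent gap, so by standard adiabatic estimates their contribution is subleading. A secondary subtlety is ensuring that $\kappa_{\min}$ stays bounded away from zero, which excludes tangential crossings; I would state this as a genericity assumption.
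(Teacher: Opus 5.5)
Your argument is correct. It ends at the same final step as the paper: bound the largest gap by $C\epsilon$ and invert to get $T\ll\epsilon^{-2}$. The difference is how you justify the intermediate criterion $T\ll g(\epsilon)^{-2}$.

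The paper's proof is two lines. It takes $T\gg g(\epsilon)^{-2}$ as the adiabatic condition, declares the fast-jump regime to be ``the opposite limit'' $T\ll g(\epsilon)^{-2}$, and notes $C^{-2}\epsilon^{-2}<g(\epsilon)^{-2}$.

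Your Landau--Zener reduction supplies three things the paper leaves implicit:
\begin{itemize}
\item It actually establishes diabatic passage with an explicit probability. Negating a sufficient condition for adiabatic following does not by itself imply a jump.
\item It exposes the hidden constant, the diabatic slope difference $\kappa$. The true condition is then $g^2T\ll\kappa$, which requires transversal crossings.
\item Your union bound controls the accumulated swap probability over the $K$ crossings.
\end{itemize}
The paper's route buys brevity and avoids committing to the two-level, linear-sweep idealization.

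One correction to your obstacle paragraph. The width $g/\alpha$ is the transition window in the \emph{adiabatic} regime. In the diabatic regime you are working in ($g^2T\ll\kappa$), the relevant window is $\sim\sqrt{\hbar/\alpha}$, i.e.\ $\delta s\sim(\kappa T)^{-1/2}$. This is wider than $g/\alpha$, so linearizing the diabatic energies there needs $T\gg 1/\kappa$. That condition does hold in the paper's operating window $\Delta^{-2}\ll T$.

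A cleaner justification is to skip the infinite-time Landau--Zener idealization altogether. First-order time-dependent perturbation theory with stationary phase at the crossing directly gives the non-jump probability $\approx 2\pi (g/2)^2/(\hbar\alpha)=\pi g^2T/(2\hbar\kappa)$. This also makes your error accounting robust to curvature of the schedule and to the finite evolution interval.
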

\begin{proof}
    Adiabatic passage, corresponding to remaining on the same instantaneous energy level, requires
    \begin{equation}
    T \gg g(\epsilon)^{-2}.
    \end{equation}
    The fast-jump regime, corresponding to diabatic passage through the weak avoided crossing, is the opposite limit,
    \begin{equation}
    T \ll g(\epsilon)^{-2}.
    \end{equation}
    Since the gap is bounded above at order $\epsilon$, there exists a constant $C$ such that
    $$g(\epsilon) < C\epsilon.$$
    It follows that,
    $$C^{-2} \epsilon^{-2} < g(\epsilon)^{-2}.$$
    Therefore, the stronger sufficient condition for a diabatic passage through the $\epsilon$-avoided level crossings is
    \begin{equation}
        T\ll \epsilon^{-2}.
    \end{equation}
    If a particular avoided-crossing gap is smaller than order $\epsilon$, the admissible fast-jump time can be longer; nevertheless, \mbox{$T\ll\epsilon^{-2}$} still enforces diabatic passage.
\end{proof}

Not every avoided crossing should be jumped. Once the state has reached the energy level $E_k(t)$ that terminates at the feasible optimum, it must remain on this level. Let $\Delta$ denote the minimum gap the system must not jump. By the adiabatic theorem, remaining on the same instantaneous energy level requires
\begin{equation}
\Delta^{-2} \ll T.
\end{equation}
On the other hand, the weak-coupling avoided crossings require
\begin{equation}
T \ll \epsilon^{-2}.
\end{equation}
If $\epsilon \ll \Delta$, these two requirements are compatible, and one may choose
\begin{equation}
\Delta^{-2} \ll T \ll \epsilon^{-2}.
\end{equation}
In this regime, the evolution is slow enough to remain on the same instantaneous energy level across avoided crossings with gaps at least $\Delta$, but fast enough to jump the \mbox{$\epsilon$-avoided} crossings.

\subsection{Zero Spectral Gap Is Not a Bottleneck}\label{sec:zero_gap_not_bottleneck}
In the previous section, we showed that the narrow avoided-level crossings need not slow the evolution and can facilitate faster dynamics. In this section, we show that an exactly closed gap, corresponding to a true level crossing, need not constitute a computational bottleneck. This insight again suggests that the spectral gap alone is an incomplete indicator of computational slowdown.

Let us consider the penalty-free mixer $H^{\text{feas}}_{\text{init}}$ in \cref{eq:feasible_subspace_mixer} and its problem Hamiltonian $H_{\text{p}}$. This mixer only allows transitions within the feasible subspace $\mathcal{H}_{\mathcal{F}}$, which is spanned by the feasible computational basis states corresponding to $\mathcal{F}$ in \cref{eq:feasible_set_binary}. The problem Hamiltonian $H_{\text{p}}$, which encodes a linear objective function $f(x)$ in \cref{eq:canonical_ip}, is diagonal in the computational basis. Therefore, neither Hamiltonian supports transitions outside the feasible subspace. In other words, the feasible set induces a decomposition of the Hilbert space as $\mathcal{H}=\mathcal{H}_{\mathcal{F}} \, \oplus \, \mathcal{H}_{\mathcal{Q}}$ with $\mathcal{H}_{\mathcal{F}}\perp\mathcal{H}_{\mathcal{Q}}$, and the total Hamiltonian is block diagonal with respect to this decomposition,
\begin{equation*}
    P_{\mathcal{F}}H(t)P_{\mathcal{Q}} = P_{\mathcal{Q}}H(t)P_{\mathcal{F}} = 0,
\end{equation*}
and hence,
\begin{equation*}
    H(t) =
     \begin{pmatrix}
        P_{\mathcal{F}}H(t)P_{\mathcal{F}} && 0\\
        0 && P_{\mathcal{Q}}H(t)P_{\mathcal{Q}}
    \end{pmatrix}.
\end{equation*}
Thus, the two subspaces are invariant under the dynamics,
\begin{equation}\label{eq:subspace_invariance}
    H(t)\mathcal{H}_{\mathcal{F}}\subseteq\mathcal{H}_{\mathcal{F}},\qquad H(t)\mathcal{H}_{\mathcal{Q}}\subseteq\mathcal{H}_{\mathcal{Q}}.
\end{equation}
This leads to an interesting proposition.
\begin{proposition}[Level Crossings]\label{prop:level_crossing}
    Suppose the total Hamiltonian
    \begin{equation}
        H(t) = (1-s(t))H^{\mathrm{feas}}_{\mathrm{init}} + s(t)H_{\mathrm{p}}
    \end{equation}
    encodes the 0--1 IP in \cref{eq:canonical_ip}. If the IP has at least one infeasible solution $z \notin \mathcal{F}$ such that $f(z) < f(x)$ for all $x \in \{0,1\}^n$ and $x\neq z$, then the total Hamiltonian $H(t)$ has level crossings and zero minimum spectral gap.
\end{proposition}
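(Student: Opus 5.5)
The proof rests on the block-diagonal structure \cref{eq:subspace_invariance}. We exhibit one eigenvalue branch living in $\mathcal H_{\mathcal Q}$ and one in $\mathcal H_{\mathcal F}$ that must cross, and then compare each with the global ground energy.

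\textbf{Step 1: an infeasible branch.} Fix the infeasible $z$ from the hypothesis. By \cref{eq:mixer_kernel}, $H^{\mathrm{feas}}_{\mathrm{init}}\ket z=0$, and since $H_{\mathrm p}$ is diagonal we get
\begin{equation*}
H(t)\ket z=s(t)f(z)\ket z .
\end{equation*}
So $E_z(t)=s(t)f(z)$ is an exact instantaneous eigenvalue for all $t$, with the fixed eigenvector $\ket z\in\mathcal H_{\mathcal Q}$. This is the same computation as \cref{eq:z_is_eigenvec}.

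\textbf{Step 2: a feasible branch.} Let $E_{\mathcal F}(t)$ be the lowest eigenvalue of the restricted block $P_{\mathcal F}H(t)P_{\mathcal F}$. This is a continuous function of $t$, since the eigenvalues of a continuous Hermitian family are continuous, and its eigenvector lies in $\mathcal H_{\mathcal F}$.
\begin{itemize}
\item At $s=0$: by the defining assumption on $M$, the value $0$ is not the lowest eigenvalue of the mixer, so $E_{\mathcal F}(0)<0=E_z(0)$. The lowest eigenvalue is negative because $0$ is excluded from being lowest, and any zero eigenvalues coming from $\mathcal H_{\mathcal F}$ only help.
\item At $s=1$: the block reduces to the diagonal restriction of $H_{\mathrm p}$ to $\mathcal F$, so $E_{\mathcal F}(T)=f(x^\star)$. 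The hypothesis $f(z)<f(x)$ for all $x\ne z$ applies to $x^\star$ in particular, giving $E_z(T)=f(z)<E_{\mathcal F}(T)$.
\end{itemize}
The difference $E_z-E_{\mathcal F}$ is continuous and changes sign, so by the intermediate value theorem there is a $t^\ast$ with $E_z(t^\ast)=E_{\mathcal F}(t^\ast)$. At that point $H(t^\ast)$ has a degenerate eigenvalue, with one eigenvector in each orthogonal invariant subspace. Block-diagonality prevents any mixing that could repel the two levels, so this is a genuine level crossing rather than an avoided one.

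\textbf{Step 3: zero minimum gap.} The main obstacle is showing that the crossing involves the \emph{ground} level, so that the minimum gap $E_2-E_1$ actually vanishes. I would argue as follows.
\begin{itemize}
\item At $s=0$ the global ground energy is $E_{\mathcal F}(0)<0$. Every infeasible level $E_{z'}(0)=0$, so none of them lies below it.
\item At $s=1$ the global ground state is $\ket z$. The hypothesis is exactly that $z$ is the unique global minimizer of $f$.
\item The ground state is therefore feasible near $s=0$ and equal to $\ket z$ near $s=1$. Eigenvectors of a block-diagonal matrix can be chosen in one block, so some switch of the ground level from $\mathcal H_{\mathcal F}$ to $\mathcal H_{\mathcal Q}$ must occur.
\item Let $t_1$ be the infimum of times at which the lowest $\mathcal Q$-block eigenvalue $E^{\mathcal Q}_{\min}(t)$ is $\le E_{\mathcal F}(t)$. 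By continuity of both block minima, $E^{\mathcal Q}_{\min}(t_1)=E_{\mathcal F}(t_1)$, and this common value is the global minimum of the spectrum at $t_1$.
\end{itemize}
Hence at $t_1$ the lowest eigenvalue of $H(t_1)$ is at least doubly degenerate, and $\min_t(E_2-E_1)=0$.

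One technical point must be handled carefully. The $\mathcal Q$ block is not purely diagonal in general, since $H^{\mathrm{feas}}_{\mathrm{init}}$ may have zero rows or columns but need not vanish on $\mathcal Q$ as written. Here I use \cref{eq:mixer_kernel}: the $\mathcal Q$ block is exactly $s(t)H_{\mathrm p}$ restricted to $\mathcal Q$, so $E^{\mathcal Q}_{\min}(t)=s(t)\min_{z'\notin\mathcal F}f(z')$, which is explicit and continuous.
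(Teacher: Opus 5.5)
Your proof is correct and follows the paper's argument: $\ket{z}$ is a fixed eigenvector of $H(t)$ with eigenvalue $s(t)f(z)$. It starts above the negative ground energy of $H^{\mathrm{feas}}_{\mathrm{init}}$ and ends as the ground state of $H_{\mathrm p}$, so by the invariance of $\mathcal H_{\mathcal F}$ and $\mathcal H_{\mathcal Q}$ it must cross the lowest feasible level exactly; the paper obtains this as the $\epsilon=0$ case of \Cref{theorem:leakage_fast_jumps}. Your Step 3 adds something useful: it shows the $\mathcal Q$-block minimum is $s(t)f(z)$, so the crossing happens at the bottom of the spectrum, which makes explicit the ``zero minimum gap'' conclusion that the paper's proof states only loosely.
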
 
\begin{proof}
    This is a special case of \Cref{theorem:leakage_fast_jumps} with $\epsilon = 0$, which is covered in the proof of \Cref{theorem:leakage_fast_jumps}.
    Let $z \notin \mathcal{F}$ be an infeasible solution such $f(z) < f(x)$ for all $x \in \{0,1\}^n$ and $x \neq z$. It follows that $\ket{z}$ is the ground state of the problem Hamiltonian $H_{\mathrm{p}}$,
    \begin{equation}
        H_{\mathrm{p}}\ket{z} = f(z) \ket{z}.
    \end{equation}
    Furthermore, by the construction of the mixer $H^{\mathrm{feas}}_{\mathrm{init}}$ in \cref{eq:feasible_subspace_mixer,eq:mixer_kernel}, $\ket{z}$ is an excited eigenstate of the mixer,
    $$H^{\mathrm{feas}}_{\mathrm{init}}\ket{z} = 0\ket{z}.$$
    It follows that $\ket{z}$ is an instantaneous eigenvector of the total Hamiltonian with the eigenvalue $E_z(t) = s(t)f(z)$,
    \begin{align*}
        H(t)\ket{z} = s(t)H_{\mathrm{p}}\ket{z} = s(t)f(z)\ket{z}.
    \end{align*}
    This is consistent with the subspace invariance given by \cref{eq:subspace_invariance}, that is, $H(t) \mathcal{H}_{\mathcal{Q}} \subseteq \mathcal{H}_{\mathcal{Q}}$.
    Finally, we note that $\ket{z}$ is an excited eigenstate of $H^{\mathrm{feas}}_{\mathrm{init}}$ and it is also the ground eigenstate of $H_{\mathrm{p}}$.
    Therefore, it must be true that the instantaneous eigenvalue $E_z(t) = s(t)f(z)$ crosses at least one excited energy level as it descends to the ground energy $s(T)f(z) = f(z)$. See \Cref{fig:spectral_gaps_knapsack} (a) for the demonstration.
\end{proof}

\noindent \textbf{Discussion.} The gap closing in Proposition~\ref{prop:level_crossing} is not a computational bottleneck. It is a crossing between eigenvalues whose eigenvectors belong to orthogonal and invariant subspaces. Thus, a state initialized in $\mathcal H_{\mathcal F}$ cannot transition into $\mathcal H_{\mathcal Q}$ under the evolution. Therefore, the relevant spectral gap is the gap of the restricted Hamiltonian
\begin{equation*}
    H_{FF}(t)=P_{\mathcal F}H(t)P_{\mathcal F},
\end{equation*}
rather than the global gap of $H(t)$. Hence, a global gap closing caused by a crossing between an eigenvalue associated with $\mathcal H_{\mathcal Q}$ and an eigenvalue associated with $\mathcal H_{\mathcal F}$ does not by itself imply a computational bottleneck.

\section{Penalty-Free and Penalty-Based Methods as Spectral Duals}\label{sec:spectral_duality}
In the previous sections, we encountered a rather paradoxical dynamics: the system starts in the ground state of the initial Hamiltonian but terminates in an excited eigenstate of the problem Hamiltonian, corresponding to the optimal solution of a 0--1 IP. As we have seen, in penalty-free methods, the state ascends through the energy levels until it reaches the level associated with the optimal objective value; e.g., see \Cref{fig:spectral_gaps_knapsack}. This is the inherent behaviour of penalty-free methods because many infeasible solutions $z$ tend to have lower objective function values than the feasible ones:
$$f(z) \leq f(x), \ \text{ for all } \ x \in \mathcal{F}$$
Consequently, the ground eigenstate of the problem Hamiltonian corresponds to the infeasible $\ket{z}$ and an excited eigenstate corresponds to the optimal solution $\ket{x^{\star}}$.

\begin{figure*}[t]
    \centering
    \includegraphics[scale=.5]{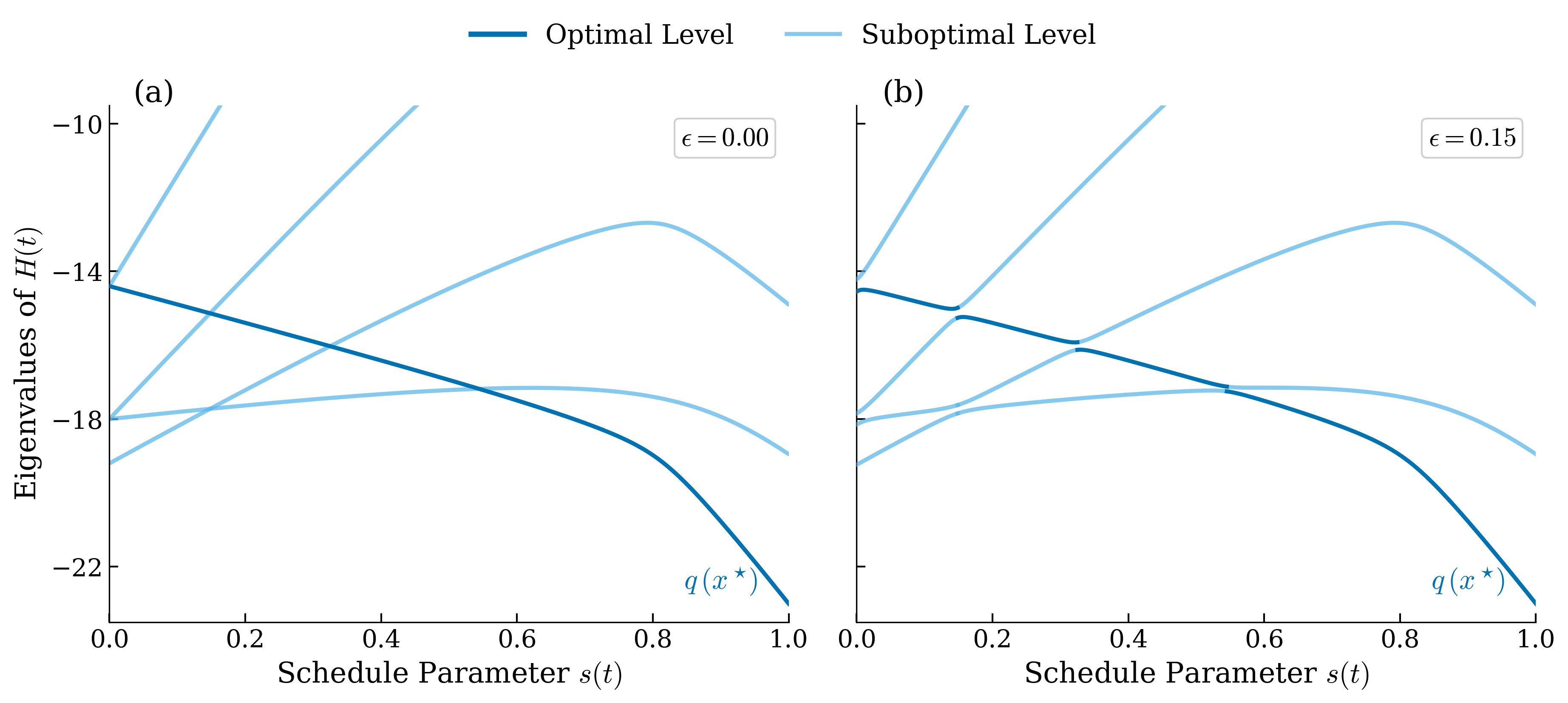}
    \caption{Relevant eigenvalue trajectories of the total Hamiltonian $H(t)$ for a penalty-based encoding of a 0--1 IP [see, \Cref{ap:x3c}]. The blue eigenvalue trajectory starts from an excited energy level of the initial Hamiltonian at $s(t)=0$ and terminates at the ground energy $q(x^\star)$ of the problem Hamiltonian at $s(t)=1$. Hence, it is the desirable evolution trajectory for finding the optimal solution. The light-blue curves are suboptimal levels. (a) For $\epsilon=0$, the invariant subspaces are decoupled, so the blue trajectory crosses the suboptimal levels exactly and reaches $q(x^\star)$ despite zero global gaps. (b) For $\epsilon=0.15$, weak coupling turns these exact crossings into narrow avoided crossings; in the fast-jump regime, the state jumps across them and again reaches $q(x^\star)$. This is the spectral dual of \Cref{fig:spectral_gaps_knapsack}; i.e., the system's energy descends to the ground energy of $H_{\mathrm{p}}$, rather than ascending as in the previous case.}
\label{fig:spectral_gaps_exact_cover}
\end{figure*}

Here, we will show that penalty-based methods exhibit the opposite spectral behaviour. In a general sense, they are spectral duals of penalty-free methods: instead of ascending toward an excited eigenstate of the problem Hamiltonian, the state descends toward its ground energy level. In the traditional penalty-based setting, this descent is absent because the system is initialized in the ground state of the initial Hamiltonian and remains at the ground energy level throughout the evolution. More generally, however, one need not initialize in the ground state. The system may start in a suitably chosen excited eigenstate and then descend through the energy levels until it reaches the ground energy level of the problem Hamiltonian. The descent through energy levels is an inherent behaviour of the penalty-based methods. Because for all $x$, relevant slack variables $w$, and large penalty $\lambda$, we have
$$q(x^{\star}, w^{\star}) \leq q(x, w)$$
Hence, the ground eigenstate of the problem Hamiltonian is always an optimal solution. Thus, to reach it, we must either start at the ground energy level or descend to it.

Although starting from an excited state of the mixer may appear counterintuitive, it is both well motivated and more general. It is well-motivated because problem-specific information or constraints may identify an invariant subspace that contains, or is expected to contain, the optimal solution. Initializing in a mixer eigenstate supported on this invariant subspace may significantly reduce the search space.  It is more general because the initialization need not be tied to the ground eigenstate of the initial Hamiltonian, and the evolution need not be purely adiabatic or diabatic.

\subsection{Spectral Duality Example}
Let us consider an example in which the system starts in an excited eigenstate of a mixer and evolves toward the ground eigenstate of the problem Hamiltonian $H_{\text{p}}$, see \Cref{fig:spectral_gaps_exact_cover}. Suppose that an $n$-variable objective function $q(x)$ is encoded into a diagonal problem Hamiltonian $H_{\text{p}}$ acting on the $2^n$-dimensional Hilbert space $\mathcal{H}$. Assume that prior information indicates that the optimal solution $x^\star$ has Hamming weight $k$. Or alternatively assume that there is an implicit constraint $\sum_{i=1}^n x_i = k$. This suggests choosing a mixer that preserves Hamming weight, so that the evolution is confined to the part of the Hilbert space containing $\ket{x^\star}$. We therefore choose the XY mixer \cite{rieffel2020xy,hadfield2019quantum},
\begin{equation}\label{eq:xy_mixer}
    H_{\text{init}}^{\text{XY}} = -\sum_{(i,j)\in S} \left( X_iX_j + Y_iY_j \right),
\end{equation}
where the set $S$ specifies the pairs of interacting qubits. This mixer preserves Hamming weight. Since $H_{\text{p}}$ is diagonal in the computational basis, the total Hamiltonian preserves the decomposition
\begin{equation*}
    \mathcal{H}=\mathcal{H}_{w_0} \oplus \mathcal{H}_{w_1} \oplus \cdots \oplus \mathcal{H}_{w_n},
\end{equation*}
where $\mathcal{H}_{w_j}$ is spanned by computational basis states of Hamming weight $j$. Hence, if $x^\star$ has Hamming weight $k$, the search may be restricted to the invariant subspace $\mathcal{H}_{w_k}$. For fixed $k$ and arbitrarily large $n$, this subspace has dimension
\begin{equation*}
    \dim \mathcal{H}_{w_k} = \binom{n}{k} = O(n^k),
\end{equation*}
Thus, for a fixed $k$, the choice of mixer reduces the effective search space from exponential to polynomial size in $n$.

\begin{figure*}[t]
    \centering
    \includegraphics[scale=.25]{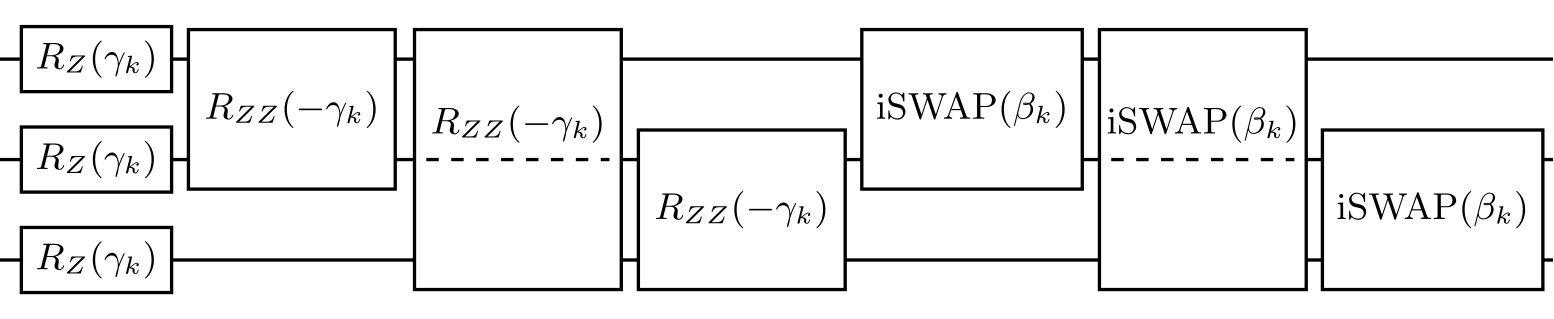}
    \caption{A $k$th layer of a 3-qubit quantum circuit that approximates the evolution generated by the total Hamiltonian \mbox{$H(t) = (1-s(t)) H^{\mathrm{XY}}_{\mathrm{init}} + s(t) H_{\mathrm{p}}$}, where $H_{\mathrm{p}}$ encodes a penalty-based objective function $q(x)$. $R_Z(\theta)$ and $R_{ZZ}(\theta)$ are one- and two-qubit phase gates parametrized by angle $\theta$. These gates are due to $Z_i$ and $Z_i Z_j$ interactions in $H_{\mathrm{p}}$. The $\mathrm{iSWAP}(\theta)$ is a parametrized two-qubit gate. This gate is due to the $X_iX_j+Y_iY_j$ interactions in $H^{\mathrm{XY}}_{\mathrm{init}}$.}
\label{fig:quantum_circuit}
\end{figure*}

It remains to choose the initial state. Since $\ket{x^\star}$ is in $\mathcal{H}_{w_k}$, we initialize the system in an eigenstate $\ket{\psi_{\text{init}}}$ of $H_{\text{init}}^{\text{XY}}$ supported on the same invariant subspace. In particular, we choose $\ket{\psi_{\text{init}}}$ to be the ground eigenstate of the restricted mixer:
\begin{equation*}
    H_{\text{init}}^{\text{XY}}\big|_{\mathcal{H}_{w_k}}
\end{equation*}
This initial eigenstate is typically an excited eigenstate of the full mixer. Nevertheless, within the invariant subspace $\mathcal{H}_{w_k}$, the evolution starts from the ground eigenstate of the restricted mixer and ends at the ground eigenstate $\ket{x^\star}$ of the problem Hamiltonian.

If $S$ is the edge set of the complete graph, then the Dicke state
\begin{equation*}
    \ket{D_k^n}=\frac{1}{\sqrt{\binom{n}{k}}} \sum_{x:\, |x|=k} \ket{x}
\end{equation*}
is the ground eigenstate of the restricted mixer on $\mathcal{H}_{w_k}$ which also admits efficient preparation \cite{bartschi2019deterministic,bartschi2022short}.

One of the many quantum algorithms that realizes this evolution can be obtained by discretizing the Schr\"odinger evolution generated by $H(t)$ and applying the first-order Lie-Trotter-Suzuki approximation \cite{suzuki1990fractal,ostmeyer2023optimised}. The result is an approximating quantum circuit depicted in \Cref{fig:quantum_circuit}.

\textbf{Discussion.}
In \Cref{prop:level_crossing}, we showed that the penalty-free approach produces exact level crossings in the spectrum of the total Hamiltonian $H(t)$. These crossings arise because the initial ground energy level ascends to the final excited energy level corresponding to $f(x^\star)$. The example above exhibits the spectral dual behaviour: an initial excited energy level descends to the final ground energy level corresponding to $q(x^\star,w^\star)$. During this descent, the instantaneous energy level must either cross intermediate levels belonging to invariant orthogonal subspaces or undergo avoided crossings, as shown in \Cref{fig:spectral_gaps_exact_cover} (a). Thus, \Cref{prop:level_crossing} extends directly to the penalty-based setting, with the direction of motion through the spectrum reversed.

The same duality also extends \Cref{theorem:leakage_fast_jumps} to the penalty-based method when an $\epsilon$-weak coupling is introduced between otherwise invariant subspaces, as shown in \Cref{fig:spectral_gaps_exact_cover} (b). In this case, the exact crossings become $\epsilon$-controlled avoided level crossings . Fast evolution can then jump over these avoided crossings, allowing the state to remain close to its original configuration rather than slowly following the instantaneous eigenvector exchanged at each crossing. Consequently, in both penalty-free and penalty-based methods, small couplings can be used to reduce entanglement restructuring while still reaching the energy level corresponding to the optimal solution.

\section{State Transitions}\label{sec:state_transitions}
We now show that the penalty-free and penalty-based methods induce substantially different state transitions, resulting in distinct wave propagation dynamics. Although both methods enforce feasibility, they weight allowable transitions between feasible states according to different criteria. Identifying these criteria provides a systematic basis for the design and analysis of quantum optimization algorithms.

We begin by classifying the relevant transitions. The mixer Hamiltonian
$H_{\text{init}}$ defines a transition graph whose vertices are computational
basis states and whose edges are the transitions generated by the mixer. Let
$\mathcal{F}$ be the feasible solution set defined in
\cref{eq:feasible_set_binary}, and let $x,y \in \mathcal{F}$.

\begin{definition}[First-order state transition]\normalfont
If $x$ and $y$ are adjacent in the transition graph induced by
$H_{\text{init}}$, i.e., if $(x,y)$ is an edge, then a
\textit{first-order state transition} is a direct transition  from $\ket{x}$ to $\ket{y}$:
\begin{equation*}
    \text{first-order state transition:} \ \ \ket{x} \rightarrow \ket{y}
\end{equation*}
\hfill $\diamond$
\end{definition}

\begin{definition}[Second-order state transition]\normalfont
If $x$ and $y$ are not adjacent but are connected by an infeasible intermediate
state $z\notin\mathcal{F}$, i.e., if $(x,z)$ and $(z,y)$ are edges, then a
\textit{second-order state transition} is an indirect transition mediated by
$\ket{z}$:
\begin{equation*}
    \text{second-order state transition:} \ \ \ket{x} \rightarrow \ket{z} \rightarrow \ket{y}
\end{equation*}
\hfill $\diamond$
\end{definition}

We show that penalty-based formulations induce second-order transitions within the feasible subspace. Specifically, two feasible states $\ket{x}$ and $\ket{y}$ that are not directly connected by the mixer become coupled through a virtual transition via an infeasible intermediate state $\ket{z}$. The strength of the second-order transition is scaled by the inverse of the energy penalty assigned to $\ket{z}$. Namely, the weighting factor is
\begin{equation}\label{eq:lambda_weight_factor}
    W_z^{\lambda} = \frac{1}{\lambda g(z,w)},
\end{equation}
where $g(z,w)$ is a penalty function in \cref{eq:penalty_function} and \mbox{$\lambda \gg 1$} is a large penalty multiplier.  If the intermediate infeasible state $\ket{z}$ strongly violates the constraint, then the second-order transitions through it are suppressed. On the other hand, if $\ket{z}$ mildly violates the constraints (almost on the boundary of being feasible), then transitions through it are enhanced.

Penalty-free formulations with the relaxed mixer $H_{\text{init}}^{\epsilon}$ in \cref{eq:relaxed_mixer} can also induce second-order state transitions through infeasible intermediate states. However, these transitions are not weighted by a constraint penalty. Instead, for an infeasible mediator $\ket{z}$, the second-order transition is weighted by 
\begin{equation}\label{eq:epsilon_weight_factor}
    W_z^{\epsilon}(t) = \frac{\epsilon^2}{f(z) - E(t)/s(t)},
\end{equation}
where $E(t)$ is the instantaneous energy of the total Hamiltonian $H(t)$ in \cref{eq:total_hamiltonian}, $s(t) > 0 $ is the schedule function, and positive $\epsilon \ll 1$. We can immediately see that second-order transitions are controlled by the objective function value $f(z)$ of the mediator $\ket{z}$ relative to the scaled instantaneous energy $E(t)$, rather than by an explicit penalty term. If the difference $f(z)-E(t)/s(t)$ is large, then the corresponding second-order transition is suppressed. If this difference is small, the transition is enhanced. Thus, the dominant second-order state transitions are those with $f(z) \approx E(t)/s(t)$. We note that for $t = T$ and $E(T) = f(z)$, we have $s(T)=1$ and $f(z) - E(T) = 0$. In this case, the system has left the feasible subspace and terminated at the infeasible eigenstate $\ket{z}$. In such a case of failure, the result \cref{eq:epsilon_weight_factor} does not apply.

It is enlightening to further analyze the second-order state transitions of the penalty-free methods. In optimization terms, low-cost infeasible states (including highly infeasible ones) can mediate transitions between feasible configurations. This can enhance exploration by creating virtual shortcuts through highly infeasible regions, but it can also allow strongly infeasible low-cost states to influence the dynamics within the feasible subspace. To see this, let us suppose that transition mediator $\ket{z}$ strongly violates the constraint, but it has a favourable objective value $f(z)$ which is close to $E(t)/s(t)$. Then, second-order transitions to feasible states $\ket{y}$ mediated by $\ket{z}$ are enhanced. This is the opposite of the penalty-based methods, where second-order state transitions are suppressed when the mediator $\ket{z}$ is highly infeasible. Now, suppose $\ket{z}$ is mildly infeasible, but its objective function value $f(z)$ is severely suboptimal. Then the difference $f(z) - E(t)/s(t)$ is large. Hence, the transitions mediated by such $\ket{z}$ are suppressed. Again, this is the opposite of the penalty-based method, where if $\ket{z}$ is only mildly infeasible, the penalty is low, and hence second-order state transitions are enhanced.

\begin{proof}
We show how the weight factors in \cref{eq:lambda_weight_factor,eq:epsilon_weight_factor} are derived. The main idea is to derive the effective Hamiltonian governing first- and second-order state transitions within the feasible subspace. Then, show that all second-order state transitions are weighted by factors defined in \cref{eq:lambda_weight_factor} and \cref{eq:epsilon_weight_factor}. We commence with the eigenvalue equation
\begin{equation}
    H(t) \ket{\psi(t)} = E(t) \ket{\psi(t)},
\end{equation}
where $H(t)$ is a total Hamiltonian defined in \cref{eq:total_hamiltonian}. For now, we do not make any assumptions about the mixer $H_{\text{init}}$. Recall that $\mathcal{F}$ in \cref{eq:feasible_set_binary} is a set of feasible candidate solutions to the 0--1 IP in \cref{eq:canonical_ip}. Let $\mathcal{H}_{\mathcal{F}}$ denote the feasible Hilbert subspace, and $\mathcal{H}_{\mathcal{Q}}$ its orthogonal complement. Then, the total Hilbert space $\mathcal{H}$ is partitioned as $\mathcal{H} = \mathcal{H}_{\mathcal{F}} \oplus \mathcal{H}_{\mathcal{Q}}$. Hence, we rewrite the eigenvalue equation in a block form and, for notational clarity, suppress the time variable $t$:
\begin{align*}
    H_{FF}\ket{\psi_F} + H_{FQ}\ket{\psi_Q} = E \ket{\psi_F}\\
    H_{QF}\ket{\psi_F} + H_{QQ}\ket{\psi_Q} = E \ket{\psi_Q}
\end{align*}
Here, $\ket{\psi_F} \coloneq \Pi_{\mathcal{F}} \ket{\psi(t)}$ and $\ket{\psi_Q} \coloneq \Pi_{\mathcal{Q}} \ket{\psi(t)}$ with the projector  $\Pi_{\mathcal{F}}$ defined in \cref{eq:projector_F} and $\Pi_{\mathcal{Q}} = I - \Pi_{\mathcal{F}}$. In other words, $\ket{\psi_F}$ and $\ket{\psi_Q}$ are blocks of $\ket{\psi(t)}$ that are supported on $\mathcal{H}_{\mathcal{F}}$ and $\mathcal{H}_{\mathcal{Q}}$, respectively. Similarly, $H_{AB} \coloneq \Pi_{\mathcal{A}} H(t) \Pi_{\mathcal{B}}$.

Then, using Schur complement and Taylor expansions in $\eta = \epsilon^2$ or $\eta = 1/\lambda$, we obtain the effective Hamiltonians governing transitions in the feasible subspace for penalty-based and penalty-free methods. The  effective Hamiltonian is:
\begin{equation}
    H_{\text{eff}}^{\eta} = H_{FF} - \frac{(1-s)^2}{ s}H^{\text{init}}_{FQ} \ \eta \, W  \ H^{\text{init}}_{QF} + O\left( \eta^2 \right).
\end{equation}
In the above, $s \equiv s(t)$, $H^{\text{init}}_{AB} := \Pi_{\mathcal{A}} H_{\text{init}} \Pi_{\mathcal{B}}$. The key operator that distinguishes the penalty-based and penalty-free methods is $\eta W$.

For penalty-based methods, we work in the large-penalty regime $\lambda \gg 1$. In this regime, the dominant energy scale of an infeasible mediator $\ket{z}$ is its penalty energy $\lambda g(z,w)$. Therefore, to leading order in $1/\lambda$, the transition mediated by $\ket{z}$ is weighted by
\begin{equation}
W_z^{\lambda} = \frac{1}{\lambda g(z,w)}.
\end{equation}
\begin{equation}
    \eta = \frac{1}{\lambda} \text{ and } W = \hat g^{-1},
\end{equation}
where $\hat g$ is a Hamiltonian encoding the penalty function $g(x,w)$ defined in \cref{eq:penalty_function}. We note that $\hat g$ is diagonal in the computational basis.
For the penalty-free methods, we let the mixer be $H_{\text{init}}^{\epsilon}$ defined in \cref{eq:relaxed_mixer}. Then, it follows that
\begin{equation}
    \eta = \epsilon^2 \text{ and } W = \left (  \hat f - \frac{E}{s} \right )^{-1},
\end{equation}
where, $\hat f$ is the Hamiltonian that encodes the objective function value $f(x) = c^{\mathsf{T}}x$ and $E \equiv E(t)$ is the instantaneous energy of the system. We note that $\hat f$ is diagonal in the computational basis.

We now examine how the effective Hamiltonian generates second-order state transitions. Since $H_{FF}$ is responsible for the first-order state transitions, we look at the second term of the effective Hamiltonian 
$$H^{\text{init}}_{FQ} \, \eta W \, H^{\text{init}}_{QF}.$$
For any feasible computational basis state $\ket{x} \in \mathcal{H}_{\mathcal{F}}$, the operator $H^{\text{init}}_{QF}$ creates a superposition of infeasible computational basis states $\ket{z}$ that are neighbours of $\ket{x}$. That is
\begin{equation*}
    H^{\text{init}}_{QF} \ket{x} = \sum_{\substack{z\in Q \\ z \sim x}} \alpha_z \ket{z}.
\end{equation*}
Then, the diagonal operator $\eta W$ weights each infeasible state $\ket{z}$. This yields
\begin{equation*}
    \eta W \, H^{\text{init}}_{QF} \ket{x} = \sum_{\substack{z\in Q \\ z \sim x}} \alpha_z W_z^{\eta } \ket{z},
\end{equation*}
where $W_z^{\eta}$ is given in \cref{eq:lambda_weight_factor} and \cref{eq:epsilon_weight_factor} for $\eta = 1/\lambda$ and $\eta = \epsilon^2$ respectively. Finally, we apply the operator $ H^{\text{init}}_{FQ}$, which completes a weighted second-order state transition to feasible states $\ket{y}$ that are neighbours of each $\ket{z}$. Hence, we have
\begin{equation*}
    H^{\text{init}}_{FQ} \, \eta W \, H^{\text{init}}_{QF} \ket{x} = \sum_{\substack{z\in Q \\ x \sim z}} \alpha_z W_z^{\eta } \sum_{\substack{y \in F \\ z \sim y}} \beta_y \ket{y}.
\end{equation*}
This proves how second-order state transitions are generated and weighted.
\end{proof}

\section{\label{sec:discussion} Outlook}
This work develops a general theory of quantum optimization for binary constrained problems. The central insight is that computational difficulty is governed by the restructuring of entanglement during evolution. Spectral gaps, Hamiltonian locality, and the amount of entanglement in the state are important diagnostics, but the dominant source of complexity is the creation, redistribution, and removal of entanglement required by the problem. In constrained combinatorial optimization, this restructuring is induced by the algebraic and combinatorial structure of the constraints.

This perspective refines the role of spectral gaps in quantum optimization. The minimum gap is often treated as the primary indicator of computational complexity. Here, the physical bottleneck is the entanglement restructuring required by the constrained problem. When the constraints force entanglement to be created, redistributed, or removed over a narrow interval of the evolution, the spectrum reflects this change through a narrowly avoided level crossing. Gap narrowing is therefore a spectral consequence of required entanglement restructuring. If the system remains on the same instantaneous energy level through the avoided crossing, it follows the corresponding eigenvector exchange and acquires the entanglement structure of the approaching level.

Narrowly avoided level crossings can also accelerate computation. As shown in \Cref{theorem:leakage_fast_jumps}, the constraints can create a sequence of narrowly avoided level crossings at which jumps are beneficial. At each crossing, the dynamical jump and the eigenvector swap occur together. The state moves to the approaching energy level while preserving its original eigenvector, thereby avoiding unnecessary entanglement restructuring. Thus, small gaps are not intrinsic obstructions. In the fast-jump regime analyzed here, they give a provable reduction in the required evolution time relative to adiabatic following through the weak avoided crossings.

Constraints should therefore be incorporated into the algorithmic dynamics rather than absorbed into generic penalty terms. Constraints determine the geometry of the feasible Hilbert space, the invariant subspaces of the Hamiltonian, the transition graph induced by the mixer, and the entanglement restructuring required during computation. Constraint-aware algorithms make these structures visible and controllable. Penalty-based reformulations can hide the algebraic and geometric information contained in the constraints and force unnecessary restructuring of the entanglement, thereby eliminating potential speedups.

This principle is closely aligned with classical mathematical optimization. Classical methods exploit constraints to reduce the effective search space. Linear programming, branch-and-bound, cutting-plane methods, and modern mixed-integer programming solvers use the geometry and combinatorial structure of constrained problems to isolate relevant candidates. In LP-based methods for solving 0--1 IPs, this refinement can be expressed schematically as
\begin{equation}\label{eq:nested_sets_outlook}
\mathcal{P} \supseteq \mathcal{P}_k \supseteq \mathcal{P}_{k+1} \supseteq \mathcal{F}.
\end{equation}
The relaxed feasible polyhedron is iteratively tightened while preserving the integer feasible set. The method exploits structure rather than blindly searching the original space.

A quantum analog is to refine the quantum routine itself. In the language developed here, this means refining the transition graph, the invariant subspaces, and the allowed restructuring of entanglement. One may begin with a restrictive mixer that enforces only part of the constraint system $Ax \geq b$ and induces minimal restructuring. Such a mixer may be fast, but it may also fragment the feasible subspace and make the optimum dynamically inaccessible. The algorithm can then be augmented by adding controlled transitions between previously invariant subspaces, or by allowing weak leakage through carefully chosen infeasible mediators. In this way, the quantum routine gains the flexibility needed to reach the optimum while avoiding unnecessary entanglement restructuring. Analogously to \cref{eq:nested_sets_outlook}, this gives a progressive sequence of refinements
\begin{equation}
    H_{\mathrm{init}}^{\mathrm{feas},0} \rightarrow H_{\mathrm{init}}^{\mathrm{feas},1} \rightarrow \dots \rightarrow H_{\mathrm{init}}^{\mathrm{feas},\ell},
\end{equation}
with corresponding transition graphs
\begin{equation}
    G^0 \subset G^1  \subset \dots  \subset G^\ell.
\end{equation}
The refinements above require separate runs with progressively more expressive mixers or feasibility-inducing measurements. By contrast, emerging feedback-based algorithms use records from intermediate or continuous measurements to construct circuits at runtime \cite{lemelin2025mid,magann2022feedback,gabbassov2026stochastic}.

This work does not focus on implementation details or hardware platforms. Its central question is structural: what entanglement restructuring is required by the constrained problem, and how can the algorithm reduce it? Any implementation, whether analog, digital, variational, or fault-tolerant, must ultimately address this question.

The long-term implication is that quantum optimization should become structurally aware. Classical optimization turns constraints from modelling obstacles into sources of efficient algorithmic search. Quantum optimization should do the same. The results presented here suggest that useful runtime advantages require algorithms that preserve, exploit, and progressively refine the algebraic and geometric properties of constrained problems. Entanglement restructuring provides the physical lens through which this refinement can be understood, quantified, and controlled.

\bf Acknowledgements \rm
EG acknowledges support through a grant from the National Research Council of Canada (NRC) and a Canada Graduate Scholarship from the National Science and Engineering Council of Canada (NSERC).

\appendix

\section{History of Linear Programming}\label{ap:history}
The developed understanding will serve as an important stepping stone toward its quantum analog. The field of modern optimization was born in the 1930s out of the need to manage economic resources and logistics. The Soviet mathematician and economist Leonid Kantorovich developed optimization problems with linear constraints for resource allocation under material shortages \cite{kantorovich1960mathematical}. His work remained largely unknown due to soviet secrecy and ideological reasons. However, in 1975, Kantorovich was awarded the Nobel Prize in economics. The field of optimization was further catalyzed by the Second World War, which required unprecedented logistics and resource management. It was George Dantzig who formulated Linear Programming in its modern standard form and invented the simplex algorithm, also known as ``the algorithm of the century" due to its extraordinary practical impact, theoretical depth and long-term economic and technological consequences \cite{bixby2012brief}. The field of linear programming experienced another rapid development during the era of the first electronic and then digital computers. Currently, all state-of-the-art linear programming solvers descend from the simplex method, directly or indirectly.

\section{Optimization Problems Used in the Main Text}
% \section{Benchmark Combinatorial Problems}
\label{app:benchmark_problems}

In this appendix, we specify the combinatorial optimization problems used to generate \Cref{fig:feas_mixer_max_indep_set,fig:spectral_gaps_knapsack}, and \Cref{fig:spectral_gaps_exact_cover}. These consist of an instance of the Maximum Independent Set Problem \cite[Ch.~4]{AvisHertzMarcotte2005} for \Cref{fig:feas_mixer_max_indep_set}, the 0--1 Knapsack Problem \cite[Ch.~17]{KorteVygen2012} for \Cref{fig:spectral_gaps_knapsack}, and the Exact Cover by 3-sets Problem (X3C) \cite[Ch.~7]{Knuth2020} for \Cref{fig:spectral_gaps_exact_cover}.

%For this, we considered three representative constrained binary problems. Fig.~\ref{fig:feas_mixer_max_indep_set} serves as an example of a transition graph induced by a feasibility-preserving mixer for the Maximum Independent Set Problem. The 0--1 Knapsack Problem, shown in
%Fig.~\ref{fig:spectral_gaps_knapsack}, is an instance of the penalty-free formulation in which feasibility is enforced through the mixer. Fig.\ref{fig:spectral_gaps_exact_cover} illustrates the Exact Cover by 3-sets Problem (X3C), which we use as an example of a penalty-based formulation together with a Hamming-weight preserving $XY$ mixer. 

\subsection{Maximum Independent Set Problem}\label{ap:maximum_independent_set}
Let $G=(V,E)$ be an undirected graph with vertices \mbox{$V=\{1,\dots,n\}$} and edge set $E$. An independent set is a subset of vertices that contains no adjacent pair. In other words, if two vertices $i$ and $j$ are connected by an edge \mbox{$(i,j)\in E$}, then they cannot both be selected.

We encode the choice of vertices by binary variables \mbox{$x_i\in\{0,1\}$}, where \mbox{$x_i=1$} means that vertex $i$ is selected and $x_i=0$ means that it is not selected. The condition that no two adjacent vertices are selected is then expressed by the constraints
\begin{equation}
    x_i+x_j \leq 1, \qquad (i,j) \in E.
\end{equation}
Thus, the feasible set of the maximum independent set problem is
\begin{equation}
    \mathcal{F} =\left\{ x \in \{0,1\}^n : x_i+x_j \leq 1 \text{ for all } (i,j)\in E \right\}.
\end{equation}
Each feasible bit string $x \in \mathcal{F}$ represents an independent set, namely the set of selected vertices
\begin{equation}
S(x)=\{ i \in V : x_i=1 \}.
\end{equation}

The maximum independent set problem asks for a feasible bit string that selects the largest number of vertices. 
\begin{equation}\label{eq:mis_ip}
    \begin{aligned}
    \text{max } \quad & \sum_{i=1}^n x_i \\
    \text{subject to } \quad & x_i+x_j \leq 1 , \text{ for all } (i,j)\in E, \\
    & x\in \{0,1\}^n .
\end{aligned}
\end{equation}
In the quantum setting, each bit string $x$ corresponds to a computational basis state $\ket{x}$. Feasible basis states $\ket{x}$ with $x\in\mathcal{F}$ represent independent sets of $G$. A feasibility-preserving mixer for the maximum independent set problem therefore induces transitions only between independent sets, so the dynamics never selects two adjacent vertices simultaneously. In \Cref{fig:feas_mixer_max_indep_set}, this construction is illustrated for the cycle graph $C_5$. The maximum independent sets of $C_5$ have size two, for example $10100$, $10010$, $01010$, $01001$, and $00101$, and these correspond to the optimal feasible computational basis states.

\subsection{0--1 Knapsack Problem}\label{ap:kp}
\Cref{fig:spectral_gaps_knapsack} uses an instance of the 0--1 Knapsack Problem to illustrate the penalty-free formulation. In this formulation, the constraints are implemented through the mixer by restricting evolution within a feasible subspace of the full binary search space. The Knapsack Problem provides a simple setting for this construction: its feasible set is specified by a single linear inequality, and its objective function is also linear.  It therefore serves as a simple constrained-binary problem for showing subspace invariance enforced through the mixer, and the resulting level crossings are discussed in the main text.

The 0--1 Knapsack Problem is a 0--1 integer program in which a subset of items is selected subject to a capacity constraint. An instance is specified by $n$ items with positive values $v_i>0$ and positive weights $w_i>0$, for $i=1,\ldots,n$, together with a positive capacity $C>0$. The goal is to choose a subset of items to put in the knapsack that maximizes the total value while keeping the total weight below the capacity $C$. Using binary decision variables $x_i\in\{0,1\}$, where $x_i=1$ means that item $i$ is selected, the problem can be written as:
\begin{equation}\label{eq:knapsack_ip}
    \begin{aligned}
        \text{max } \quad & \sum_{i=1}^n v_i x_i \\
        \text{subject to } \quad & \sum_{i=1}^n w_i x_i \le C\\
        & \ \ x \in \{0,1\}^n
    \end{aligned}
\end{equation}
Equivalently, in minimization form, we write
\begin{equation}\label{eq:knapsack_min_ip}
    \begin{aligned}
        \text{min } \quad & -\sum_{i=1}^n v_i x_i \\
        \text{subject to } \quad & \sum_{i=1}^n w_i x_i \le C\\
        & \ \ x \in \{0,1\}^n
    \end{aligned}
\end{equation}

The precise knapsack instance used for \Cref{fig:spectral_gaps_knapsack} has five items ($n=5$) with values \mbox{$v=(8,1,9,3,5)$}, weights \mbox{$w=(2,2,5,7,1)$} and capacity $C=8$.
The optimal feasible solution for this problem is \mbox{$x^\star=(1,0,1,0,1)$}, which has total weight $w^{\mathsf T} x^\star=8$ and objective function value $f(x^\star)=-22.$

Importantly, there are infeasible configurations with lower objective function values than the optimal solution. In particular, the three infeasible solutions $(1,1,1,0,1),(1,0,1,1,1)$ and $(1,1,1,1,1)$ have objective function values $-23,-25$ and $-26$ respectively, all of which lie below the optimal objective function value $f(x^\star)=-22$. Consequently, the eigenvalue branch that starts in the ground state of the mixer and terminates at the optimal solution must cross these three low-energy infeasible levels. For $\epsilon=0$, corresponding to the feasible-subspace mixer, the level crossings are exact [see \Cref{fig:spectral_gaps_knapsack} (a)]. For $\epsilon>0$, the relaxed mixer weakly couples the otherwise invariant subspaces, and the same crossings become narrowly avoided level crossings [see \Cref{fig:spectral_gaps_knapsack} (b)]. These infeasible low-energy states are the red levels shown below the optimal feasible level in \Cref{fig:spectral_gaps_knapsack}. In the figure, the spectrum of the problem Hamiltonian has been scaled and shifted vertically for clarity.

\subsection{Exact Cover by 3-Sets}\label{ap:x3c}
\Cref{fig:spectral_gaps_exact_cover} uses an instance of the Exact Cover by 3-Sets Problem (X3C) to illustrate that the penalty-based formulation is a spectral dual of the penalty-free case. The structure of X3C provides prior knowledge about the Hamming weight of the optimal solution. This prior information suggests choosing a mixer that preserves Hamming weight, which restricts the evolution to an invariant fixed-Hamming-weight subspace. This subspace invariance results in level crossings shown in \Cref{fig:spectral_gaps_exact_cover} (a).

The exact cover by 3-sets problem, or X3C, is a constrained binary problem. One is given a set $U$ with cardinality $|U| = N$ and a collection $\mathcal S$ of 3-element subsets $S_j$ such that
\begin{equation}
    \mathcal S=\{S_1,\ldots,S_n\},
    \qquad
    S_j\subseteq U,
    \qquad
    |S_j|=3.
\end{equation}
The task is to choose an exact cover of $U$, i.e, to choose $q$ of these $n$ subsets, whose union is $U$, with every element of $U$ covered exactly once. Equivalently, one seeks indices $i_1,\ldots,i_r$ such that
\begin{equation}
\bigcup_{\ell=1}^{r} S_{i_\ell} = U,
\ \
S_{i_\ell}\cap S_{i_m}=\varnothing
\ \
\text{ for all } \ell\neq m.
\end{equation}
Since each subset contains three elements and the selected subsets are pairwise disjoint, an exact cover must contain $q=N/3$ subsets.

Introducing binary variables $x_j\in\{0,1\}$, where $x_j=1$ means that subset $S_j$ is selected, the exact-cover problem can be written as:
% \begin{equation}
    % \sum_{j:\,a\in S_j}x_j=1,
    % \qquad
    % a\in U.
% \end{equation}
\begin{equation}
\label{eq:x3c_ip}
    \begin{aligned}
        \text{min}\quad & f(x)=0 \\
        \text{subject to}\quad
        & \sum_{j:\,a\in S_j} x_j = 1,
        \qquad \forall\, a\in U,\\
        & x\in\{0,1\}^n .
    \end{aligned}
\end{equation}
These constraints enforce that each element $a$ is covered by exactly one selected subset. In addition, the structure of the problem admits an implicit constraint that the optimal solution must have a Hamming weight of $|U|/3$ or equivalently, \mbox{$\sum_{i=1}^{n} x_i = |U|/3$}. The zero objective function value means that all feasible exact covers are optimal. In the instance considered here, the exact cover is unique.

The precise X3C instance used for \Cref{fig:spectral_gaps_exact_cover} has a universe \mbox{$U=\{1,\ldots,6\}$} and eight 3-element subsets,
\begin{equation}
\begin{aligned}
S_1&=\{1,2,3\}, S_2=\{4,5,6\},
S_3=\{1,2,4\}, S_4=\{1,2,5\}, \\
S_5&=\{1,3,6\},  S_6=\{1,4,6\}, 
S_7=\{2,4,6\}, S_8=\{3,4,5\}.\nonumber
\end{aligned}
\end{equation}
Since \mbox{$\lvert U\rvert=6$}, an exact cover must contain $q=6/3=2$ subsets and thus, the optimal solution has Hamming weight 2. The unique exact cover is \mbox{$\{S_1,S_2\}$}, and so the optimal solution for this problem is \mbox{$x^\star~=~(1,1,0,0,0,0,0,0)$}.

For the complete-graph $XY$ mixer used in the figure, the third eigenstate of the full mixer is degenerate. One of the degenerate eigenstates is the ground state of the mixer restricted to the Hamming-weight-2 sector. We choose this state as the initial state. It is an excited eigenstate of the full mixer, while $\ket{x^\star}$ is the ground-state of $H_{\mathrm p}$. The associated eigenvalue trajectory therefore descends from the third mixer level to the ground level of the problem Hamiltonian. This eigenvalue trajectory crosses three penalized suboptimal levels exactly, each of which terminates above the optimal solution at $s=1$ [see \Cref{fig:spectral_gaps_exact_cover} (a)]. The spectrum of the problem Hamiltonian has been scaled and shifted vertically for clarity.

To illustrate the possibility of avoided crossings, we weakly relax the strict Hamming-weight conservation of the $XY$ mixer. For \Cref{fig:spectral_gaps_exact_cover} (b), we add a transverse-field Hamiltonian perturbation projected onto the subspace spanned by the first five eigenstates of the $XY$ mixer shown in the figure. So, the total mixer looks like
\begin{equation}
    H_{\mathrm{init}} = H^{\mathrm{XY}}_{\mathrm{init}} +\epsilon \Pi_{\mathrm{low}} \left(-\sum_{i=1}^n X_i\right) \Pi_{\mathrm{low}},
\end{equation}
where \mbox{$\Pi_{\mathrm{low}} =\sum_{k=1}^{5} \Pi_k$} and $\Pi_k$ being the projector onto $k$th eigenstate of $H^{\mathrm{XY}}_{\mathrm{init}}$. This is a specific construction of a mixer that weakly couples otherwise distinct Hamming-weight sectors. The exact crossings of \Cref{fig:spectral_gaps_exact_cover} (a) are thereby converted into narrowly avoided crossings.

\section{Hamiltonian Encoding}\label{ap:hamiltonian_encoding}
This section describes how constrained binary optimization problems are encoded into quantum Hamiltonians.

To encode the problem for quantum optimization, each binary string $x\in\{0,1\}^n$ is associated with the computational basis state $\ket{x}$ in the $n$-qubit Hilbert space of dimension $2^n$. The objective function $f(x)$ [\Cref{eq:canonical_ip}] or $q(x)$ [\Cref{eq:unconstrained}] is then encoded as the spectrum of a problem Hamiltonian that is diagonal in the computational basis. Thus, we have
\begin{equation}
    H_{\mathrm{p}}\ket{x} = g(x)\ket{x}, \quad g(x) \in \{f(x), q(x)\}.
\end{equation}
Minimizing $g(x)$ corresponds to finding the lowest-energy state within the feasible subspace. 

To map the objective function $g(x)$ to a Hamiltonian, we perform the following transformation:
\begin{equation}
    x_i \rightarrow \frac{I-Z_i}{2}
\end{equation}
Here, $x_i$ is the $i$th variable of the vector $x$, and $Z_i$ is the Pauli-$Z$ operator acting on qubit $i$.

For example, mapping a linear function $f(x)$ to a Hamiltonian yields:
\begin{equation}
    f(x)=-\sum_{i=1}^n v_i x_i
    \ \longrightarrow \
    H_{p}
    = -\sum_{i=1}^n v_i\frac{I-Z_i}{2}
\end{equation}
The problem Hamiltonian can be rewritten as:
\begin{equation}
    H_{\mathrm{p}} =- \frac{1}{2}\sum_{i=1}^n v_i I + \frac12\sum_{i=1}^n v_i Z_i
\end{equation}
The term proportional to identity only shifts the spectra by a constant and thus is usually omitted. Therefore, $H_{\mathrm p}$ is diagonal in the computational basis, with computational basis states $\ket{x}$ as eigenvectors and objective values $f(x)$ as the corresponding eigenvalues.

We now discuss \Cref{fig:spectral_gaps_knapsack}, which illustrates the penalty-free approach applied to a knapsack instance. In this setting, the problem Hamiltonian $H_{\mathrm p}$ encodes only the linear objective function $f(x)$; the capacity constraint is not incorporated as an
energetic penalty. Instead, feasibility is enforced through the mixer. %Let $\Pi_{\mathcal F}$ denote the projector onto the feasible subspace and let $\Pi_{\mathcal Q}=I-\Pi_{\mathcal F}$ denote the projector onto its orthogonal complement. 
Starting from the transverse-field mixer, \mbox{$H_{\mathrm{init}}=-\sum_{i=1}^n X_i$}, we use the relaxed feasible-subspace mixer introduced in
\cref{eq:relaxed_mixer}, with the parameter $\epsilon$ controlling the strength of leakage between the feasible and infeasible subspaces. Explicitly, the mixer used for the knapsack instance is
\begin{equation}
    H^{\epsilon}_{\mathrm{init}}
    =
    -(\Pi_{\mathcal F}+\epsilon\Pi_{\mathcal Q})
    \left(\sum_{i=1}^n X_i\right)
    (\Pi_{\mathcal F}+\epsilon\Pi_{\mathcal Q}) .
\end{equation}
For $\epsilon=0$, the feasible and infeasible subspaces are exactly invariant; the mixer only connects feasible bit strings that differ by a single bit flip, and all infeasible computational basis states lie in the kernel. So, for $\epsilon=0$:
\begin{equation}
    H_{\mathrm{init}}^{\epsilon}\ket{z}=0,
    \qquad
    z \notin \mathcal F
\end{equation}
The total Hamiltonian used for the interpolation is
\begin{equation}
    H(t)
    =(1-s(t))H_{\mathrm{init}}^{\epsilon}+
    s(t)H_{\mathrm{p}}.
\end{equation}

\bibliography{refs}% Produces the bibliography via BibTeX.

\end{document}